\newcommand{\flld}[1]{\ld{\sqsupseteq}\ }
\newcommand{\frld}[1]{\ld{\sqsubseteq}\ }
\newcommand{\flrd}[1]{\rd{\sqsupseteq}\ }
\newcommand{\frrd}[1]{\rd{\sqsubseteq}\ }
\newcommand{\conjp}{\curlywedge}
\newcommand{\disjp}{\curlyvee}
\newcommand{\ld}[1]{\textcolor{MidnightBlue}{#1}}
\newcommand{\rid}[1]{\textcolor{RedOrange}{#1}}
\newcommand{\tid}[1]{\textcolor{OliveGreen}{#1}}
\newcommand{\rd}[1]{\textcolor{RedOrange}{#1'}}
\newcommand{\mt}[2]{
	\ifthenelse{\equal{#2}{}}{\langle #1,\rd{#1}\rangle}{\langle #1_{#2},\rd{#1_{#2}}\rangle}
}
\newcommand{\mtsq}[2]{
	\ifthenelse{\equal{#2}{}}{[#1,\rd{#1}]}{[#1_{#2},\rd{#1_{#2}}]}
}
\theoremstyle{plain}
\newtheorem{otheorem}{Theorem}
\newtheorem{oproposition}[otheorem]{Proposition}
\newtheorem{definition}{Definition}
\newtheorem{theorem}[otheorem]{Theorem}
\newtheorem{lemma}[otheorem]{Lemma}
\newtheorem{proposition}[otheorem]{Proposition}
\newtheorem{corollary}[otheorem]{Corollary}
\def\ps@pprintTitle{%
 \let\@oddhead\@empty
 \let\@evenhead\@empty
 \def\@oddfoot{}%
 \let\@evenfoot\@oddfoot}
\begin{document}

\begin{frontmatter}

\title{Secure Information Flow Connections} 

\author{Chandrika Bhardwaj\fnref{cbfootnote}}
\ead{chandrika.bhardwaj@gmail.com}
\author{Sanjiva Prasad\fnref{spfootnote}}
\ead{sanjiva@cse.iitd.ac.in}
\address{Indian Institute of Technology Delhi, INDIA}

%
%

\begin{abstract}

Denning's \textit{lattice model} provided secure information flow analyses with an intuitive mathematical foundation:  the lattice ordering determines permitted flows.
We examine how this framework may be extended to support the flow of information between autonomous organisations, each employing possibly quite different security lattices and information flow policies.  
We propose a connection framework that permits different organisations to exchange information while maintaining both security of information flow as well as their autonomy in formulating and maintaining security policies.
Our prescriptive framework is based on the rigorous mathematical framework of \textit{Lagois connections} proposed by Melton, together with a simple operational model for transferring object data between domains.
The merit of this formulation is that it is simple, minimal, adaptable and intuitive.
We show that our framework is semantically sound, by proving that the connections proposed preserve standard correctness notions such as non-interference.
We then illustrate how Lagois theory also provides a robust framework and methodology for negotiating and maintaining secure agreements on information flow between autonomous organisations, even when either or both organisations change their security lattices.
Composition and decomposition properties indicate support for a modular approach to secure flow frameworks in complex organisations.
We next show that this framework extends naturally and conservatively to the Decentralised Labels Model of Myers \textit{et al.} --- a Lagois connection between the hierarchies of principals in two organisations naturally induces a Lagois connection between the corresponding security label lattices, thus extending the security guarantees ensured by the decentralised model to encompass bidirectional inter-organisational flows. 

\begin{keyword}
Security Information Flow \sep
Lagois Connection \sep
Security Types \sep
Non-interference \sep
Composition and Decomposition \sep
Decentralised Label Model
\end{keyword}

\end{abstract}

\end{frontmatter}


\section{Introduction}

Denning's seminal work \cite{Denning76} established \textit{complete lattices} as the mathematical basis for a variety of analyses regarding \textit{secure information flow} (SIF), \textit{i.e.}, showing only authorised flows of information are possible.
An information flow model (IFM) 
$\langle N, P, SC, \sqcup, \sqsubseteq \rangle$ consists of storage objects $N$, which are assigned \textit{security classes} drawn from a (finite) complete lattice $SC$. 
The partial ordering $\sqsubseteq$ represents \textit{permitted flows} between classes. 
Reflexivity and transitivity capture intuitive aspects of information flow; antisymmetry helps avoid redundancies in the framework. 
The join operation $\sqcup$ succinctly captures the combination of information belonging to different security classes in arithmetic, logical and computational operations. 
$P$ is a set of processes, which are assigned security classes as ``clearances''.

The ensuing decades have seen a plethora of static and dynamic analysis techniques using that framework for programming languages \cite{sabelfeld2003language,myers1999jflow,Pottier2003-FlowCaml,liu2017fabric,roy2009laminar,Lourenco2015-ug}, operating systems \cite{Krohn2007-aa,zeldovich2006-osdi,cheng2012aeolus,efstathopoulos2005asbestos,roy2009laminar}, databases \cite{schultz2013ifdb}, and hardware architectures \cite{ferraiuolo2018hyperflow, zhang2015secVerilog-asplos}, etc. 
However, the question on how information can flow securely between independent organisations each with possible quite different policies has not been adequately addressed.
An answer needs to indicate how security classes from one lattice are mapped to those in another.
In doing so, we wish to abjure \textit{ad hoc} approaches to reclassifying information. 

We revisit and expand on our previous work \cite{BhardwajP2019}, where we proposed a simple and versatile mathematical framework involving \textit{monotone} increasing functions between lattices, which guaranteed secure and modular inter-organisational flows of information. 
Our work is based on the observation that large information systems are not monolithic, and are often constructed by autonomous organisations, each with its own security lattice, policies and mechanisms, negotiating agreements (MoUs) that promise respecting the security policies of others.
The MoUs typically mention only a small set of security classes, called \textit{transfer classes}, between which \textit{all} information exchange is managed. 
\textit{Modularity} and \textit{autonomy} are important: each organisation would wish to retain control over its own security policies and the ability to redefine them. 

We identified the elegant theory of Lagois Connections \cite{MELTON1994lagoisconnections} of order-preserving functions between security lattices as the appropriate framework, showing that they guarantee SIF, without the need for re-verifying the security of the application procedures in either of the domains, and confining the analysis to only the transfer classes involved in potential exchange of data.
This paper substantiates this by showing (i) how language-based techniques such as security type systems given by Volpano \textit{et al}. \cite{DBLP:journals/jcs/VolpanoIS96} can be adapted to a setting with different systems having distinct secure flow policies communicating data objects between each other; (ii) how results from Lagois theory provide a robust methodological framework for negotiating and maintaining secure agreements on information flow between autonomous organisations, even when either or both organisations change their security policies.
(ii) how decentralised secure flow systems such as that proposed by Myers \cite{myers-phd-tr-award} can be smoothly and conservatively extended to cross-organisational delegation and decentralisation.
Indeed, one way to view this work is that Lagois connections support a conservative extension of SIF analysis techniques developed on complete lattices to embrace structures involving lattices and morphisms between them. 

In \S\ref{sec:Bi-DirFlow}, we identify
intuitive requirements for secure bidirectional flow, present the definition of Lagois connections \cite{MELTON1994lagoisconnections}, and show that Lagois connections between the security lattices satisfy security and other requirements.
We include a brief account of how lattices and Lagois connections can be very succinctly represented to support efficient algorithms.

We present in \S\ref{sec:model} a minimal operational language consisting of a small set of \textit{atomic primitives} for effecting the transfer of data between domains.
The framework is simple and can be adapted for establishing secure connections between distributed systems at any level of abstraction (language, system, database, ...). 
We assume each domain uses \textit{atomic transactional operations} for object manipulation and intra-domain computation.
The primitives of our model include reliable and atomic communication between two systems, transferring object data in designated \textit{output} variables of one domain to designated \textit{input} variables of a specified security class in the other domain. 
To avoid interference between inter-domain communication and the computations within the domains, we assume that the sets of designated input and output variables are all mutually exclusive of one another, and also with the program/system variables used in the computations within each domain. 
Thus by design we avoid the usual suspects that cause interference and insecure transfer of data. 
The language should be seen as notation for execution sequences of atomic actions performed by concurrent  communicating systems.
So it  does not include conditional or iterative constructs, assuming these are absorbed within atomic intradomain transactions.
Thus, we do not have to concern ourselves with issues of implicit flows that arise due to branching structures (\textit{e.g.}, conditionals and  loops in programming language level security, pipeline mispredictions at the architectural level, etc.) 


The operational description of the language consists of the primitives
together with their execution rules (\S\ref{sec:operations}).
The correctness of our framework is demonstrated by expressing soundness (with respect to the operational semantics) of a type system (\S\ref{sec:typing}), stated in terms of the security lattices and their connecting functions.  
In particular, Theorem \ref{thm:soundness} shows the standard semantic property of \textit{non-interference} \cite{DBLP:conf/sp/GoguenM82a} in \textit{both domains} holds of all operational behaviours.
We adapt and \textit{extend} the approach taken by Volpano \textit{et al.} \cite{DBLP:journals/jcs/VolpanoIS96} to encompass systems coupled using the Lagois connection conditions, and (assuming atomicity of the data transfer operations) show that \textit{security is conserved}. 
Since our language is a minimal imperative model with atomic operations, and security types are exactly the security classes, our proof pares down the techniques of Volpano \textit{et al.} to the bare essentials. 

We revisit in \S\ref{sec:revisit-lagois} several results of  Melton \textit{et al.} \cite{MELTON1994lagoisconnections} that help us develop a methodical approach to finding and defining suitable MoUs for secure bidirectional flow, pictorially illustrating the systematic development on an example. 
In  \S\ref{sec:lagois-adjoint}, we use the fact that the morphisms of a Lagois connection uniquely determine each other to complete a MoU when given one side of a proposed mapping, by finding its \textit{Lagois adjoint}.
\S\ref{sec:lagois-abinitio} discusses a methodical approach to negotiating a viable secure MoU \textit{ab initio}. 
In \S\ref{sec:lagois-composition}, we use a compositionality result on Lagois connections to chain secure flows through a sequence of  organisations.
\S\ref{sec:maintaining-mou} tackles the issue of renegotiating and re-establishing a secure MoU when either party changes its security lattice. 
This is made possible by an decomposition result on Lagois connections discussed in \S\ref{sec:lagois-decomposition}.  
\S\ref{sec:changes} details the various techniques for rebuilding a secure MoU when changes are made to the security lattice, using appropriate results from 
\cite{MELTON1994lagoisconnections}

We then show that our Lagois connection framework readily accommodates decentralised flow control mechanisms within and across organisations, conservatively extending the model of Myers \cite{Myers1997-ss,myers-phd-tr-award}.
After a brief summary of the DLM framework in \S\ref{sec:DLM-summary}, we show in  Theorem \ref{Thm:LC-PH-LC-IFL} (\S\ref{sec:lagois-DLM-IFL}) that a Lagois connection between the principals hierarchies of two domains induces a Lagois connection between their corresponding lattices of labels.
A simple corollary, stated for static principals hierarchies that are connected by a Lagois connection, is that the declassification rule remains safe even with bidirectional information exchange between the domains.

In \S\ref{sec:related}, we briefly review some related work.
We conclude in \S\ref{sec:conclusion} with a discussion on our approach and directions for future work.

\textbf{Note}: This paper substantially expands on our earlier work \cite{BhardwajP2019}. 
Contents of that paper included here appear in the preliminaries in \S\ref{sec:Bi-DirFlow}, the technical results of \S\ref{sec:model} and the discussion on related work in \S\ref{sec:related}.

\section{Lagois Connections and All That}\label{sec:Bi-DirFlow}

\paragraph{Motivating Example} \ \ 
Consider a university $\rid{U}$ in which students study in semi-autonomously administered colleges (\textit{e.g.}, $\ld{C}$)   affiliated to the university.
A college has \textit{students}, \textit{faculty} members, \textit{deans}, all of whom work under a \textit{College Principal}.
The university has its own \textit{university professors}, a \textit{dean of colleges}, and a \textit{vice-chancellor} working under a \textit{chancellor}.
Students can take classes with both college faculty members and university professors.  
Assume that each institution has established its secure information flow mechanisms and policies, and information flows from $\ld{C}$ to $\rid{U}$, as shown by the blue arrows in Figure \ref{fig:WnD}.

\textit{Monotonicity} (or \textit{order-preservation}) of a function mapping security classes in $\ld{C}$ to $\rid{U}$ suffices for $\rid{U}$ to respect $\ld{C}$'s security policies.
However, as we showed earlier \cite{BhardwajP2019}, when flow is \textit{bidirectional}, composing order-preserving functions between $\ld{C}$ and $\rid{U}$ is insufficient.
Indeed, even a \textit{Galois connection} between the two domains does not ensure security (see Figure \ref{fig:Bi-WnD-yNotGC}).
While \textit{Galois insertions} ensure security, they require one of the functions to be \textit{surjective}, whereas in many situations, an organisation may not wish to expose its entire security class lattice.
Further, we do not wish information flowing between two domains to be reclassified in an overly restrictive manner that makes it inaccessible (we call these ``precision'' and ``convergence'' requirements).

\begin{figure}[t]
    \begin{minipage}[t]{.45\textwidth}
            \begin{tikzpicture}[framed,->,node distance=0.9cm,on grid]
    \title{W and D}
    \node(T2)   {\scriptsize$\top 2$};
    \node(T1)  [xshift=-3cm]  {\scriptsize$\top 1$};
    \node(Dir1) [below of = T1] {\scriptsize$CollegePrincipal$};
    \node(D1) [below left of = Dir1] {\scriptsize$Dean\ (F)$};
    \node(F1) [below of = D1] {\scriptsize$Faculty$};
    \node(DS1) [xshift=-2.3cm,yshift=-1.7cm] {\scriptsize$Dean\ (S)$};
    \node(S1) [below right of = F1] {\scriptsize$Student$};
    \node(B1)  [below of=S1]  {\scriptsize$\bot 1$};
    \node(Sec2)  [below of = T2] {\scriptsize$Chancellor$};
    \node(AS2)  [below of=Sec2] {\scriptsize$Vice\ Chancellor$};
    \node(Dir2)  [below of=AS2] {\scriptsize$Dean(Colleges)$};
    \node(E2)  [below of=Dir2] {\scriptsize$Univ.Fac.$}; 
    \node(B2)  [below of=E2]  {\scriptsize$\bot 2$};
    \draw [blue, thick] (S1) to (E2);
    \draw [OliveGreen, densely dashed, thick] (F1) to (E2);
    \draw [OliveGreen, densely dashed, thick] (B1) to (B2);
    \draw [OliveGreen, densely dashed, thick] (T1) to (T2);
    \draw [blue, thick] (Dir1) [bend left = 10] to (Dir2);
    \draw [OliveGreen, densely dashed, thick] (D1) [bend right = 15] to (Dir2);
    \draw [OliveGreen, densely dashed, thick] (DS1) to (Dir2);
    \draw [Fuchsia, ->] (Dir1) to (T1);
    \draw [Fuchsia, ->] (D1) to (Dir1);
    \draw [Fuchsia, ->] (F1) to (D1);
    \draw [Fuchsia, ->] (S1) to (F1);
    \draw [Fuchsia, ->] (S1) to (DS1);
    \draw [Fuchsia, ->] (DS1) to (Dir1);
    \draw [Fuchsia, ->] (B1) to (S1);
    \draw [Fuchsia, ->] (B2) to (E2);
    \draw [Fuchsia, ->] (E2) to (Dir2);
    \draw [Fuchsia, ->] (Dir2) to (AS2);
    \draw [Fuchsia, ->] (AS2) to (Sec2);
    \draw [Fuchsia, ->] (Sec2) to (T2);
    \draw (-3,-2.2)[blue] ellipse (1.4cm and 2.4cm);
    \draw (0,-2.2)[red] ellipse (1.2cm and 2.5cm);
    \end{tikzpicture}
    \caption{\small Unidirectional flow: If the solid blue arrows denote identified flows connecting important classes, then the dashed green arrows are constrained by monotonicity to lie between them. \label{fig:WnD}}
    \end{minipage}
    \quad \quad
    \begin{minipage}[t]{.45\textwidth}
   \begin{tikzpicture}[framed,->,node distance=1cm,on grid]
    \title{W and D}
    \node(T2)   {\scriptsize$\top 2$};
    \node(T1)  [xshift=-3cm]  {\scriptsize$\top 1$};
    \node(Dir1) [below of = T1] {\scriptsize$CollegePrincipal$};
    \node(D1) [below left of = Dir1] {\scriptsize$Dean\ (F)$};
    \node(F1) [below of = D1] {\scriptsize$Faculty$};
    \node(DS1) [xshift=-2.3cm,yshift=-1.7cm] {\scriptsize$Dean\ (S)$};
    \node(S1) [below right of = F1] {\scriptsize$Student$};
    \node(B1)  [below of=S1]  {\scriptsize$\bot 1$};
    \node(Sec2)  [below of = T2] {\scriptsize$Chancellor$};
    \node(AS2)  [below of=Sec2] {\scriptsize$Vice\ Chancellor$};
    \node(Dir2)  [below of=AS2] {\scriptsize$Dean(Colleges)$};
    \node(E2)  [below of=Dir2] {\scriptsize$Univ.Fac.$}; 
    \node(B2)  [below of=E2]  {\scriptsize$\bot 2$};
    \draw [OliveGreen,  thick] (S1) [bend left = 5] to (Dir2);
    \draw [red, densely dashdotted, thick] (Dir2) [bend left = 5] to (S1);
    \draw [red, densely dashdotted, thick] (F1) to (Dir2);
    \draw [OliveGreen,  thick] (B1) to (B2);
    \draw [OliveGreen,  thick, ->] (T1) [bend right=10] to (T2);
    \draw [brown,  thick, ->] (T2) [bend right=10] to (T1);
    \draw [OliveGreen,  thick] (Dir1) [bend left= 10] to (AS2);
    \draw [red, densely dashdotted, thick] (D1) to (Dir2);
    \draw [red, densely dashdotted, thick] (DS1) to (Dir2);
    \draw [Fuchsia, ->] (Dir1) to (T1);
    \draw [Fuchsia, ->] (D1) to (Dir1);
    \draw [Fuchsia, ->] (F1) to (D1);
    \draw [Fuchsia, ->] (S1) to (F1);
    \draw [Fuchsia, ->] (S1) to (DS1);
    \draw [Fuchsia, ->] (DS1) to (Dir1);
    \draw [Fuchsia, ->] (B1) to (S1);
    \draw [Fuchsia, ->] (B2) to (E2);
    \draw [Fuchsia, ->] (E2) to (Dir2);
    \draw [Fuchsia, ->] (Dir2) to (AS2);
    \draw [Fuchsia, ->] (AS2) to (Sec2);
    \draw [Fuchsia, ->] (Sec2) to (T2);
    \draw [brown,  thick] (B2) [bend left=20] to (B1);
    \draw [brown,  thick] (Sec2) to (T1);
    \draw [brown,  thick] (AS2) to (Dir1);
    \draw [brown,  thick] (E2) [bend left=10] to (S1);
    \draw (-3,-2.5)[blue] ellipse (1.4cm and 2.7cm);
    \draw (0,-2.5)[red] ellipse (1.2cm and 2.7cm);
    \end{tikzpicture}
    \caption{\small 
    The arrows between the domains define a Galois Connection.
    However, the red dash-dotted arrows highlight flow security violations when information can flow in both directions. 
    \label{fig:Bi-WnD-yNotGC}}
    \end{minipage}
   \label{fig:first-combined}
\end{figure}

\paragraph{Requirements} \ \ Accordingly,  we identified the following requirements \cite{BhardwajP2019} for viable secure information flow between two lattices $(\ld{L}, \ld{\sqsubseteq})$ and 
$(\rid{M}, \rd{\sqsubseteq})$ with order-preserving functions
$\rid{\alpha}: \ld{L} \rightarrow \rid{M}$ and 
$\ld{\gamma}: \rid{M} \rightarrow \ld{L}$.
\begin{itemize}
    \item \textit{Security:} 
    \[
\textbf{SC1}~~ \lambda \ld{l}.\ld{l} ~\ld{\sqsubseteq}~
\ld{\gamma} \circ \rid{\alpha}  
~~~~~~\hfill~~~~~~
\textbf{SC2} ~~  \lambda \rd{m}.\rd{m} ~ \rd{\sqsubseteq}~
\rid{\alpha} \circ \ld{\gamma} 
\]
   \item \textit{Precision:}
   Let $\rid{\alpha}[\ld{L}]$ and $\ld{\gamma}[\rid{M}]$ denote the images of
   set $\ld{L}$ under mapping $\rid{\alpha}$
   and set $\rid{M}$ under mapping $\ld{\gamma}$.
   \[
\begin{array}{c}
\textbf{PC1}~~\rid{\alpha}(\ld{l_1}) = \rid{\bigsqcup} ~ \{\rid{m_1} ~|~ \ld{\gamma}(\rid{m_1}) = \ld{l_1} \}, \; \; \forall \ld{l_1} \in \ld{\gamma}[\rid{M}]\\
\textbf{PC2} ~~\ld{\gamma}(\rid{m_1}) = \ld{\bigsqcup} ~ \{\ld{l_1} ~|~ \rid{\alpha}(\ld{l_1}) = \rid{m_1}\}, \; \; \forall \rid{m_1} \in \rid{\alpha}[\ld{L}]
\end{array}
\]
\item \textit{Convergence:} (\textbf{CC1} and \textbf{CC2})
 \textit{Fixed points} for the compositions $\ld{\gamma} \circ \rid{\alpha}$ and $\rid{\alpha} \circ \ld{\gamma}$
  are reached as low in the orderings $\ld{\sqsubseteq}$ and $\rd{\sqsubseteq}$ as possible.
\end{itemize}

\paragraph{Lagois Connections} \ \ 
We identified the elegant formulation of \textit{Lagois Connections} \cite{MELTON1994lagoisconnections} as an appropriate structure that satisfies these requirements:
\begin{definition}[Lagois Connection \cite{MELTON1994lagoisconnections}]
If $L = (\ld{L},\ld{\sqsubseteq})$ and $M = (\rid{M},\rd{\sqsubseteq})$ are two partially ordered sets, and $\rid{\alpha}: \ld{L} \rightarrow \rid{M}$ and $\ld{\gamma}: \rid{M} \rightarrow \ld{L}$ are order-preserving functions, then we call the quadruple $(\ld{L}, \rid{\alpha}, \ld{\gamma}, \rid{M})$ an {\em increasing} Lagois connection, if it satisfies the following properties:
\[
\begin{array}{llcll}
\textbf{LC1}~~ & \lambda \ld{l}.\ld{l} ~\ld{\sqsubseteq}~
\ld{\gamma} \circ \rid{\alpha}  
& ~~~~~~~~~~~ &
\textbf{LC2}~~ & \lambda \rd{m}.\rd{m} ~\rd{\sqsubseteq}~ 
\rid{\alpha} \circ \ld{\gamma} \\
\textbf{LC3}~~ &  \rid{\alpha} \circ  \ld{\gamma}  \circ \rid{\alpha} = \rid{\alpha}
& ~~~~~~~~~~~ &
\textbf{LC4}~~ & \ld{\gamma}  \circ \rid{\alpha}  \circ \ld{\gamma} = \ld{\gamma}
\end{array}
\]
\end{definition}

\textbf{LC3} ensures that $\ld{\gamma}(\rid{\alpha}(\ld{c_1}))$ is the least upper bound of all security classes in $\ld{C}$ that are mapped to the same security class, say $\rid{u_1} = \rid{\alpha}(\ld{c_1})$ in $\rid{U}$. 

Observe that Lagois connections are transposable: if
$(\ld{L}, \rid{\alpha}, \ld{\gamma}, \rid{M})$ is a Lagois connection, then so is 
$(\rid{M}, \ld{\gamma}, \rid{\alpha}, \ld{L})$.
Lagois connections are fundamentally different from Galois connections in that they relate two linked closure operators in two posets, as opposed to a linking a closure and an interior operator. 

We showed that Lagois connections satisfy the desired requirements: 
\begin{otheorem}[Theorem in \cite{BhardwajP2019}]\label{thm:secureconnection} 
Let $L = (\ld{L},\ld{\sqsubseteq}, \ld{\sqcup}, \ld{\sqcap})$ and $M = (\rid{M},\rd{\sqsubseteq}, \rd{\sqcup}, \rd{\sqcap})$ be two complete security class lattices, and let $\rid{\alpha}: \ld{L} \rightarrow \rid{M}$ and $\ld{\gamma}: \rid{M} \rightarrow \ld{L}$ be  order-preserving functions. 
Then the flow of information permitted by $\rid{\alpha}$, $\ld{\gamma}$ satisfies conditions \textbf{SC1}, \textbf{SC2}, \textbf{PC1}, \textbf{PC2}, \textbf{CC1} and \textbf{CC2}
if $(\ld{L}, \rid{\alpha}, \ld{\gamma}, \rid{M})$ is an increasing Lagois connection. 
\end{otheorem}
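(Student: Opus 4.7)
The plan is to dispatch the six requirements in order of increasing effort, noting first that they break into three natural pairs by the transposability of Lagois connections, so it suffices to prove SC1, PC1, and CC1, with SC2, PC2, CC2 following by swapping the roles of $(\ld{L},\rid{\alpha})$ and $(\rid{M},\ld{\gamma})$.

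For \textbf{SC1} and \textbf{SC2}, nothing is to be done: these are literally the defining conditions \textbf{LC1} and \textbf{LC2} of an increasing Lagois connection, so they come for free.

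For \textbf{PC1}, fix $\ld{l_1} \in \ld{\gamma}[\rid{M}]$, say $\ld{l_1} = \ld{\gamma}(\rid{m_0})$, and consider the fiber $F_{\ld{l_1}} = \{\rid{m_1} \in \rid{M} \mid \ld{\gamma}(\rid{m_1}) = \ld{l_1}\}$. The argument has two steps. First, I would show $\rid{\alpha}(\ld{l_1})$ is an upper bound of $F_{\ld{l_1}}$: for any $\rid{m_1} \in F_{\ld{l_1}}$, apply \textbf{LC2} to get $\rid{m_1} \rd{\sqsubseteq} \rid{\alpha}(\ld{\gamma}(\rid{m_1})) = \rid{\alpha}(\ld{l_1})$. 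Second, I would show $\rid{\alpha}(\ld{l_1})$ itself lies in $F_{\ld{l_1}}$, by computing $\ld{\gamma}(\rid{\alpha}(\ld{l_1})) = \ld{\gamma}(\rid{\alpha}(\ld{\gamma}(\rid{m_0}))) = \ld{\gamma}(\rid{m_0}) = \ld{l_1}$ using \textbf{LC4}. An upper bound that lies in the set is the least upper bound, so $\rid{\alpha}(\ld{l_1}) = \rid{\bigsqcup} F_{\ld{l_1}}$. The argument for \textbf{PC2} is identical with $\rid{\alpha}$ and $\ld{\gamma}$ interchanged and \textbf{LC3} in place of \textbf{LC4}.

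For \textbf{CC1} and \textbf{CC2}, the route is to recognize that \textbf{LC3} and \textbf{LC4}, together with monotonicity and \textbf{LC1}, \textbf{LC2}, make $\ld{\gamma} \circ \rid{\alpha}$ and $\rid{\alpha} \circ \ld{\gamma}$ closure operators on $\ld{L}$ and $\rid{M}$ respectively. Idempotence of $\ld{\gamma} \circ \rid{\alpha}$ follows immediately from \textbf{LC4}: $(\ld{\gamma} \circ \rid{\alpha}) \circ (\ld{\gamma} \circ \rid{\alpha}) = (\ld{\gamma} \circ \rid{\alpha} \circ \ld{\gamma}) \circ \rid{\alpha} = \ld{\gamma} \circ \rid{\alpha}$, and dually for the other composition using \textbf{LC3}. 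I then formalize ``fixed points reached as low as possible'' as the statement that $\ld{\gamma}(\rid{\alpha}(\ld{l}))$ is the \emph{least} fixed point of $\ld{\gamma} \circ \rid{\alpha}$ above $\ld{l}$: if $\ld{l} \ld{\sqsubseteq} \ld{l'}$ with $\ld{\gamma}(\rid{\alpha}(\ld{l'})) = \ld{l'}$, then monotonicity gives $\ld{\gamma}(\rid{\alpha}(\ld{l})) \ld{\sqsubseteq} \ld{\gamma}(\rid{\alpha}(\ld{l'})) = \ld{l'}$, and \textbf{LC1} ensures $\ld{l} \ld{\sqsubseteq} \ld{\gamma}(\rid{\alpha}(\ld{l}))$ in the first place.

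The only non-mechanical step is pinning down precisely what ``convergence'' means, since the requirement in the preliminaries is stated informally; once rephrased as minimality of the reached fixed point above the starting element, everything is a one-line consequence of idempotence plus monotonicity. The rest of the proof is essentially unpacking definitions.
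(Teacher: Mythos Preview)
Your proposal is correct. The paper does not actually spell out a proof of this theorem in the text; it cites it from the authors' earlier work and then immediately records supporting facts from Melton \textit{et al.}: Proposition~\ref{prop:largest-pre} (that $\rid{\alpha}^{-1}(\rid{m})$ has largest member $\ld{\gamma}(\rid{m})$, and dually) is exactly your PC argument, and equations~(\ref{TIGHT1})--(\ref{TIGHT2}) express your CC conclusion in the form $\ld{\gamma}(\rid{\alpha}(\ld{l})) = \ld{\sqcap}\{\ld{l^*} \in \ld{\gamma}[\rid{M}] \mid \ld{l} \ld{\sqsubseteq} \ld{l^*}\}$, i.e., the least element of $\ld{\gamma}[\rid{M}]$ (the fixed points of $\ld{\gamma}\circ\rid{\alpha}$) above $\ld{l}$. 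So your reading of the informal convergence requirement and your proof line up precisely with what the paper records.
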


In the following discussion, let $(\ld{L}, \rid{\alpha}, \ld{\gamma}, \rid{M})$ be a Lagois connection.
Let $\rid{\alpha}[\ld{L}]$ and $\ld{\gamma}[\rid{M}]$ refer to the images of the order-preserving functions $\rid{\alpha}$ and $\ld{\gamma}$, respectively.  
The images $\ld{\gamma}[\rid{M}]$ and $\rid{\alpha}[\ld{L}]$ are in fact isomorphic lattices.
For all $\rid{m} \in \rid{\alpha}[\ld{L}]$ and $\ld{l} \in \ld{\gamma}[\rid{M}]$, $\ld{\gamma}(\rid{m})$ and $\rid{\alpha}(\ld{l})$ exist.
\begin{oproposition}[Proposition 3.7 in \cite{MELTON1994lagoisconnections}]\label{prop:largest-pre}
Let $\rid{m} \in \rid{\alpha}[\ld{L}]$ and $\ld{l} \in \ld{\gamma}[\rid{M}]$. 
Then $\rid{\alpha}^{-1}(\rid{m})$ has a largest member, which is $\ld{\gamma}(\rid{m})$, and $\ld{\gamma}^{-1}(\ld{l})$ has a largest member, which is $\rid{\alpha}(\ld{l})$. 
\end{oproposition}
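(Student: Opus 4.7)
The plan is to prove only the first assertion, that $\rid{\alpha}^{-1}(\rid{m})$ has $\ld{\gamma}(\rid{m})$ as its largest member; the second assertion then follows immediately by the transposability of Lagois connections noted just after the definition (swap the roles of $\ld{L}, \rid{M}$ and of $\rid{\alpha}, \ld{\gamma}$, which turns \textbf{LC3} into \textbf{LC4} and \textbf{LC1} into \textbf{LC2}).

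The argument has two short steps. First I would check membership, i.e.\ that $\ld{\gamma}(\rid{m}) \in \rid{\alpha}^{-1}(\rid{m})$. Here I use the hypothesis $\rid{m} \in \rid{\alpha}[\ld{L}]$ to pick some $\ld{l}_0 \in \ld{L}$ with $\rid{\alpha}(\ld{l}_0) = \rid{m}$. Applying \textbf{LC3} to $\ld{l}_0$ gives
\[
\rid{\alpha}(\ld{\gamma}(\rid{m})) \;=\; \rid{\alpha}(\ld{\gamma}(\rid{\alpha}(\ld{l}_0))) \;=\; \rid{\alpha}(\ld{l}_0) \;=\; \rid{m},
\]
so $\ld{\gamma}(\rid{m})$ is indeed a preimage of $\rid{m}$ under $\rid{\alpha}$. (This also confirms the parenthetical remark in the text that $\ld{\gamma}(\rid{m})$ is defined whenever $\rid{m} \in \rid{\alpha}[\ld{L}]$.)

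Second, I would establish maximality. Take any $\ld{l}' \in \rid{\alpha}^{-1}(\rid{m})$, so that $\rid{\alpha}(\ld{l}') = \rid{m}$. Axiom \textbf{LC1} applied to $\ld{l}'$ gives $\ld{l}' \ld{\sqsubseteq} \ld{\gamma}(\rid{\alpha}(\ld{l}')) = \ld{\gamma}(\rid{m})$, which is exactly the claim that $\ld{\gamma}(\rid{m})$ dominates every element of the fibre. Combined with the previous step this shows $\ld{\gamma}(\rid{m})$ is the largest member of $\rid{\alpha}^{-1}(\rid{m})$.

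There is no real obstacle: the proof is a direct unfolding of the Lagois axioms, and the only thing to be careful about is invoking \textbf{LC3} (not just \textbf{LC1}) for the membership step, because \textbf{LC1} alone would only give $\rid{\alpha}(\ld{\gamma}(\rid{m})) \rd{\sqsupseteq} \rid{m}$, not equality. The symmetric statement about $\ld{\gamma}^{-1}(\ld{l})$ is obtained by the dual pair \textbf{LC4}/\textbf{LC2} in place of \textbf{LC3}/\textbf{LC1}, so nothing new is needed.
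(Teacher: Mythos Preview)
Your proof is correct. Note, however, that the paper does not supply its own proof of this proposition: it is quoted verbatim as Proposition~3.7 of \cite{MELTON1994lagoisconnections} and used as a black box, so there is no in-paper argument to compare against. Your two-step derivation (membership via \textbf{LC3}, maximality via \textbf{LC1}, then transposability for the dual claim) is exactly the standard unfolding one would expect, and your remark that \textbf{LC1} alone is insufficient for the membership step is on point.
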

We call these dominating members of the pre-images of $\rid{\alpha}$ and $\ld{\gamma}$ \textit{budpoints}.
Indeed:
\begin{align}
    \ld{\gamma}(\rid{\alpha}(\ld{l})) ~=~ 
    \ld{\sqcap} \{ \ld{l^*} \in \ld{\gamma}[\rid{M}] ~|~  
    \ld{l} ~\ld{\sqsubseteq}~ \ld{l^*} \}, \label{TIGHT1} \\
     \rid{\alpha}( \ld{\gamma}(\rid{m})) ~=~
     \rid{\sqcap} \{ \rid{m^*} \in \rid{\alpha}[\ld{L}] ~|~
        \rid{m} ~\rd{\sqsubseteq}~ \rid{m^*} \}.
        \label{TIGHT2}
\end{align}

Let  $\rid{\thicksim_M}$ and $\ld{\thicksim_L}$ be the equivalence relations induced by the functions $\ld{\gamma}$ and $\rid{\alpha}$.
$\ld{L^*} = \ld{\gamma}[\rid{\alpha}[\ld{L}]] = \ld{\gamma}[\rid{M}]$  and
$\rid{M^*} = \rid{\alpha}[\ld{\gamma}[\rid{M}]] = \rid{\alpha}[\ld{L}]$ define a system of representatives for $\ld{\thicksim_L}$ and $\rid{\thicksim_M}$. 
Element $\rid{m^*} = \rid{\alpha}(\ld{\gamma}(\rd{m}))$ in $\rid{M^*}$, which is a \textit{budpoint}, acts as the representative of the equivalence class $[\rd{m}]$ in the following sense:
\begin{align}
    \textit{if}~ \rid{m}\in \rid{M} ~\textit{and}~ \rid{m^*} \in \rid{M^*} ~\textit{with}~ \rid{m} ~\rid{\thicksim_M}~ \rid{m^*}~\textit{then}~ \rid{m} ~\rd{\sqsubseteq}~ \rid{m^*}
\end{align}
Symmetrically, $\ld{L^*} = \ld{\gamma}[\rid{\alpha}[\ld{L}]] = \ld{\gamma}[\rid{M}]$ defines a system of representatives for $\ld{\thicksim_L}$.
These budpoints play a significant role in delineating the connection between the transfer classes in the two lattices.
Further, Proposition \ref{prop:meets} shows that these budpoints are closed under meets. 
This property enables us to confine our analysis to just these security classes when reasoning about bidirectional flows.
\begin{proposition}[Proposition 3.11 in \cite{MELTON1994lagoisconnections}]\label{prop:meets}
\label{prop:meet-existence}
If $\ld{A} \subseteq \ld{\gamma}[\rid{M}]$, then 
\begin{enumerate}
    \item the meet of $\ld{A}$ in $\ld{\gamma}[\rid{M}]$ exists if and only if the meet of $\ld{A}$ in $\ld{L}$ exists, and whenever either exists, they are equal.
    \item the join $\ld{\hat{a}}$ of $\ld{A}$ in $\ld{\gamma}[\rid{M}]$ exists if the join $\ld{\check{a}}$ of $\ld{A}$ in $\ld{L}$ exists, and in this case $\ld{\hat{a}} = \ld{\gamma}(\rid{\alpha}(\ld{\check{a}}))$.
\end{enumerate}
\end{proposition}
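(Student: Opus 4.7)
The plan is to treat $\ld{\gamma} \circ \rid{\alpha}$ as a closure operator on $\ld{L}$ and then invoke classical closure-operator folklore. By \textbf{LC1} it is extensive, by monotonicity of both components it is monotone, and by \textbf{LC3} one has $(\ld{\gamma} \circ \rid{\alpha}) \circ (\ld{\gamma} \circ \rid{\alpha}) = \ld{\gamma} \circ (\rid{\alpha} \circ \ld{\gamma} \circ \rid{\alpha}) = \ld{\gamma} \circ \rid{\alpha}$, giving idempotence. I would then observe that $\ld{\gamma}[\rid{M}]$ is exactly the fixed-point set of this closure: any $\ld{l} = \ld{\gamma}(\rid{m})$ satisfies $\ld{\gamma}(\rid{\alpha}(\ld{l})) = \ld{\gamma}(\rid{\alpha}(\ld{\gamma}(\rid{m}))) = \ld{\gamma}(\rid{m}) = \ld{l}$ by \textbf{LC4}, while the converse inclusion is immediate. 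Thus both parts reduce to the fact that the fixed-point set of a closure operator inherits existing meets and acquires joins by reclosing.

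For part (1), suppose the meet $\ld{\check{a}}$ of $\ld{A}$ exists in $\ld{L}$. Since each $\ld{a} \in \ld{A}$ is a fixed point, monotonicity gives $\ld{\gamma}(\rid{\alpha}(\ld{\check{a}})) ~\ld{\sqsubseteq}~ \ld{\gamma}(\rid{\alpha}(\ld{a})) = \ld{a}$, hence $\ld{\gamma}(\rid{\alpha}(\ld{\check{a}})) ~\ld{\sqsubseteq}~ \ld{\check{a}}$; combined with \textbf{LC1} this forces $\ld{\check{a}} = \ld{\gamma}(\rid{\alpha}(\ld{\check{a}})) \in \ld{\gamma}[\rid{M}]$, so $\ld{\check{a}}$ is trivially also the meet in $\ld{\gamma}[\rid{M}]$. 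Conversely, if the meet $\ld{\check{a}}'$ of $\ld{A}$ exists in $\ld{\gamma}[\rid{M}]$, I would take any lower bound $\ld{l}$ of $\ld{A}$ in $\ld{L}$; then $\ld{\gamma}(\rid{\alpha}(\ld{l})) ~\ld{\sqsubseteq}~ \ld{\gamma}(\rid{\alpha}(\ld{a})) = \ld{a}$ for each $\ld{a} \in \ld{A}$, so $\ld{\gamma}(\rid{\alpha}(\ld{l}))$ is itself a lower bound of $\ld{A}$ that now lives inside $\ld{\gamma}[\rid{M}]$, giving $\ld{\gamma}(\rid{\alpha}(\ld{l})) ~\ld{\sqsubseteq}~ \ld{\check{a}}'$; \textbf{LC1} then yields $\ld{l} ~\ld{\sqsubseteq}~ \ld{\check{a}}'$, so $\ld{\check{a}}'$ is the meet in $\ld{L}$ as well.

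For part (2), I would set $\ld{\hat{a}} := \ld{\gamma}(\rid{\alpha}(\ld{\check{a}}))$, which lies in $\ld{\gamma}[\rid{M}]$ by construction, and verify the universal property directly. Each $\ld{a} \in \ld{A}$ gives $\ld{a} ~\ld{\sqsubseteq}~ \ld{\check{a}} ~\ld{\sqsubseteq}~ \ld{\hat{a}}$ (the second step by \textbf{LC1}), so $\ld{\hat{a}}$ is an upper bound of $\ld{A}$ in $\ld{\gamma}[\rid{M}]$. For any other upper bound $\ld{b} \in \ld{\gamma}[\rid{M}]$, the ambient join property gives $\ld{\check{a}} ~\ld{\sqsubseteq}~ \ld{b}$, and then monotonicity together with $\ld{b}$ being a fixed point yields $\ld{\hat{a}} ~\ld{\sqsubseteq}~ \ld{\gamma}(\rid{\alpha}(\ld{b})) = \ld{b}$, as required.

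The only step I expect to need care is the converse half of part (1): nothing \emph{a priori} guarantees that a meet computed inside the subset $\ld{\gamma}[\rid{M}]$ coincides with the meet in the ambient poset, since $\ld{\gamma}[\rid{M}]$ need not be a sub-meet-semilattice in the strong sense. The manoeuvre of replacing an arbitrary lower bound $\ld{l}$ by its closure $\ld{\gamma}(\rid{\alpha}(\ld{l}))$ — so as to push it into $\ld{\gamma}[\rid{M}]$ before comparing with $\ld{\check{a}}'$ — is the essential trick, and this disciplined use of $\ld{\gamma} \circ \rid{\alpha}$ as a closure is precisely what makes Lagois connections interact so cleanly with sublattice structure.
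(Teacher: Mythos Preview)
Your argument is correct: recognising $\ld{\gamma}\circ\rid{\alpha}$ as a closure operator on $\ld{L}$ whose fixed-point set is exactly $\ld{\gamma}[\rid{M}]$ (via \textbf{LC1}, \textbf{LC3}, \textbf{LC4}) reduces both claims to the standard facts about closed elements, and your verifications of the universal properties are clean --- in particular the ``push the lower bound into $\ld{\gamma}[\rid{M}]$ via the closure'' step in the converse of part~(1) is exactly what is needed. Note, however, that the paper does not supply its own proof of this proposition: it is quoted verbatim as Proposition~3.11 of Melton \emph{et al.}, so there is nothing in the present paper to compare your route against; the closure-operator argument you give is the standard one and is essentially what appears in the original source.
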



\subsection{Algorithmic Issues} \label{sec:complxley}

We propose using a recently proposed succinct representation for lattices \cite{munro2020space}, in which order-testing comparisons can be answered in $\mathcal{O}(1)$ time, and the meet or join of two elements in $\mathcal{O}(n^{3/4})$, where $n$ is the number of elements in the lattice.
The data structure occupies $\mathcal{O}(n^{3/2}\ log{}\ n)$
bits of space, with pre-processing time  $\mathcal{O}(n^{2})$.
The functions $\rid{\alpha}, \ld{\gamma}$ and their inverses are represented using hash-based dictionaries, and equivalence classes induced by $\rid{\alpha}, \ld{\gamma}$ are represented using the union-find data structure. 
Thus, the total space for a Lagois connection representation is $\mathcal{O}(n^{3/2})$, where $n$ is $max(n_1,n_2)$, with $n_1$ and $n_2$ being the number of elements in $\ld{L}$ and $\rid{M}$ respectively.

Order-comparisons within a domain are checked using the succinct data structure. 
Cross-domain flows, e.g., between $\ld{x} \in \ld{L}$ and $\rid{y} \in \rid{M}$, are permitted if and only if $(\exists \rid{z}\in \rid{M}. \rid{\alpha}(\ld{x}) = \rid{z}\ and\ \rid{z} \rd{\sqsubseteq} \rid{y})$.
Since looking up $\rid{\alpha}(\ld{x})$ or $\ld{\gamma}(\rid{y})$ in a hashing-based dictionary can be done in $\mathcal{O}(1)$ time, all
order-comparisons can be performed in $\mathcal{O}(1)$.

Testing \textbf{LC1} and \textbf{LC3} takes $\mathcal{O}(n_1)$
time, and \textbf{LC2} and \textbf{LC3}  $\mathcal{O}(n_2)$
time.
Thus checking whether $(\ld{L},\rid{\alpha},\ld{\gamma}, \rid{M})$ is a Lagois connection takes $\mathcal{O}(n)$ time, where $n = max(n_1,n_2)$. 

If the transfer classes have been identified and these are far fewer than the number of lattice points, we can further optimise the pre-processing and the data structure (since the two sets of transfer classes are order-isomorphic to each other as we shall in Theorem \ref{lemma:decompose} of \S\ref{sec:lagois-decomposition}).

\section{An Operational Model}\label{sec:model}

We present an operational model consisting of a \textit{language} and its operational semantics, and then a \textit{security type system}, which we show to be sound in that well-typed programs exhibit non-interference \cite{DBLP:conf/sp/GoguenM82a}.
The objective of this section is to show that under reasonable assumptions of atomicity and isolation, the Lagois connection framework allows SIF analyses performed within systems to be lifted systematically to concurrent systems exchanging information. 

\subsection{Computational Model.}\label{sec:operations}
Assume two organisations $\ld{L}$ and $\rid{M}$ which have their own IFMs want to share data with each other.
The two domains comprise storage objects $\ld{Z}$ and $\rd{Z}$ respectively, which are manipulated using their own sets of \textit{atomic} transactional operations, ranged over by $\ld{t}$ and $\rd{t}$ respectively. 
We further assume that these transactions within each domain are internally secure with respect to their flow models, and have no insecure or interfering interactions with the environment. 
Thus, we are agnostic to the level of abstraction of the systems we aim to connect securely, and since our approach treats the application domains as ``black boxes'', it is readily adaptable to any level of discourse (language, system, OS, database) found in the security literature.

We extend these operations with a minimal set of operations to transfer data between the two domains.
To avoid any concurrency effects, interference or race conditions arising from inter-domain transfer, we augment the storage objects of both domains with a fresh set of \textit{export} and \textit{import} variables into/from which the data of the domain objects can be copied \textit{atomically}.  
We designate these sets $\ld{X}, \rd{X}$ as the respective \textit{export} variables, and $\ld{Y}, \rd{Y}$ as the respective \textit{import} variables, with the corresponding variable instances written as $\ld{x_i}$, $\rd{x_i}$ and $\ld{y_i}$, $\rd{y_i}$.
These export and import variables form mutually disjoint sets, and are distinct from any extant domain objects manipulated by the applications within a domain. 
These variables are used exclusively for transfer, and are manipulated atomically.  
We let $\ld{w_i}$ range over all variables in $\ld{N} ~=~\ld{Z} \cup \ld{X} \cup \ld{Y}$ (respectively $\rd{w_i}$ over $\rd{N} ~=~ \rd{Z} \cup \rd{X} \cup \rd{Y}$).
Domain objects are copied \textit{to export} variables and \textit{from import} variables by special operations $\ld{rd}(\ld{z}, \ld{y})$ and $\ld{wr}(\ld{x}, \ld{z})$ (and $\rd{rd}(\rd{z}, \rd{y})$ and  $\rd{wr}(\rd{x}, \rd{z})$ in the other domain).
We assume \textit{atomic transfer} operations (\textit{trusted by both domains}) $T_{RL}, T_{LR}$ that copy data from the export variables of one domain to the import variables of the other domain as the only mechanism for inter-domain flow of data.
Let  ``phrase'' $p$ denote a command in either domain or a transfer operation, and let $s$ be any (empty or non-empty) sequence of phrases.
\textbf{Note}: This language should be understood as a notation for describing a distributed system's execution sequences involving computation, communication, input and output, rather than as a programming language.  
Thus, we only need to consider sequences of atomic actions.
Hence the absence of constructs such as conditionals, iteration or repetition.
Further, the importance of atomicity of the computational steps and communication-related operations should be evident.
We will later see that the ``types'' are exactly the security classes of the IFMs, and so there are no constructions such as cartesian product, records and function types.
\[
\begin{array}{c}
\text{(command)} ~~~
    \ld{c} ::=  \ld{t} ~|~ \ld{rd}(\ld{z}, \ld{y}) ~|~  \ld{wr}(\ld{x}, \ld{z}) ~~~~~~ 
    \rd{c} ::=  
    \rd{t} ~|~ \rd{rd}(\rd{z}, \rd{y}) ~|~  \rd{wr}(\rd{x}, \rd{z}) \\
    \text{(phrase)} ~~~ p ::= T_{RL}(\rd{x},\ld{y}) ~|~ T_{LR}(\ld{x},\rd{y}) ~|~ 
    \ld{c} ~|~ \rd{c}  ~~~\hfill~~~  \text{(seq)}~~~ s ::= \epsilon ~|~ s_1 ; p \\ 
\end{array}
\]

\begin{figure}
\[
\begin{array}{c}
\inferrule* [Left = \ld{T}]{\ld{\mu} \vdash \ld{t} \Rightarrow \ld{\nu}
}{\langle \ld{\mu},\rd{\mu} \rangle \vdash \ld{t} \Rightarrow \langle \ld{\nu}, \rd{\mu} \rangle}
~~~~\hfill~~~~
\inferrule* [Left = \rd{T}]{\rd{\mu} \vdash \rd{t} \Rightarrow \rd{\nu}
}{\langle \ld{\mu},\rd{\mu} \rangle \vdash \rd{t} \Rightarrow \langle \ld{\mu}, \rd{\nu} \rangle} \\[1ex]
\inferrule*[Left = \ld{Wr}]{ 
}{\langle \ld{\mu},\rd{\mu} \rangle \vdash \ld{wr}(\ld{x},\ld{z}) \Rightarrow \langle \ld{\mu}[\ld{x} := \ld{\mu}(\ld{z})], \rd{\mu} \rangle}
\\[1ex]
\inferrule*[Left = \rd{Wr}]{ 
}{\langle \ld{\mu},\rd{\mu} \rangle \vdash \rd{wr}(\rd{x},\rd{z}) \Rightarrow \langle \ld{\mu}, \rd{\mu}[\rd{x} := \rd{\mu}(\rd{z})] \rangle}
\\[1ex]
\inferrule*[Left = \ld{Rd}]{ 
}{\langle \ld{\mu},\rd{\mu} \rangle \vdash \ld{rd}(\ld{z},\ld{y}) \Rightarrow \langle \ld{\mu}[\ld{z} := \ld{\mu}(\ld{y})], \rd{\mu} \rangle} \\[1ex]
\inferrule*[Left = \rd{Rd}]{ 
}{\langle \ld{\mu},\rd{\mu} \rangle \vdash \rd{rd}(\rd{z},\rd{y}) \Rightarrow \langle \ld{\mu}, \rd{\mu}[\rd{z} := \rd{\mu}(\rd{y})] \rangle} \\[1ex]
\inferrule*[Left = Trl]{ 
}{\langle \ld{\mu},\rd{\mu} \rangle \vdash T_{RL}(\ld{y},\rd{x}) \Rightarrow 
\langle \ld{\mu}[\ld{y} := \rd{\mu}(\rd{x})],\rd{\mu} \rangle} \\[1ex]
\inferrule*[Left = Tlr]{ 
}{\langle \ld{\mu},\rd{\mu} \rangle \vdash T_{LR}(\rd{y},\ld{x}) \Rightarrow 
\langle \ld{\mu},\rd{\mu}[\rd{y} := \ld{\mu}(\ld{x})] \rangle} \\[1ex]
\inferrule* [Left = Seq0]{ }{
\langle \ld{\mu},\rd{\mu} \rangle \vdash \epsilon \Rightarrow^* \langle \ld{\mu},\rd{\mu} \rangle} \\[1ex]
\inferrule* [Left = SeqS]{\langle \ld{\mu},\rd{\mu} \rangle \vdash s_1 \Rightarrow^* \langle \ld{\mu_1},\rd{\mu_1} \rangle, ~~~
\langle \ld{\mu_1},\rd{\mu_1} \rangle \vdash p \Rightarrow  \langle \ld{\mu_2},\rd{\mu_2} \rangle}{\langle \ld{\mu},\rd{\mu} \rangle \vdash s_1 ; p \Rightarrow^* \langle \ld{\mu_2},\rd{\mu_2} \rangle}
\end{array}
\]
\caption{Execution Rules}
    \label{fig:evalrules11}
\end{figure}

A \textit{store} (typically $\ld{\mu}, \ld{\nu}, \rd{\mu}, \rd{\nu}$) is a finite-domain function from variables to a set of values (not further specified). 
We write, \textit{e.g.}, $\ld{\mu}(\ld{w})$ for the contents of the store $\ld{\mu}$ at variable $\ld{w}$, and $\ld{\mu}[\ld{w} := \rd{\mu}(\rd{w})]$ for the store 
that is the same as $\ld{\mu}$ everywhere except at variable $\ld{w}$,
where it now takes value $\rd{\mu}(\rd{w})$.

The rules specifying execution of commands are given in Fig. \ref{fig:evalrules11}.
Assuming the specification of intradomain transactions (\textit{\ld{t}, \rd{t}}) of the form
$\ld{\mu} \vdash \ld{t} \implies \ld{\nu}$ and $\rd{\mu} \vdash \rd{t} \implies \rd{\nu}$, our rules allow us to specify judgments of the form
$\langle \ld{\mu}, \rd{\mu} \rangle \vdash p \implies \langle \ld{\nu}, \rd{\nu} \rangle$ for phrases, and the reflexive-transitive closure for sequences of phrases.
Note that phrase execution occurs \textit{atomically}, and the intra-domain transactions, as well as copying to and from the export/import variables affect the store in only one domain, whereas the \textit{atomic transfer} is only between export variables of one domain and the import variables of the other.

\subsection{Typing Rules}\label{sec:typing}
Let the two domains have the respective different IFMs:
\[ FM_L = \langle \ld{N}, \ld{P}, \ld{SC}, \ld{\sqcup}, \ld{\sqsubseteq} \rangle  ~~~\hfill~~~ FM_M = \langle \rd{N}, \rd{P}, \rd{SC}, \rid{\sqcup}, \rd{\sqsubseteq} \rangle, \]
such that the flow policies in both are defined over different sets of security classes $\ld{SC}$ and $\rd{SC}$.\footnote{Without loss of generality, we assume that $\ld{SC} \cap \rd{SC} = \emptyset$, since we can suitably rename security classes.} 
 
The (security) types of the core language are as follows.

Metavariables $\ld{l}$ and $\rd{m}$ range over the sets of security classes, $\ld{SC}$ and $\rd{SC}$ respectively, which are partially ordered by $\ld{\sqsubseteq}$ and $\rd{\sqsubseteq}$. 
Note that,  with the language being minimal, there are no other type constructions. 
A type assignment $\ld{\lambda}$ is a finite-domain function from variables $\ld{N}$ to $\ld{SC}$ (respectively, $\rd{\lambda}$ from $\rd{N}$ to $\rd{SC}$).
The important restriction we place on $\ld{\lambda}$
and $\rd{\lambda}$ is that they map export and import variables $\ld{X}, \ld{Y}, \rd{X}, \rd{Y}$ only to points in the security lattices $\ld{SC}$ and $\rd{SC}$ respectively which are in the domains of $\ld{\gamma}$ and $\rid{\alpha}$, \textit{i.e.}, these points participate in the Lagois connection.  
Intuitively, a variable $w$ mapped to security class $\ld{l}$ can store information of security class $\ld{l}$ or lower.
The type system works with respect to a given type assignment.  Given a security level, \textit{e.g.}, $\ld{l}$, the typing rules track for each command \textit{within that domain} whether all written-to variables in that domain are of security classes ``above'' $\ld{l}$, and additionally for transactions within a domain, they ensure  ``simple security'', \textit{i.e.}, that  all variables which may have been read belong to security classes ``below'' $\ld{l}$.
We assume for the transactions within a domain, \textit{e.g.}, $\ld{L}$, we already have a security type system that will give us judgments of the form $\ld{\lambda} \vdash \ld{t}: \ld{l}$.
The novelty of our approach is to extend this framework to work over two connected domains, \textit{i.e.}, given implicit security levels of the contexts in the respective domains.
Cross-domain transfers will require pairing such judgments, and thus our type system will have judgments of the form 
\[ \langle \ld{\lambda}, \rd{\lambda} \rangle \vdash p: \langle 
\ld{l}, \rd{m} \rangle \]

We introduce a set of typing rules for the core language, given in Fig. \ref{fig:syntax-type-rules11}.  
In many of the rules, the type for one of the domains is not constrained by the rule, and so any suitable type may be chosen as determined by the context,  \textit{e.g.}, $\rd{m}$ in the rules \ld{\sc Tt},
\ld{\sc Trd}, \ld{\sc Twr} and $\ld{TT_{RL}}$,
and both $\ld{l}$ and $\rd{m}$ in {\sc Com0}.

\begin{figure}[!ht]
\centering
\[
\begin{array}{c}

\inferrule*[Left = \ld{Tt}]{~}{
\langle \ld{\lambda}, \rd{\lambda} \rangle \vdash \ld{t}: \langle \ld{l}, \rd{m} \rangle} ~\text{if for all $\ld{z}$ assigned in $\ld{t}$, }
\ld{l} ~\ld{\sqsubseteq}~ \ld{\lambda}(\ld{z}) \\
~~~~~~~~~~~~~~~~~~~~\hfill
\text{ \& for all  $\ld{z_1}$  read in $\ld{t}$, }
\ld{\lambda}(\ld{z_1}) ~\ld{\sqsubseteq}~ \ld{l}
\\
\inferrule*[Left = \rd{Tt}]{~}{
\langle \ld{\lambda}, \rd{\lambda} \rangle \vdash \rd{t}: \langle \ld{l}, \rd{m} \rangle} ~\text{if for all $\rd{z}$ assigned in $\rd{t}$, } 
\rd{m} ~\rd{\sqsubseteq}~ \rd{\lambda}(\rd{z}) \\
~~~~~~~~~~~~~~~~~~~~\hfill
\text{ \& for all } \rd{z_1} \text{ read in $\rd{t}$, }
\rd{\lambda}(\rd{z_1}) ~\rd{\sqsubseteq}~ \rd{m}
\\[1ex]
\inferrule*[Left = \ld{Trd}]{\ld{\lambda}(\ld{y}) ~\ld{\sqsubseteq}~ \ld{\lambda}(\ld{z})}{
\langle \ld{\lambda}, \rd{\lambda} \rangle \vdash \ld{rd}(\ld{z},\ld{y}): 
\langle \ld{\lambda}(\ld{z}), \rd{m} \rangle
} 
\\
\inferrule*[Left = \rd{Trd}]{\rd{\lambda}(\rd{y}) ~\rd{\sqsubseteq}~ \rd{\lambda}(\rd{z})}{
\langle \ld{\lambda}, \rd{\lambda} \rangle \vdash \rd{rd}(\rd{z},\rd{y}): 
\langle \ld{l}, \rd{\lambda}(\rd{z}) \rangle
} 
\\[1ex]
\inferrule*[Left = \ld{Twr}]{\ld{\lambda}(\ld{z}) ~\ld{\sqsubseteq}~ \ld{\lambda}(\ld{x})}{
\langle \ld{\lambda}, \rd{\lambda} \rangle \vdash \ld{wr}(\ld{x},\ld{z}): 
\langle \ld{\lambda}(\ld{x}), \rd{m} \rangle} 
\\
\inferrule*[Left = \rd{Twr}]{\rd{\lambda}(\rd{z}) ~\rd{\sqsubseteq}~ \rd{\lambda}(\rd{x})}{
\langle \ld{\lambda}, \rd{\lambda} \rangle \vdash \rd{wr}(\rd{x},\rd{z}): 
\langle \ld{l}, \rd{\lambda}(\rd{x}) \rangle} \\[1ex]
\inferrule*[Left = \ld{$TT_{RL}$}]{ 
\ld{\gamma}(\rd{\lambda}(\rd{x})) ~\ld{\sqsubseteq}~ \ld{\lambda}(\ld{y})
}{
\langle \ld{\lambda}, \rd{\lambda} \rangle \vdash T_{RL}(\ld{y},\rd{x}): 
\langle \ld{\lambda}(\ld{y}), \rd{\lambda}(\rd{x}) \rangle} 
\\
\inferrule*[Left = \rd{$TT_{LR}$}]{ 
\rid{\alpha}(\ld{\lambda}(\ld{x})) ~\rd{\sqsubseteq}~ \rd{\lambda}(\rd{y})
}{
\langle \ld{\lambda}, \rd{\lambda} \rangle \vdash T_{LR}(\rd{y},\ld{x}): 
\langle \ld{\lambda}(\ld{x}), \rd{\lambda}(\rd{y}) \rangle }
\\[1ex]
\inferrule*[Left = Com0]{ }{
\langle \ld{\lambda}, \rd{\lambda} \rangle \vdash \epsilon: \langle \ld{l}, \rd{m} \rangle} 
\\
\inferrule*[Left = ComS]{
\langle \ld{\lambda}, \rd{\lambda} \rangle \vdash p: \langle \ld{l_1}, \rd{m_1} \rangle
~~~~~\langle \ld{\lambda}, \rd{\lambda} \rangle \vdash s: \langle \ld{l}, \rd{m} \rangle
}{\langle \ld{\lambda}, \rd{\lambda} \rangle \vdash s; p: 
\langle \ld{l_1} \ld{\sqcap} \ld{l} , \rd{m_1} \rid{\sqcap} \rd{m} \rangle}
\end{array} 
    \]
    \caption{Typing Rules}
    \label{fig:syntax-type-rules11}
\end{figure}

For transactions \textit{e.g.}, $\ld{t}$ entirely within domain $\ld{L}$, the typing rule \ld{\sc Tt} constrains the type in the left domain to be at a level $\ld{l}$ that dominates all variables read in
$\ld{t}$, and which is dominated by all variables written to in $\ld{t}$, but places no constraints on the type $\rd{m}$ in the other domain $\rid{M}$.
In the rule \ld{\sc Trd}, since a value in import variable $\ld{y}$ is copied to the variable $\ld{z}$, we have $\ld{\lambda}(\ld{y}) ~\ld{\sqsubseteq}~ \ld{\lambda}(\ld{z})$, and the type in the domain $\ld{L}$ is $\ld{\lambda}(\ld{z})$ with no constraint on the type $\rd{m}$ in the other domain.
Conversely, in the rule \ld{\sc Twr}, since a value in variable $\ld{z}$ is copied to the export variable $\ld{x}$, we have $\ld{\lambda}(\ld{z}) ~\ld{\sqsubseteq}~ \ld{\lambda}(\ld{x})$, and the type in the domain $\ld{L}$ is $\ld{\lambda}(\ld{x})$ with no constraint on the type $\rd{m}$ in the other domain.
In the rule $\ld{TT_{RL}}$, since the contents of
a variable $\rd{x}$ in domain $\rid{M}$ are copied into a variable $\ld{y}$ in domain $\ld{L}$, we require $\ld{\gamma}(\rd{\lambda}(\rd{x})) ~\ld{\sqsubseteq}~ \ld{\lambda}(\ld{y})$, and constrain the type in domain $\ld{L}$ to $\ld{\lambda}(\ld{y})$.  
The constraint in the other domain is unimportant (but for the sake of convenience, we peg it at $\rd{\lambda}(\rd{x})$).
Finally, for the types of sequences of phrases, we take the meets of the collected types in each domain respectively, so that we can guarantee that no variable of type lower than these meets has been written into during the sequence.
Note that Proposition \ref{prop:meet-existence} ensures that these types have the desired properties for participating in the Lagois connection. 

\subsection{Soundness} \label{sec:soundness}
We now establish soundness of our scheme by showing a non-interference theorem with respect to the operational semantics and the type system built on the security lattices.
This theorem may be viewed as a conservative adaptation (to a minimal secure data transfer framework in a Lagois-connected pair of domains) of the main result of Volpano \textit{et al.} \cite{DBLP:journals/jcs/VolpanoIS96}.

We assume that underlying base transactional languages in each of the domains have the following simple property (stated for $\ld{L}$, but an analogous property is assumed for $\rid{M}$).
Within each transaction $\ld{t}$, for each assignment of an expression $\ld{e}$ to any variable $\ld{z}$,  the following holds:
If $\ld{\mu}$, $\ld{\nu}$ are two stores such that
for all $\ld{w} \in vars(\ld{e})$, we have $\ld{\mu}(\ld{w}) = \ld{\nu}(\ld{w})$, then after executing the assignment, we will get $\ld{\mu}(\ld{z}) = \ld{\nu}(\ld{z})$.  
That is, if two stores are equal for all variables appearing in the expression $\ld{e}$, then the value assigned to the variable $\ld{z}$ will be the same.
This assumption plays the r\^{o}le of ``Simple Security'' of expressions in \cite{DBLP:journals/jcs/VolpanoIS96} in the proof of the main theorem.
The type system plays the r\^{o}le of ``Confinement''.
We start with two obvious lemmas about the operational semantics, namely preservation of domains, and a ``frame'' lemma:
\begin{lemma}[Domain preservation]\label{lemma:equaldomainoneval}
If $\langle \ld{\mu}, \rd{\mu} \rangle \vdash s \Rightarrow^* \langle \ld{\mu_1}, \rd{\mu_1} \rangle$, then $dom(\ld{\mu}) = dom(\ld{\mu_1})$, and $dom(\rd{\mu}) = dom(\rd{\mu_1})$.
\end{lemma}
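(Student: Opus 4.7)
The plan is to proceed by induction on the length of the sequence $s$, equivalently on the derivation of the big-step judgement $\langle \ld{\mu}, \rd{\mu} \rangle \vdash s \Rightarrow^* \langle \ld{\mu_1}, \rd{\mu_1} \rangle$. The two statements about $\ld{L}$ and $\rid{M}$ are symmetric, so I would prove them simultaneously.

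For the base case, the only rule deriving $s = \epsilon$ is \textsc{Seq0}, which yields $\langle \ld{\mu}, \rd{\mu} \rangle$ unchanged, so both domain equalities are trivial. For the inductive step, a derivation must end with \textsc{SeqS}, decomposing $s = s_1 ; p$ into an intermediate configuration $\langle \ld{\mu'}, \rd{\mu'} \rangle$. By the induction hypothesis applied to the derivation of $s_1$, we have $dom(\ld{\mu}) = dom(\ld{\mu'})$ and $dom(\rd{\mu}) = dom(\rd{\mu'})$. It then remains to show the single-phrase claim: if $\langle \ld{\mu'}, \rd{\mu'} \rangle \vdash p \Rightarrow \langle \ld{\mu_1}, \rd{\mu_1} \rangle$, then the domains of the two stores are preserved.

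For the single-phrase claim I would simply inspect the seven phrase rules of Figure~\ref{fig:evalrules11}. Rules \ld{\textsc{T}} and \rd{\textsc{T}} rely on an assumed intradomain judgement $\ld{\mu'} \vdash \ld{t} \Rightarrow \ld{\nu}$ (resp. $\rd{\mu'} \vdash \rd{t} \Rightarrow \rd{\nu}$); here I would invoke as a standing assumption about the underlying base transactional languages the fact that intradomain transactions preserve store domains (a benign well-formedness property already implicit in the framework), and note that the store in the other domain is left identically unchanged. All four of \ld{\textsc{Wr}}, \rd{\textsc{Wr}}, \ld{\textsc{Rd}}, \rd{\textsc{Rd}} as well as \textsc{Trl} and \textsc{Tlr} produce the new configuration purely via the pointwise update notation $\ld{\mu'}[\ld{w} := v]$ (resp.\ $\rd{\mu'}[\rd{w} := v]$), which by definition does not enlarge or shrink the store's domain; moreover each of these rules touches only one of the two stores, leaving the other literally equal.

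There is no real obstacle here; the lemma is a straightforward structural check, and the only subtlety is to be explicit that the import/export variables $\ld{X}, \ld{Y}, \rd{X}, \rd{Y}$ are already part of the relevant domain's store (this is guaranteed by the setup in \S\ref{sec:operations}, where the store's variable set is $\ld{N} = \ld{Z} \cup \ld{X} \cup \ld{Y}$ and similarly for $\rd{N}$), so that the reads and writes performed by \ld{\textsc{Rd}}, \ld{\textsc{Wr}}, \textsc{Trl}, \textsc{Tlr} never introduce fresh variables. The induction then closes immediately.
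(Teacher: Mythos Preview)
Your proposal is correct and follows exactly the approach the paper takes: induction on the length of the derivation of $\langle \ld{\mu}, \rd{\mu} \rangle \vdash s \Rightarrow^* \langle \ld{\mu_1}, \rd{\mu_1} \rangle$. The paper states only that one line, whereas you have helpfully spelled out the base case, the \textsc{SeqS} decomposition, and the per-phrase inspection, but there is no substantive difference.
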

\begin{proof}
By induction on the length of the derivation of $\langle \mu, \rd{\mu} \rangle \vdash s \Rightarrow^* \langle \mu_1, \rd{\mu_1} \rangle$.
\end{proof}
\begin{lemma}[Frame]\label{lemma:notassigned}
If $\langle \mu, \rd{\mu} \rangle \vdash s \Rightarrow^* \langle \mu_1, \rd{\mu_1} \rangle, w \in dom(\mu) \cup dom(\rd{\mu})$, and $w$ is \textit{not} assigned to in $s$, then $\ld{\mu}(w) = \ld{\mu_1}(w)$ and
$\rd{\mu}(w) = \rd{\mu_1}(w)$.
\end{lemma}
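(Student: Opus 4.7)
The plan is to prove the frame lemma by induction on the length of the derivation $\langle \ld{\mu}, \rd{\mu} \rangle \vdash s \Rightarrow^* \langle \ld{\mu_1}, \rd{\mu_1} \rangle$, with an internal case analysis on the phrase executed at each step. The base case is immediate: if $s = \epsilon$, then by rule {\sc Seq0} we have $\ld{\mu_1} = \ld{\mu}$ and $\rd{\mu_1} = \rd{\mu}$, so the conclusion holds trivially for any $w$.

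For the inductive step, suppose $s = s_1 ; p$. By rule {\sc SeqS} there exist intermediate stores $\ld{\mu'}, \rd{\mu'}$ with $\langle \ld{\mu}, \rd{\mu} \rangle \vdash s_1 \Rightarrow^* \langle \ld{\mu'}, \rd{\mu'} \rangle$ and $\langle \ld{\mu'}, \rd{\mu'} \rangle \vdash p \Rightarrow \langle \ld{\mu_1}, \rd{\mu_1} \rangle$. Since $w$ is not assigned to anywhere in $s$, it is in particular not assigned to in $s_1$ nor in the single phrase $p$. The induction hypothesis, applied to the shorter derivation for $s_1$, gives $\ld{\mu}(w) = \ld{\mu'}(w)$ and $\rd{\mu}(w) = \rd{\mu'}(w)$. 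It then suffices to prove a one-step version: that for any phrase $p$ not assigning to $w$, $\langle \ld{\mu'}, \rd{\mu'} \rangle \vdash p \Rightarrow \langle \ld{\mu_1}, \rd{\mu_1} \rangle$ implies $\ld{\mu'}(w) = \ld{\mu_1}(w)$ and $\rd{\mu'}(w) = \rd{\mu_1}(w)$; chaining equalities then closes the induction.

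The one-step claim is handled by inspecting each of the eight phrase-level rules in Figure~\ref{fig:evalrules11}. For the pure import/export and transfer rules ({\sc \ld{Wr}}, {\sc \rd{Wr}}, {\sc \ld{Rd}}, {\sc \rd{Rd}}, {\sc Trl}, {\sc Tlr}) exactly one variable is updated (respectively $\ld{x}$, $\rd{x}$, $\ld{z}$, $\rd{z}$, $\ld{y}$, $\rd{y}$), and the store in the other domain is untouched; since $w$ is by hypothesis not that updated variable, both stores remain unchanged on $w$. For the intra-domain transaction rules {\sc \ld{T}} and {\sc \rd{T}}, the store in the opposite domain is passed through unchanged, and the assumption on the underlying transactional languages (namely that intradomain transactions are well-behaved and only touch the variables they explicitly assign) supplies the frame property for the active domain. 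This mirrors the analogous assumption used in the preceding soundness discussion, so no new proof obligation arises.

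The only real subtlety is making precise what ``$w$ is not assigned to in $s$'' means; I would formalise it as $w \notin \mathrm{assigned}(s)$ where $\mathrm{assigned}$ is defined compositionally: $\mathrm{assigned}(\epsilon) = \emptyset$, $\mathrm{assigned}(s_1; p) = \mathrm{assigned}(s_1) \cup \mathrm{assigned}(p)$, and on phrases it returns the singleton of the target variable for {\sc \ld{Wr}}/{\sc \rd{Wr}}/{\sc \ld{Rd}}/{\sc \rd{Rd}}/{\sc Trl}/{\sc Tlr} and, for $\ld{t}$ (respectively $\rd{t}$), the set of domain variables syntactically assigned inside the transaction. With this definition, the case analysis above goes through mechanically, and the main (and mild) obstacle is simply lining up this notion of ``assigned'' with the frame property inherited from the underlying base languages so that the intradomain cases discharge cleanly.
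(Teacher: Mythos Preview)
Your proof is correct and follows exactly the approach the paper takes: induction on the length of the derivation. The paper's own proof is a one-liner (``By induction on the length of the derivation''), so your expansion into a base case, inductive step via {\sc SeqS}, and per-rule case analysis is just a faithful elaboration of what the paper leaves implicit.
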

\begin{proof}
By induction on the length of the derivation of $\langle \ld{\mu}, \rd{\mu} \rangle \vdash s \Rightarrow^* \langle \ld{\mu_1}, \rd{\mu_1} \rangle$.
\end{proof}

The main result assumes an ``adversary'' that operates at a security level $\ld{l}$ in domain $\ld{L}$ and at security level $\rd{m}$ in domain $\rid{M}$.  
Note however, that these two levels are interconnected by the monotone functions $\rid{\alpha}: \ld{L} \rightarrow \rid{M}$ and
$\ld{\gamma}: \rid{M} \rightarrow \ld{L}$, since these levels are connected by the ability of information at one level in one domain to flow to the other level in the other domain.
The following theorem says that if (a) a sequence of phrases is well-typed, and (b,c) we start its execution in two store configurations that are (e) indistinguishable with respect to all objects having security class below $\ld{l}$ and $\rd{m}$ in the respective domains, then the corresponding resulting stores after execution continue to remain indistinguishable on all variables with security classes below these adversarial levels.  
\begin{theorem}[Type Soundness]\label{thm:soundness}
Suppose $\ld{l}, \rd{m}$ are the ``adversarial'' type levels in the respective domains, which satisfy the condition $\ld{l} = \ld{\gamma}(\rd{m})$ and
$\rd{m} = \rid{\alpha}(\ld{l})$.  
Let 
\begin{enumerate}[label=(\alph*)]
    \item \label{assume:1}$\langle \ld{\lambda}, \rd{\lambda} \rangle \vdash s: 
    \langle \ld{l_0}, \rd{m_0}\rangle$; ~~~
    ($s$ has security type $\langle \ld{l_0}, \rd{m_0}\rangle$)
    \item \label{assume:2}$\langle \ld{\mu}, \rd{\mu} \rangle \vdash s \Rightarrow^* \langle \ld{\mu_f}, \rd{\mu_f} \rangle$;
    ~~(execution of $s$ starting from $\langle \ld{\mu}, \rd{\mu} \rangle $)
    \item \label{assume:3}$\langle \ld{\nu}, \rd{\nu} \rangle \vdash s \Rightarrow^* \langle \ld{\nu_f}, \rd{\nu_f} \rangle$;
        ~~(execution of $s$ starting from $\langle \ld{\nu}, \rd{\nu} \rangle $)
    \item \label{assume:4}$dom(\ld{\mu}) = dom(\ld{\nu}) = dom(\ld{\lambda})$ and
    $dom(\rd{\mu}) = dom(\rd{\nu}) = dom(\rd{\lambda})$;
    \item \label{assume:5} $\ld{\mu}(\ld{w}) = \ld{\nu}(\ld{w})$ for all
    $\ld{w}$ such that $\ld{\lambda}(\ld{w}) ~\ld{\sqsubseteq}~ \ld{l}$, and
    $\rd{\mu}(\rd{w}) = \rd{\nu}(\rd{w})$ for all
    $\rd{w}$ such that $\rd{\lambda}(\rd{w}) ~\rd{\sqsubseteq}~ \rd{m}$.

\end{enumerate}
Then 
$\ld{\mu_f}(\ld{w}) = \ld{\nu_f}(\ld{w})$ for all
    $\ld{w}$ such that $\ld{\lambda}(\ld{w}) ~\ld{\sqsubseteq}~ \ld{l}$, and
    $\rd{\mu_f}(\rd{w}) = \rd{\nu_f}(\rd{w})$ for all
    $\rd{w}$ such that $\rd{\lambda}(\rd{w}) ~\rd{\sqsubseteq}~ \rd{m}$.

\end{theorem}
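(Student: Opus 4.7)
I would proceed by induction on the length of the execution derivation in \ref{assume:2} (equivalently on the structure of $s$), running the two executions of assumptions \ref{assume:2} and \ref{assume:3} in lockstep. The base case $s = \epsilon$ is immediate: rule \textsc{Seq0} leaves both stores unchanged, so \ref{assume:5} yields the conclusion directly. For the inductive step $s = s_1 ; p$, rules \textsc{SeqS} and \textsc{ComS} factor both the typing and each execution through an intermediate configuration; I would apply the induction hypothesis to $s_1$ to obtain intermediate stores $\langle \ld{\mu_1}, \rd{\mu_1} \rangle$ and $\langle \ld{\nu_1}, \rd{\nu_1} \rangle$ that remain indistinguishable on all variables of type below $\ld{l}$ (resp.\ $\rd{m}$). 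It then suffices to show that executing the single phrase $p$ preserves this invariant, which I would dispatch by case analysis on $p$, invoking Lemma \ref{lemma:notassigned} at every case to handle variables $p$ does not touch and Lemma \ref{lemma:equaldomainoneval} to align the store domains.

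For an intradomain case such as $p = \ld{t}$ typed at some $\langle \ld{l_0}, \rd{m_0}\rangle$: for any $\ld{z}$ written in $\ld{t}$ with $\ld{\lambda}(\ld{z}) ~\ld{\sqsubseteq}~ \ld{l}$, the premise of rule \ld{Tt} and transitivity give $\ld{l_0} ~\ld{\sqsubseteq}~ \ld{l}$, so every $\ld{w}$ read in computing the new value of $\ld{z}$ satisfies $\ld{\lambda}(\ld{w}) ~\ld{\sqsubseteq}~ \ld{l_0} ~\ld{\sqsubseteq}~ \ld{l}$. The intermediate indistinguishability, combined with the assumed simple-security property of the underlying expression language, forces $\ld{\mu_f}(\ld{z}) = \ld{\nu_f}(\ld{z})$. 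The $\rid{M}$-store is untouched by rule \ld{T}, so its indistinguishability survives by the Frame Lemma. The cases $\ld{rd}$, $\ld{wr}$, and their $\rid{M}$-counterparts are analogous, with the relevant typing premise directly lifting adversary-visibility from the destination to the source.

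The genuinely interesting case is the cross-domain transfer $p = T_{RL}(\ld{y}, \rd{x})$ (the case $T_{LR}$ being symmetric). Suppose $\ld{\lambda}(\ld{y}) ~\ld{\sqsubseteq}~ \ld{l}$; since rule \textsc{Trl} assigns $\rd{\mu_1}(\rd{x})$ (resp.\ $\rd{\nu_1}(\rd{x})$) to $\ld{y}$, it suffices to establish $\rd{\lambda}(\rd{x}) ~\rd{\sqsubseteq}~ \rd{m}$ and appeal to the intermediate indistinguishability on $\rd{x}$. From the premise of rule \ld{$TT_{RL}$} we have $\ld{\gamma}(\rd{\lambda}(\rd{x})) ~\ld{\sqsubseteq}~ \ld{\lambda}(\ld{y}) ~\ld{\sqsubseteq}~ \ld{l}$; applying the monotone $\rid{\alpha}$ yields $\rid{\alpha}(\ld{\gamma}(\rd{\lambda}(\rd{x}))) ~\rd{\sqsubseteq}~ \rid{\alpha}(\ld{l}) = \rd{m}$; and property \textbf{LC2} supplies $\rd{\lambda}(\rd{x}) ~\rd{\sqsubseteq}~ \rid{\alpha}(\ld{\gamma}(\rd{\lambda}(\rd{x})))$. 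Transitivity closes the chain. The $T_{LR}$ case uses \textbf{LC1} and the dual hypothesis $\ld{l} = \ld{\gamma}(\rd{m})$ in precisely the same way.

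The cross-domain transfer step is the only substantive obstacle I anticipate: it is exactly the place where the Lagois connection structure must do nontrivial work, and the proof reveals that the hypothesis that $(\ld{l},\rd{m})$ forms a matched budpoint pair (equivalently, joint fixed points of $\ld{\gamma} \circ \rid{\alpha}$ and $\rid{\alpha} \circ \ld{\gamma}$) is exactly what is needed to transport the adversary-visibility condition coherently across the two connecting morphisms. All remaining cases reduce either to standard Volpano-style bookkeeping or to the assumed internal security of the base transactional languages in each domain.
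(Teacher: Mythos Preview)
Your proposal is correct and follows essentially the same route as the paper's own proof: induction on the length of $s$, applying the hypothesis to the prefix $s_1$, and then a case analysis on the final phrase $p$ that dispatches intradomain cases via the Frame Lemma and simple security, and the $T_{RL}$/$T_{LR}$ cases via monotonicity of $\rid{\alpha}$ (resp.\ $\ld{\gamma}$) together with \textbf{LC2} (resp.\ \textbf{LC1}) and the budpoint hypothesis $\rd{m} = \rid{\alpha}(\ld{l})$. One minor quibble: you reuse $\ld{l_0}, \rd{m_0}$ for the type of the single phrase $p$, clashing with their use in assumption \ref{assume:1} for the type of the whole sequence $s$; the paper introduces fresh names $\ld{l_p}, \rd{m_p}$ here.
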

\begin{proof}
By induction on the length of sequence $s$.  
The base case is vacuously true. 
We now consider a sequence $s_1; p$. 
$\langle \ld{\mu},\rd{\mu} \rangle \vdash s_1 \Rightarrow^* \langle \ld{\mu_1},\rd{\mu_1} \rangle$ and 
$\langle \ld{\mu_1},\rd{\mu_1} \rangle \vdash p \Rightarrow  \langle \ld{\mu_f},\rd{\mu_f} \rangle$
and
$\langle \ld{\nu},\rd{\nu} \rangle \vdash s_1 \Rightarrow^* \langle \ld{\nu_1},\rd{\nu_1} \rangle$ and 
$\langle \ld{\nu_1},\rd{\nu_1} \rangle \vdash p \Rightarrow  \langle \ld{\nu_f},\rd{\nu_f} \rangle$
By induction hypothesis applied to $s_1$, we have 
$\ld{\mu_1}(\ld{w}) = \ld{\nu_1}(\ld{w})$ for all
$\ld{w}$ such that $\ld{\lambda(\ld{w})} ~\ld{\sqsubseteq}~\ld{l}$,
and 
$\rd{\mu_1}(\rd{w}) = \rd{\nu_1}(\rd{w})$ for all
$\rd{w}$ such that $\rd{\lambda}(\rd{w}) ~\rd{\sqsubseteq}~ \rd{m}$.
    
Let  $\langle \ld{\lambda}, \rd{\lambda} \rangle \vdash s_1: \langle \ld{l_s}, \rd{m_s} \rangle$, and $\langle \ld{\lambda}, \rd{\lambda} \rangle \vdash p: \langle \ld{l_p}, \rd{m_p} \rangle$.
We examine four cases for $p$ (the remaining cases are symmetrical). \\
\textbf{Case} $p$ is $\ld{t}$: \ \ \ 
Consider any $\ld{w}$ such that 
$\ld{\lambda}(\ld{w}) ~\ld{\sqsubseteq}~\ld{l}$.
If $\ld{w} \in \ld{X} \cup \ld{Y}$ (\textit{i.e.}, it doesn't appear in $\ld{t}$), or if $\ld{w} \in \ld{Z}$ but is not assigned to in $\ld{t}$, then 
by Lemma \ref{lemma:notassigned} and the induction hypothesis, $\ld{\mu_f}(\ld{w}) =  
\ld{\mu_1}(\ld{w}) = \ld{\nu_1}(\ld{w}) = 
\ld{\nu_f}(\ld{w})$. \\
Now suppose $\ld{z}$ is assigned to in $\ld{t}$.
From the condition $\langle \ld{\lambda}, \rd{\lambda} \rangle \vdash p: \langle \ld{l_p}, \rd{m_p} \rangle$, we know that
for all $\ld{z_1}$ assigned in $\ld{t}$,  $\ld{l_p} ~\ld{\sqsubseteq}~ \ld{\lambda}(\ld{z_1})$ and 
for all $\ld{z_1}$ read in $\ld{t}$, $\ld{\lambda}(\ld{z_1}) ~\ld{\sqsubseteq}~ \ld{l_p}$.
Now if $\ld{l} \ld{~\sqsubseteq}~ \ld{l_p}$, then since in $\ld{t}$ no variables $\ld{z_2}$ such that
$\ld{\lambda}(\ld{z_2}) ~\ld{\sqsubseteq}~\ld{l}$ are assigned to.  
Therefore  by Lemma \ref{lemma:notassigned}, $\ld{\mu_f}(\ld{w}) =  \ld{\mu_1}(\ld{w}) = \ld{\nu_1}(\ld{w}) = \ld{\nu_f}(\ld{w})$, for all $\ld{w}$ such that $\ld{\lambda}(\ld{w}) ~\ld{\sqsubseteq}~ \ld{l}$. \\
If $\ld{l_p} \ld{~\sqsubseteq}~ \ld{l}$, then for all $\ld{z_1}$ read in $\ld{t}$, $\ld{\lambda}(\ld{z_1}) ~\ld{\sqsubseteq}~ \ld{l_p}$.  
Therefore, by assumption on transaction $\ld{t}$, if any variable $\ld{z}$  is assigned an expression $\ld{e}$,
since $\ld{\mu_1}$, $\ld{\nu_1}$ are two stores such that
for all $\ld{z_1} \in \ld{Z_e} = vars(\ld{e})$,  $\ld{\mu_1}(\ld{z_1}) = \ld{\nu_1}(\ld{z_1})$, the value of $\ld{e}$ will be the same. 
By this simple security argument, after the transaction $\ld{t}$, we have $\ld{\mu_f}(\ld{z}) = \ld{\nu_f}(\ld{z})$.
Since the transaction happened entirely and atomically in domain $\ld{L}$, we do not have to worry ourselves with changes in the other domain
$\rid{M}$, and do not need to concern ourselves with
the adversarial level $\rd{m}$.\\[1ex]
\textbf{Case} $p$ is $\ld{rd}(\ld{z},\ld{y})$: \ \ \ 
Thus $\langle \ld{\lambda}, \rd{\lambda} \rangle \vdash \ld{rd}(\ld{z},\ld{y}): 
\langle \ld{\lambda}(\ld{z}), \rd{m} \rangle$,
which means $\ld{\lambda}(\ld{y}) ~\ld{\sqsubseteq}~ \ld{\lambda}(\ld{z})$.
If $\ld{l} ~\ld{\sqsubseteq}~ \ld{\lambda}(\ld{z})$,
there is nothing to prove (Lemma \ref{lemma:notassigned}, again).
If $\ld{\lambda}(\ld{z}) ~\ld{\sqsubseteq}~ \ld{l}$, then since by I.H., $\ld{\mu_1}(\ld{y}) = \ld{\nu_1}(\ld{y})$, we have $\ld{\mu_f}(\ld{z}) =
\ld{\mu_1}[\ld{z} := \ld{\mu_1}(\ld{y})](\ld{z}) =
\ld{\nu_1}[\ld{z} := \ld{\nu_1}(\ld{y})](\ld{z}) =
\ld{\nu_f}(\ld{z})$. \\[1ex]
\textbf{Case} $p$ is $\ld{wr}(\ld{x},\ld{z})$: \ \ \ 
Thus $\langle \ld{\lambda}, \rd{\lambda} \rangle \vdash \ld{wr}(\ld{x},\ld{z}): 
\langle \ld{\lambda}(\ld{x}), \rd{m} \rangle$,
which means $\ld{\lambda}(\ld{z}) ~\ld{\sqsubseteq}~ \ld{\lambda}(\ld{x})$.
If $\ld{l} ~\ld{\sqsubseteq}~ \ld{\lambda}(\ld{x})$,
there is nothing to prove (Lemma \ref{lemma:notassigned}, again).
If $\ld{\lambda}(\ld{x}) ~\ld{\sqsubseteq}~ \ld{l}$, then since by I.H., $\ld{\mu_1}(\ld{z}) = \ld{\nu_1}(\ld{z})$, we have $\ld{\mu_f}(\ld{x}) =
\ld{\mu_1}[\ld{x} := \ld{\mu_1}(\ld{z})](\ld{x}) =
\ld{\nu_1}[\ld{x} := \ld{\nu_1}(\ld{z})](\ld{x}) = 
\ld{\nu_f}(\ld{x})$. \\[1ex]
\textbf{Case} $p$ is $T_{RL}(\ld{y},\rd{x})$: \ \ \ 
So $\langle \ld{\lambda}, \rd{\lambda} \rangle \vdash T_{RL}(\ld{y},\rd{x}): 
\langle \ld{\lambda}(\ld{y}), \rd{\lambda}(\rd{x}) \rangle$, and
$\ld{\gamma}(\rd{\lambda}(\rd{x})) ~\ld{\sqsubseteq}~ \ld{\lambda}(\ld{y})$.
If $\ld{l} ~\ld{\sqsubseteq}~ \ld{\lambda}(\ld{y})$,
there is nothing to prove (Lemma \ref{lemma:notassigned}, again).
If $\ld{\lambda}(\ld{y}) ~\ld{\sqsubseteq}~\ld{l}$,
then by transitivity, $\ld{\gamma}(\rd{\lambda}(\rd{x})) ~\ld{\sqsubseteq}~ \ld{l}$.
By monotonicity of $\rid{\alpha}$: 
$\rid{\alpha}(\ld{\gamma}(\rd{\lambda}(\rd{x}))) ~\rd{\sqsubseteq}~ \rid{\alpha}(\ld{l}) ~=~ \rd{m}$
(By our assumption on $\ld{l}$ and $\rd{m}$).
But by \textbf{LC2}, $\rd{\lambda}(\rd{x}) ~\rd{\sqsubseteq}~ \rid{\alpha}(\ld{\gamma}(\rd{\lambda}(\rd{x})))$.
So by transitivity, $\rd{\lambda}(\rd{x}) ~\rd{\sqsubseteq}~ \rd{m}$.
Now, by I.H., since $\rd{\mu_1}(\rd{x}) = \rd{\nu_1}(\rd{x})$, we have
$\ld{\mu_f}(\ld{y}) =
\ld{\mu_1}[\ld{y} := \rd{\mu_1}(\rd{x})](\ld{y}) =
\ld{\nu_1}[\ld{y} := \rd{\nu_1}(\rd{x})](\ld{y}) =
\ld{\nu_f}(\ld{y})$.
\end{proof}

\section{Finding Lagois Connections}\label{sec:revisit-lagois}

Lagois connections go well beyond providing a simple and elegant framework for secure connections between independent security lattices.  
They exhibit several properties that support the creation of secure connections, negotiating secure MoUs, and maintaining secure connections when organisational changes occur in the two connected lattices. 

Let $\rid{\alpha}[\ld{L}]$ and $\ld{\gamma}[\rid{M}]$ refer to the images of the order-preserving functions $\rid{\alpha}$ and $\ld{\gamma}$, respectively. 
Further, let us define the upward closures $\uparrow \rid{m} = \{ \rid{z} \in \rid{M} | \rid{m}\  \rd{\sqsubseteq} \rid{z} \}$ and $\uparrow \ld{l} = \{ \ld{z} \in \ld{L} | \ld{l}\  \ld{\sqsubseteq} \ld{z} \}$.

The two monotone functions 
$\rid{\alpha}: \ld{L} \rightarrow \rid{M}$ and 
$\ld{\gamma}: \rid{M} \rightarrow \ld{L}$ in a Lagois connection
$(\ld{L},\rid{\alpha},\ld{\gamma}, \rid{M})$
uniquely determine each other. 
\begin{proposition}[Proposition 3.9 in \cite{MELTON1994lagoisconnections}]\label{prop:fnguniquelydetermin}
If $(\ld{L},\rid{\alpha},\ld{\gamma}, \rid{M})$ is a Lagois connection, then the functions $\rid{\alpha}$ and $\ld{\gamma}$ uniquely determine each other, in fact:
\begin{align}
\ld{\gamma}(\rid{m}) = \bigsqcup \rid{\alpha}^{-1}[\bigsqcap \{ \rid{m^*} \in \rid{\alpha}[\ld{L}] \ | \rid{m} \rd{\sqsubseteq} \rid{m^*}\}] \\
= \bigsqcup \rid{\alpha}^{-1}[\bigsqcap ( \uparrow\rid{m} \cap \rid{\alpha}[\ld{L}])]
\end{align}
and
\begin{align}
    \rid{\alpha}(\ld{l}) = \bigsqcup \ld{\gamma}^{-1}[\bigsqcap \{ \ld{l^*}\in \ld{\gamma}[\rid{M}] \ | \ld{l} \ld{\sqsubseteq} \ld{l^*}\}] \\
= \bigsqcup \ld{\gamma}^{-1}[\bigsqcap ( \uparrow\ld{l} \cap \ld{\gamma}[\rid{M}])]
\end{align}
\end{proposition}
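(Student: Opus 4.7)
The plan is to derive the formula for $\ld{\gamma}(\rid{m})$ by composing three ingredients that are already in hand: equation (\ref{TIGHT2}), Proposition \ref{prop:largest-pre}, and property \textbf{LC4}. The formula for $\rid{\alpha}(\ld{l})$ then falls out symmetrically by transposability of the Lagois connection, so I would focus the argument on the $\ld{\gamma}$ case.

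First I would observe that the set $S = \{\rid{m^*} \in \rid{\alpha}[\ld{L}] \mid \rid{m}\ \rd{\sqsubseteq}\ \rid{m^*}\} = \uparrow\! \rid{m} \cap \rid{\alpha}[\ld{L}]$ is nonempty: by \textbf{LC2}, $\rid{m}\ \rd{\sqsubseteq}\ \rid{\alpha}(\ld{\gamma}(\rid{m}))$, and $\rid{\alpha}(\ld{\gamma}(\rid{m})) \in \rid{\alpha}[\ld{L}]$, so $\rid{\alpha}(\ld{\gamma}(\rid{m})) \in S$. Equation (\ref{TIGHT2}) then gives at once that $\bigsqcap S = \rid{\alpha}(\ld{\gamma}(\rid{m}))$. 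Since this meet is itself a member of $\rid{\alpha}[\ld{L}]$, the preimage $\rid{\alpha}^{-1}(\bigsqcap S)$ is nonempty.

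Next I would apply Proposition \ref{prop:largest-pre} to the element $\rid{\alpha}(\ld{\gamma}(\rid{m})) \in \rid{\alpha}[\ld{L}]$: its preimage under $\rid{\alpha}$ has a largest member, namely $\ld{\gamma}(\rid{\alpha}(\ld{\gamma}(\rid{m})))$, which by \textbf{LC4} equals $\ld{\gamma}(\rid{m})$. Since the join of a set agrees with its largest element when one exists, $\bigsqcup \rid{\alpha}^{-1}(\rid{\alpha}(\ld{\gamma}(\rid{m}))) = \ld{\gamma}(\rid{m})$. Chaining $\bigsqcap S = \rid{\alpha}(\ld{\gamma}(\rid{m}))$ with this equality yields the claimed expression. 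The uniqueness assertion then drops out: the right-hand side uses only $\rid{\alpha}$ together with the two lattice structures, so $\rid{\alpha}$ pins down $\ld{\gamma}$ completely; the symmetric formula, obtained by swapping the roles of $\rid{\alpha}$ and $\ld{\gamma}$ in the transposed connection $(\rid{M},\ld{\gamma},\rid{\alpha},\ld{L})$, shows the converse.

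I do not anticipate a genuine obstacle. The only point requiring mild care is verifying that $\bigsqcap S$ is well-defined and coincides with the meet appearing in equation (\ref{TIGHT2}); this is guaranteed by Proposition \ref{prop:meets}(1) applied inside the image $\rid{\alpha}[\ld{L}]$ (via the transposed connection), which ensures meets computed within the image agree with meets in the ambient lattice. All the substantive work was already done in establishing equations (\ref{TIGHT1})--(\ref{TIGHT2}) and Proposition \ref{prop:largest-pre}; the present proposition essentially repackages those facts as a closed-form, $\rid{\alpha}$-only expression for $\ld{\gamma}$.
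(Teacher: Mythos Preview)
Your argument is correct. The paper itself does not supply a proof of this proposition; it is imported verbatim as Proposition 3.9 from Melton \textit{et al.}~\cite{MELTON1994lagoisconnections}, so there is no in-paper proof to compare against. That said, your route is exactly the natural one given the surrounding material: you correctly chain equation~(\ref{TIGHT2}) (which identifies $\bigsqcap S$ as $\rid{\alpha}(\ld{\gamma}(\rid{m}))$), Proposition~\ref{prop:largest-pre} (which identifies the largest preimage), and \textbf{LC4} (which collapses $\ld{\gamma}\circ\rid{\alpha}\circ\ld{\gamma}$ to $\ld{\gamma}$). The nonemptiness check via \textbf{LC2} and the appeal to Proposition~\ref{prop:meets} for well-definedness of the meet are the right hygiene steps, and the transposability remark cleanly handles the symmetric formula. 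Nothing is missing.
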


\subsection{Negotiating an MoU when given one order-preserving map}\label{sec:lagois-adjoint}

Suppose we are given two security lattices $\ld{L}$ and $\rid{M}$ and an order-preserving function $\rid{\alpha}: \ld{L} \rightarrow \rid{M}$, we can \textit{find a Lagois adjoint} $\ld{\gamma}: \rid{M} \rightarrow \ld{L}$, i.e., an order-preserving function such that $(\ld{L},\rid{\alpha},\ld{\gamma}, \rid{M})$ is a Lagois connection, and thus secure information flow between $\ld{L}$ and $\rid{M}$ is ensured. 
\begin{proposition}[Proposition 3.10 in \cite{MELTON1994lagoisconnections}]\label{prop:existlagoisadjoint}
Let $\ld{L}$ and $\rid{M}$ be posets. Then an order-preserving function $\rid{\alpha}: \ld{L} \rightarrow \rid{M}$ has a Lagois adjoint $\ld{\gamma}: \rid{M} \rightarrow \ld{L}$ iff:
\begin{enumerate}
    \item $\rid{\alpha}^{-1}(\rid{m})$ has a largest member, for all $\rid{m} \in \rid{\alpha}[\ld{L}]$.
    \item $\uparrow \rid{m} \cap \rid{\alpha}[\ld{L}]$ has a smallest member, for all $\rid{m} \in \rid{M}$.
    \item The restriction of $\rid{\alpha}$ from $\{ \bigsqcup \rid{\alpha}^{-1}(\rid{m}) | \rid{m} \in \rid{\alpha}[\ld{L}]\}$ to its image $\rid{\alpha}[\ld{L}]$ is an order isomorphism. 
\end{enumerate}
\end{proposition}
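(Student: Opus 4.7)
The plan is to prove both implications separately, invoking Proposition~\ref{prop:largest-pre} and the discussion that follows it. For the forward direction, assume a Lagois adjoint $\ld{\gamma}$ exists. Then (1) is immediate from Proposition~\ref{prop:largest-pre}, since $\ld{\gamma}(\rid{m})$ is itself the largest element of $\rid{\alpha}^{-1}(\rid{m})$ for every $\rid{m} \in \rid{\alpha}[\ld{L}]$. For (2), I would show that $\rid{\alpha}(\ld{\gamma}(\rid{m}))$ is the smallest element of $\uparrow \rid{m} \cap \rid{\alpha}[\ld{L}]$: membership follows from \textbf{LC2}, and minimality follows by taking any $\rid{m'} = \rid{\alpha}(\ld{l'})$ in $\uparrow \rid{m} \cap \rid{\alpha}[\ld{L}]$, applying $\ld{\gamma}$ and then $\rid{\alpha}$ to $\rid{m} \rd{\sqsubseteq} \rid{m'}$, and collapsing the right-hand side via \textbf{LC3} to obtain $\rid{\alpha}(\ld{\gamma}(\rid{m})) \rd{\sqsubseteq} \rid{m'}$. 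For (3), the set $\{\bigsqcup \rid{\alpha}^{-1}(\rid{m}) \mid \rid{m} \in \rid{\alpha}[\ld{L}]\}$ coincides with $\ld{\gamma}[\rid{\alpha}[\ld{L}]]$, since by (1) each join is attained and equals $\ld{\gamma}(\rid{m})$; the restriction of $\rid{\alpha}$ to this set is inverse to $\ld{\gamma}$ on $\rid{\alpha}[\ld{L}]$ by \textbf{LC3}, and both are monotone, so it is an order isomorphism, consistent with the remark following Proposition~\ref{prop:largest-pre} that $\ld{\gamma}[\rid{M}]$ and $\rid{\alpha}[\ld{L}]$ are isomorphic lattices.

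For the backward direction, I would define $\ld{\gamma}$ explicitly, matching the formula in Proposition~\ref{prop:fnguniquelydetermin}: for each $\rid{m} \in \rid{M}$, let $\rid{m_0}$ be the smallest element of $\uparrow \rid{m} \cap \rid{\alpha}[\ld{L}]$ (which exists by (2)) and set $\ld{\gamma}(\rid{m})$ to be the largest member of $\rid{\alpha}^{-1}(\rid{m_0})$ (which exists by (1)). Then \textbf{LC2} is immediate since $\rid{\alpha}(\ld{\gamma}(\rid{m})) = \rid{m_0} \in \uparrow \rid{m}$. For \textbf{LC1}, when $\rid{m} = \rid{\alpha}(\ld{l})$, the element $\rid{\alpha}(\ld{l})$ is already its own minimum in $\uparrow \rid{\alpha}(\ld{l}) \cap \rid{\alpha}[\ld{L}]$, so $\ld{\gamma}(\rid{\alpha}(\ld{l}))$ is the largest element of $\rid{\alpha}^{-1}(\rid{\alpha}(\ld{l}))$, which dominates $\ld{l}$; \textbf{LC3} then follows since applying $\rid{\alpha}$ to that largest element recovers $\rid{\alpha}(\ld{l})$. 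For \textbf{LC4}, because $\rid{\alpha}(\ld{\gamma}(\rid{m})) = \rid{m_0}$ already lies in $\rid{\alpha}[\ld{L}]$, its own minimum in $\uparrow \rid{m_0} \cap \rid{\alpha}[\ld{L}]$ is itself, so $\ld{\gamma}(\rid{m_0})$ is once more the largest element of $\rid{\alpha}^{-1}(\rid{m_0})$, which is exactly $\ld{\gamma}(\rid{m})$.

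The main obstacle is showing that the $\ld{\gamma}$ just defined is order-preserving, and this is precisely where condition (3) is indispensable. Suppose $\rid{m_1} \rd{\sqsubseteq} \rid{m_2}$ in $\rid{M}$. Then $\uparrow \rid{m_2} \cap \rid{\alpha}[\ld{L}] \subseteq \uparrow \rid{m_1} \cap \rid{\alpha}[\ld{L}]$, so the minimum $\rid{m_{2,0}}$ of the former set lies in the latter, forcing $\rid{m_{1,0}} \rd{\sqsubseteq} \rid{m_{2,0}}$. Both minima lie in $\rid{\alpha}[\ld{L}]$, and $\ld{\gamma}(\rid{m_i}) = \bigsqcup \rid{\alpha}^{-1}(\rid{m_{i,0}})$ belongs to the set $\{\bigsqcup \rid{\alpha}^{-1}(\rid{m}) \mid \rid{m} \in \rid{\alpha}[\ld{L}]\}$. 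Condition (3) then grants that the inverse of the order-isomorphism $\rid{\alpha}$ between this set and $\rid{\alpha}[\ld{L}]$ is order-preserving, lifting $\rid{m_{1,0}} \rd{\sqsubseteq} \rid{m_{2,0}}$ to $\ld{\gamma}(\rid{m_1}) \ld{\sqsubseteq} \ld{\gamma}(\rid{m_2})$. Without (3), the joins of the preimages need not be comparable even when their $\rid{\alpha}$-images are, so (3) is exactly the missing ingredient that completes the reverse direction.
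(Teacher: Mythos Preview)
The paper does not supply its own proof of this proposition: it is imported verbatim as Proposition~3.10 from \cite{MELTON1994lagoisconnections} and only applied, not reproved. So there is no in-paper argument to compare against.

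That said, your proof is correct and is the natural one. The forward direction cleanly extracts (1)--(3) from the Lagois axioms via Proposition~\ref{prop:largest-pre} and \textbf{LC2}/\textbf{LC3}; the backward direction defines $\ld{\gamma}$ exactly as Proposition~\ref{prop:fnguniquelydetermin} suggests and verifies \textbf{LC1}--\textbf{LC4} directly. Your identification of condition~(3) as precisely what is needed to make $\ld{\gamma}$ order-preserving is the crux of the converse, and your argument there (passing from $\rid{m_{1,0}} \rd{\sqsubseteq} \rid{m_{2,0}}$ in $\rid{\alpha}[\ld{L}]$ back through the order-isomorphism) is sound. One small point worth making explicit in the forward direction for~(3): you should also note that $\ld{\gamma}[\rid{\alpha}[\ld{L}]] = \ld{\gamma}[\rid{M}]$ (a consequence of \textbf{LC4}), so that the domain of the restricted $\rid{\alpha}$ really is the full image $\ld{\gamma}[\rid{M}]$ referenced in the surrounding discussion; but this does not affect the validity of what you wrote.
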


Consider the example of the order-preserving function $\rid{\alpha}$ in Figure \ref{fig:WnD}. 
First we check whether the given function  $\rid{\alpha}$ has a Lagois adjoint or not, using Proposition \ref{prop:existlagoisadjoint}. 
The conditions on $\rid{\alpha}$ are obviously satisfied in our example. 
We then use the third condition of Proposition \ref{prop:existlagoisadjoint} to identify the order-isomorphic substructures of the participating security lattices. 
The security classes which form such an order-isomorphic structure for our example are shown in blue/red in Figure \ref{fig:findingLCfromalpha}.

We allow the information to flow back to domain $\ld{L}$  by mapping first only the \textit{budpoints} in $\rid{M}$ \textit{to the budpoints} in $\ld{L}$, while respecting the conditions for Lagois connections, i.e., $\mathbf{LC3}$ and $\mathbf{LC4}$. 
These mappings are shown as dotted brown arrows in Figure \ref{fig:findingLCfromalpha}.
Then we invoke Proposition \ref{prop:fnguniquelydetermin} 
to complete the mappings of function $\ld{\gamma}$. 
The unmapped security classes in domain $\rid{M}$, say $\rid{m_1}$, are connected to the greatest of those elements of $\ld{L}$ which are mapped by $\rid{\alpha}$ to the least budpoint in $\rid{M}$ greater than $\rid{m_1}$, as shown by solid brown arrows in Figure \ref{fig:findingLCfromalpha2}.

\begin{figure}[!ht]
\centering
    \begin{minipage}[t]{.45\textwidth}
    \begin{tikzpicture}[framed,->,node distance=1cm,on grid]
    \title{W and D}
    \node(T2)   {{\color{red}\scriptsize$\top 2$}};
    \node(T1)  [xshift=-3cm]  {{\color{blue} \scriptsize$\top 1$}};
    \node(Dir1) [below of = T1] {{\color{blue}\scriptsize$CollegePrincipal$}};
    \node(D1) [below left of = Dir1] {\scriptsize$Dean\ (F)$};
    \node(F1) [below of = D1] {\scriptsize$Faculty$};
    \node(DS1) [xshift=-2.3cm,yshift=-1.7cm] {\scriptsize$Dean\ (S)$};
    \node(S1) [below right of = F1] {{\color{blue}\scriptsize$Student$}};
    \node(B1)  [below of=S1]  {{\color{blue}\scriptsize$\bot 1$}};
    \node(Sec2)  [below of = T2] {\scriptsize$Chancellor$};
    \node[align=center] (AS2)  [below of=Sec2] {\scriptsize $Vice$ \\ \scriptsize $Chancellor$};
    \node(Dir2)  [below of=AS2] {{\color{red}\scriptsize$Dean(Colleges)$}};
    \node(E2)  [below of=Dir2] {{\color{red}\scriptsize$Univ.Fac.$}};
    \node(B2)  [below of=E2]  {{\color{red}\scriptsize$\bot 2$}};
    \draw [OliveGreen, thick] (F1) to (E2);
    \draw [OliveGreen, ->, thick] (S1) to (E2);
    \draw [OliveGreen, thick] (B1) to (B2);
    \draw [OliveGreen, thick] (T1) to (T2);
    \draw [OliveGreen, thick] (Dir1) [bend left = 10] to (Dir2);
    \draw [OliveGreen, thick] (D1) to (Dir2);
    \draw [OliveGreen, thick] (DS1) to (Dir2);
    \draw [Fuchsia, ->] (Dir1) to (T1);
    \draw [Fuchsia, ->] (D1) to (Dir1);
    \draw [Fuchsia, ->] (F1) to (D1);
    \draw [Fuchsia, ->] (S1) to (F1);
    \draw [Fuchsia, ->] (S1) to (DS1);
    \draw [Fuchsia, ->] (DS1) to (Dir1);
    \draw [Fuchsia, ->] (B1) to (S1);
    \draw [Fuchsia, ->] (B2) to (E2);
    \draw [Fuchsia, ->] (E2) to (Dir2);
    \draw [Fuchsia, ->] (Dir2) to (AS2);
    \draw [Fuchsia, ->] (AS2) to (Sec2);
    \draw [Fuchsia, ->] (Sec2) to (T2);
    
    \draw [brown,  dotted, thick, <-] (F1)[bend left = 10] to (E2);
    \draw [brown,  dotted, thick,<-] (B1)[bend left = 15] to (B2);
    \draw [brown,  dotted, thick, <-] (T1) [bend right = 15] to (T2);
    \draw [brown,  dotted, thick, <-] (Dir1) [bend left = 20] to (Dir2);
    
    \draw (-3,-2.5)[blue] ellipse (1.4cm and 2.7cm);
    \draw (0,-2.5)[red] ellipse (1.2cm and 2.7cm);
    
    \draw [dotted] (0,-1.4) ellipse (0.8cm and 1.2cm);
    \draw [dotted] (-3,-1.7) ellipse (1.35cm and 0.3cm);
    \draw [dotted] (-3,-3.2) ellipse (0.5cm and 0.55cm);
    
    \end{tikzpicture}
    \caption{\small Connecting budpoints while finding a viable Lagois \textit{adjoint} for a given order-preserving function $\rid{\alpha}$ (from Figure \ref{fig:WnD}). } \label{fig:findingLCfromalpha}
    \end{minipage}
    \quad \quad
    \begin{minipage}[t]{.45\textwidth}
    \begin{tikzpicture}[framed,->,node distance=1cm,on grid]
    \title{W and D}
    \node(T2)   {{\color{red}\scriptsize$\top 2$}};
    \node(T1)  [xshift=-3cm]  {{\color{blue} \scriptsize$\top 1$}};
    \node(Dir1) [below of = T1] {{\color{blue}\scriptsize$CollegePrincipal$}};
    \node(D1) [below left of = Dir1] {\scriptsize$Dean\ (F)$};
    \node(F1) [below of = D1] {\scriptsize$Faculty$};
    \node(DS1) [xshift=-2.3cm,yshift=-1.7cm] {\scriptsize$Dean\ (S)$};
    \node(S1) [below right of = F1] {{\color{blue}\scriptsize$Student$}};
    \node(B1)  [below of=S1]  {{\color{blue}\scriptsize$\bot 1$}};
    \node(Sec2)  [below of = T2] {\scriptsize$Chancellor$};
    \node[align=center] (AS2)  [below of=Sec2] {\scriptsize $Vice$ \\ \scriptsize $Chancellor$};
    \node(Dir2)  [below of=AS2] {{\color{red}\scriptsize$Dean(Colleges)$}};
    \node(E2)  [below of=Dir2] {{\color{red}\scriptsize$Univ.Fac.$}};
    \node(B2)  [below of=E2]  {{\color{red}\scriptsize$\bot 2$}};
    \draw [densely dashed, black,  thick, <->] (F1) to (E2);
    \draw [OliveGreen, ->, thick] (S1) to (E2);
    \draw [densely dashed, black,  thick,<->] (B1) to (B2);
    \draw [densely dashed, black,  thick, <->] (T1) to (T2);
    \draw [densely dashed, black,  thick, <->] (Dir1) [bend left = 10] to (Dir2);
    \draw [OliveGreen, thick] (D1) to (Dir2);
    \draw [OliveGreen, thick] (DS1) to (Dir2);
    \draw [brown, thick] (AS2) to (T1);
    \draw [brown, thick] (Sec2) to (T1);
    \draw [Fuchsia, ->] (Dir1) to (T1);
    \draw [Fuchsia, ->] (D1) to (Dir1);
    \draw [Fuchsia, ->] (F1) to (D1);
    \draw [Fuchsia, ->] (S1) to (F1);
    \draw [Fuchsia, ->] (S1) to (DS1);
    \draw [Fuchsia, ->] (DS1) to (Dir1);
    \draw [Fuchsia, ->] (B1) to (S1);
    \draw [Fuchsia, ->] (B2) to (E2);
    \draw [Fuchsia, ->] (E2) to (Dir2);
    \draw [Fuchsia, ->] (Dir2) to (AS2);
    \draw [Fuchsia, ->] (AS2) to (Sec2);
    \draw [Fuchsia, ->] (Sec2) to (T2);
    
    \draw (-3,-2.5)[blue] ellipse (1.4cm and 2.7cm);
    \draw (0,-2.5)[red] ellipse (1.2cm and 2.7cm);
    
    \draw [dotted] (0,-1.4) ellipse (0.8cm and 1.2cm);
    \draw [dotted] (-3,-1.7) ellipse (1.35cm and 0.3cm);
    \draw [dotted] (-3,-3.2) ellipse (0.5cm and 0.55cm);
    
    \end{tikzpicture}
    \caption{\small Defining a viable \textit{Lagois adjoint} for a given $\rid{\alpha}$ (in Figure \ref{fig:WnD})  
    \label{fig:findingLCfromalpha2}} 
    \end{minipage}
\end{figure}

\subsection{Negotiating an MoU \textit{ab initio}} \label{sec:lagois-abinitio}

In the absence of any constraints on $\rid{\alpha}$ and $\ld{\gamma}$, it is always possible to define a Lagois connection between two finite security class lattices, e.g., by mapping all elements to the topmost security class of the other lattice. 
But such a Lagois connection is of little use to the participating organisations as the shared information, being in the topmost security class, is inaccessible to most principals in the respective organisations.  An operative result for creating viable\footnote{We use term ``viable" informally to mean that the natural constraints of the application are taken into account, and that the security level of information is escalated only to the extent required, thus not making its access overly restricted.} secure MoUs is Theorem \ref{lemma:existenceLC}.
\begin{theorem}[Theorem 3.20 in \cite{MELTON1994lagoisconnections}] \label{lemma:existenceLC} Let $(\ld{L},\ld{\sqsubseteq})$ and $(\rid{M},\rd{\sqsubseteq})$ be posets. There is a Lagois connection between $(\ld{L},\ld{\sqsubseteq})$ and $(\rid{M},\rd{\sqsubseteq})$ if and only if the following four conditions hold:
 \begin{enumerate} 
    \item There exist order-isomorphic subsets $\ld{L^*} \subseteq \ld{L}$ and $\rid{{M^*}} \subseteq \rid{M}$.
    \item There exists equivalence relations $\ld{\sim_L}$ on $\ld{L}$ and $\rd{\sim_M}$ on $\rid{M}$ 
    such that $\ld{L^*}$ is a system of representatives for $\ld{\sim_L}$ and $\rid{{M^*}}$ is a system of representatives for $\rd{\sim_M}$ respectively\footnote{The members of $\ld{L^*}$ and $\rid{{M^*}}$ are called budpoints and the equivalence classes are called blossoms.}.
    \item if $\ld{l_1} \in \ld{L}$ and $\ \ld{l^*} \in \ld{L^*}$ with $\ld{l_1} \ld{\sim_L} \ld{l^*}$, then $\ld{l_1} \ld{\sqsubseteq} \ld{l^*}$; and if $\rid{m_1} \in \rid{M}$ and $\rid{m^*} \in \rid{M^*}$ with $\rid{m_1} \rd{\sim _M}\ \rid{m^*}$, then $\rid{m_1} \rd{\sqsubseteq} \rid{m^*}$.
    \item If $\ld{l_1} \ld{\sqsubseteq} \ld{l_2}$ in $\ld{L}$ and $\ld{l^*_1}, \ld{l^*_2} \in \ld{L^*}$ with $\ld{l_1} \ld{\sim_L} \ld{l^*_1}$ and $\ld{l_2} \ld{\sim_L} \ld{l^*_2}$, then $\ld{l^*_1} \ld{\sqsubseteq} \ld{l^*_2}$; and if $\rid{m_1} \rd{\sqsubseteq} \rid{m_2}$ and $\rid{m^*_1}, \rid{m^*_2} \in \rid{M^*}$ with $\rid{m_1} \rd{\sim_M} \rid{m^*_1}$ and $\rid{m_2} \rd{\sim_M} \rid{m^*_2}$, then $\rid{m^*_1} \rd{\sqsubseteq} \rid{m^*_2}$.
    \end{enumerate}
\end{theorem}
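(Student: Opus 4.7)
The plan is to prove the biconditional by extracting the budpoint structure from an existing Lagois connection for the forward direction, and by constructing $\rid{\alpha}, \ld{\gamma}$ from the combinatorial data in the converse.

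For the forward direction, I would take $\ld{L^*} = \ld{\gamma}[\rid{M}]$, $\rid{M^*} = \rid{\alpha}[\ld{L}]$, and let $\ld{\sim_L}, \rd{\sim_M}$ be the kernel equivalences of $\rid{\alpha}$ and $\ld{\gamma}$ respectively. Condition~(1) is already noted in the discussion following Proposition~\ref{prop:largest-pre}: \textbf{LC3} and \textbf{LC4} force $\rid{\alpha}|_{\ld{L^*}}$ and $\ld{\gamma}|_{\rid{M^*}}$ to be mutually inverse order-isomorphisms. Proposition~\ref{prop:largest-pre} itself yields condition~(3) (each budpoint is the greatest member of its blossom), whence~(2) is immediate since every $\ld{l}$ has representative $\ld{\gamma}(\rid{\alpha}(\ld{l})) \in \ld{L^*}$. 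For~(4), chaining monotonicity of $\rid{\alpha}$ and $\ld{\gamma}$ gives $\ld{l_1} \ld{\sqsubseteq} \ld{l_2} \implies \ld{\gamma}(\rid{\alpha}(\ld{l_1})) \ld{\sqsubseteq} \ld{\gamma}(\rid{\alpha}(\ld{l_2}))$, and the outer expressions are precisely $\ld{l_1^*}$ and $\ld{l_2^*}$.

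For the converse, I would use the supplied order-isomorphism $\phi \colon \ld{L^*} \to \rid{M^*}$ together with the transversal property~(2) to define
\[ \rid{\alpha}(\ld{l}) = \phi(\ld{l^*}), \qquad \ld{\gamma}(\rid{m}) = \phi^{-1}(\rid{m^*}), \]
where $\ld{l^*}, \rid{m^*}$ denote the unique representatives of $\ld{l}, \rid{m}$. Then I would check, in order: order-preservation of $\rid{\alpha}, \ld{\gamma}$ (lift $\ld{\sqsubseteq}$ to budpoints via condition~(4), then apply $\phi$); \textbf{LC1} (since $\phi(\ld{l^*}) \in \rid{M^*}$ is its own representative, $\ld{\gamma}(\rid{\alpha}(\ld{l})) = \phi^{-1}(\phi(\ld{l^*})) = \ld{l^*}$, and condition~(3) gives $\ld{l} \ld{\sqsubseteq} \ld{l^*}$), with \textbf{LC2} symmetric; and \textbf{LC3}, \textbf{LC4}, which reduce to the idempotency $(\ld{l^*})^* = \ld{l^*}$, giving $\rid{\alpha}(\ld{\gamma}(\rid{\alpha}(\ld{l}))) = \rid{\alpha}(\ld{l^*}) = \phi(\ld{l^*}) = \rid{\alpha}(\ld{l})$.

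The main obstacle I foresee is establishing order-preservation of $\rid{\alpha}$ (and dually $\ld{\gamma}$) in the converse direction. This is precisely where condition~(4) is indispensable: without it, the map $\ld{l} \mapsto \ld{l^*}$ could fail to be monotone, and $\rid{\alpha}$ would not inherit order-preservation from $\phi$. Once this is in hand, verifying the Lagois axioms is essentially bookkeeping, turning on the single observation that every element of $\rid{M^*} = \phi[\ld{L^*}]$ is self-representing, so each round-trip composition stabilises after one application of $\phi$ or $\phi^{-1}$.
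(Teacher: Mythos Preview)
The paper does not supply its own proof of this theorem; it is quoted with attribution to \cite{MELTON1994lagoisconnections}. That said, your proposal is correct and matches the structural facts the paper does record in the surrounding text: the forward direction is precisely the content of the discussion after Proposition~\ref{prop:largest-pre} (where $\ld{L^*} = \ld{\gamma}[\rid{M}]$, $\rid{M^*} = \rid{\alpha}[\ld{L}]$, and the kernel equivalences $\ld{\thicksim_L}, \rid{\thicksim_M}$ are introduced with budpoints as largest representatives), and your converse is exactly the construction behind Corollary~\ref{cor:existenceLC2}, with $\ld{l} \mapsto \ld{l^*}$ serving as the closure operator $\ld{c}$ and $\phi$ as the isomorphism $\rid{h}$, so that $\rid{\alpha} = \rid{h}\ld{c}$.
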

\begin{corollary}[Corollary 3.21 in \cite{MELTON1994lagoisconnections}] \label{cor:existenceLC2}
    Let $\ld{L}$ and $\rid{M}$ be posets, and $\ld{c}: \ld{L} \rightarrow \ld{L}$ and $\rid{i}: \rid{M} \rightarrow \rid{M}$ be closure operators such that $\ld{c}[\ld{L}]$ and $\rid{i}[\rid{M}]$ are isomorphic (with their inherited orders). If $\rid{h}: \ld{c}[\ld{L}] \rightarrow \rid{i}[\rid{M}]$ is such an isomorphism, then $(\ld{L}, \rid{h}\ld{c}, \ld{h^{-1}}\rid{i}, \rid{M})$ is a Lagois connection.
    \end{corollary}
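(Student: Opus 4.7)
The plan is to verify the four Lagois axioms $\textbf{LC1}$--$\textbf{LC4}$ directly for $\rid{\alpha} \equiv \rid{h} \circ \ld{c}$ and $\ld{\gamma} \equiv \ld{h^{-1}} \circ \rid{i}$. Both maps are order-preserving, since each is a composition of monotone functions: closure operators are monotone by definition, and $\rid{h}$ together with its inverse $\ld{h^{-1}}$ is an order isomorphism between $\ld{c}[\ld{L}]$ and $\rid{i}[\rid{M}]$.

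The single observation that will drive the entire argument is that a closure operator acts as the identity on its own image. Combined with the fact that $\rid{h}$ carries $\ld{c}[\ld{L}]$ bijectively onto $\rid{i}[\rid{M}]$, this yields the two pointwise identities $\ld{\gamma} \circ \rid{\alpha} = \ld{c}$ and $\rid{\alpha} \circ \ld{\gamma} = \rid{i}$. For the first: evaluating $\ld{h^{-1}}(\rid{i}(\rid{h}(\ld{c}(\ld{l}))))$ at any $\ld{l} \in \ld{L}$, the element $\rid{h}(\ld{c}(\ld{l}))$ lies in $\rid{i}[\rid{M}]$, so $\rid{i}$ fixes it, after which $\ld{h^{-1}} \circ \rid{h}$ is the identity on $\ld{c}[\ld{L}]$, leaving $\ld{c}(\ld{l})$. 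The second identity is symmetric.

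Given these two identities, $\textbf{LC1}$ and $\textbf{LC2}$ reduce immediately to the extensivity of the closure operators: $\ld{l} \ld{\sqsubseteq} \ld{c}(\ld{l}) = \ld{\gamma}(\rid{\alpha}(\ld{l}))$ and dually $\rid{m} \rd{\sqsubseteq} \rid{i}(\rid{m}) = \rid{\alpha}(\ld{\gamma}(\rid{m}))$. For $\textbf{LC3}$, substitute to rewrite $\rid{\alpha} \circ \ld{\gamma} \circ \rid{\alpha}$ as $\rid{i} \circ \rid{\alpha}$; since the image of $\rid{\alpha}$ factors through $\rid{h}$ and so lies inside $\rid{i}[\rid{M}]$, the outer $\rid{i}$ is the identity there, giving $\rid{\alpha}$. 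Axiom $\textbf{LC4}$ is dual, using $\ld{\gamma} \circ \rid{\alpha} \circ \ld{\gamma} = \ld{c} \circ \ld{\gamma}$ and the fact that the image of $\ld{\gamma}$ lies in $\ld{c}[\ld{L}]$, on which $\ld{c}$ acts as the identity.

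I do not anticipate any substantive obstacle. The only bookkeeping step is to check that $\ld{h^{-1}}$ is legitimately applied only to elements of its domain $\rid{i}[\rid{M}]$ (and symmetrically $\rid{h}$ only to elements of $\ld{c}[\ld{L}]$), which is automatic because every invocation is preceded by an application of the corresponding closure operator. Once that is in place, the entire proof rests on two clean facts: closure operators fix their images, and $\rid{h}$ identifies those images order-isomorphically -- which is precisely what is needed to shuttle fixed points back and forth between $\ld{L}$ and $\rid{M}$ and close each composition back to $\rid{\alpha}$ or $\ld{\gamma}$.
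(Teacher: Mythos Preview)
Your proof is correct. The paper does not supply its own proof of this corollary; it merely quotes it from \cite{MELTON1994lagoisconnections}, where it is stated as an immediate consequence of Theorem~3.20 (reproduced here as Theorem~\ref{lemma:existenceLC}). In that spirit, the intended route would be to instantiate the data of Theorem~\ref{lemma:existenceLC}: take $\ld{L^*} = \ld{c}[\ld{L}]$ and $\rid{M^*} = \rid{i}[\rid{M}]$ as the order-isomorphic subsets, take the equivalence relations $\ld{\sim_L}$ and $\rd{\sim_M}$ to be the kernels of $\ld{c}$ and $\rid{i}$, and observe that conditions (1)--(4) of that theorem are exactly extensivity, idempotence, and monotonicity of the two closure operators together with the hypothesis that $\rid{h}$ is an order isomorphism. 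Your direct verification of \textbf{LC1}--\textbf{LC4} bypasses this reduction, and in fact your key identities $\ld{\gamma}\circ\rid{\alpha} = \ld{c}$ and $\rid{\alpha}\circ\ld{\gamma} = \rid{i}$ make the argument shorter and more transparent than going through the existence theorem. Either way the content is the same: the closure operators supply the budpoint structure and the isomorphism $\rid{h}$ supplies the cross-domain pairing.
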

Theorem \ref{lemma:existenceLC} suggests the following method, which we illustrate using an example in 
Figures \ref{fig:LCfrmScratchA}-\ref{fig:LCfrmScratchWidClosures}, 
where an agreement is negotiated between \textit{Dorm-Life} and \textit{College}.
\begin{enumerate}
    \item Find the maximal order-isomorphic substructures $\ld{L^*} \subseteq \ld{L}$ and $\rid{M^*} \subseteq \rid{M}$ (which include the transfer classes
    [\ld{\textit{Student}}-\rid{\textit{Student}}, \ld{\textit{HouseMaster}}-\rid{\textit{Dean(S)}}], the bottom-most and the topmost class) in the given security class lattices $(\ld{L},\ld{\sqsubseteq})$ and $(\rid{M},\rd{\sqsubseteq})$. 
     These classes in the order-isomorphic structures of the two domains are  indicated in bold in Figure \ref{fig:LCfrmScratchA}.
    \item Identify an equivalence relation $\ld{\sim_L}$ such that $\ld{L^*}$ is a system of representatives for $\ld{\sim_L}$. Similarly, identify an equivalence relation $\rd{\sim_M}$ such that $\rid{M^*}$ is a system of representatives for $\rd{\sim _M}$. The equivalence relations should be such that the two conditions (3) and (4) of Theorem \ref{lemma:existenceLC} hold, which essentially ensure that one is allowed to reason about information flows from all the security classes in the given security class lattices.
 
    The members of $\ld{L^*}$ (resp. $\rid{M^*}$), called budpoints,
     play a significant role in delineating the connection between the transfer classes in the two lattices.
    We show the equivalence classes for our example in Figure \ref{fig:LCfrmScratchB}, where the budpoints are the security classes coloured blue and red in the respective domains.
    \item Now interconnect the budpoints of both domains to each other while respecting the inherited order, forming an isomorphic structure, as done in  Figure \ref{fig:LCfrmScratchC} for our example.
    \item 
    Now keeping in mind Corollary \ref{cor:existenceLC2},
    we define closure operators $\ld{c}: \ld{L} \rightarrow \ld{L}$ and $\rd{i}: \rid{M} \rightarrow \rid{M}$, using the equivalence relations $\ld{\sim_L}$ and $\rd{\sim _M}$, such that $\ld{c}[\ld{L}]$ and $\rd{i}[\rid{M}]$ are isomorphic (with their inherited orders).  Figure \ref{fig:LCfrmScratchWidClosures} shows the closure operators for our example.
    \item Thereafter, define an increasing Lagois connection using Corollary \ref{cor:existenceLC2} as $(\ld{L}, \rid{h}\ld{c}, \ld{h^{-1}}\rd{i}, \rid{M})$. For our example, the Lagois connection is defined as shown in Figure \ref{fig:LCfrmScratchD}, with green arrows completing the mapping from \textit{Dorm-Life} and brown arrows completing the mapping from \textit{College}.
\end{enumerate}

\begin{figure}[!ht]
    \begin{minipage}[t]{.45\textwidth}
    \begin{tikzpicture}[framed,->,node distance=1cm,on grid]
    \title{W and D}
    \node(T1)    { \scriptsize$\mathbf{\top 1}$};
    \node(Dir1) [below of = T1] {\scriptsize$CollegePrincipal$};
    \node(D1) [below left of = Dir1] {\scriptsize$Dean\ (F)$};
    \node(F1) [below of = D1] {\scriptsize$Faculty$};
    \node(DS1) [below right of = Dir1] {\scriptsize$\mathbf{Dean\ (S)}$};
    \node(S1) [below right of = F1] {\scriptsize$\mathbf{Student}$};
    \node(B1)  [below of=S1]  {\scriptsize$\mathbf{\bot 1}$};
    \draw [Fuchsia, ->] (Dir1) to (T1);
    \draw [Fuchsia, ->] (D1) to (Dir1);
    \draw [Fuchsia, ->] (F1) to (D1);
    \draw [Fuchsia, ->] (S1) to (F1);
    \draw [Fuchsia, ->] (S1) to (DS1);
    \draw [Fuchsia, ->] (DS1) to (Dir1);
    \draw [Fuchsia, ->] (B1) to (S1);
    
    \draw (0,-2.5)[red] ellipse (1.4cm and 2.7cm);
    

   \node(T1')  [xshift=-3cm]  { \scriptsize$\mathbf{\top 0}$};
   
    \node(Dir1') [below of = T1'] {\scriptsize$\mathbf{HouseMaster}$};
    \node(D1') [below left of = Dir1'] {\scriptsize$Caretaker$};
    \node[align=center](F1') [below of = D1'] {\scriptsize $Assistant$};
    \node[align=center](DS1') [below right of = Dir1'] {\scriptsize $Dining-$ \\ \scriptsize $Manager$};
    \node(S1') [below right of = F1'] {\scriptsize$\mathbf{Student}$};
    \node(B1')  [below of=S1']  {\scriptsize$\mathbf{\bot 0}$};
    \draw [Fuchsia, ->] (Dir1') to (T1');
    \draw [Fuchsia, ->] (D1') to (Dir1');
    \draw [Fuchsia, ->] (F1') to (D1');
    \draw [Fuchsia, ->] (S1') to (F1');
    \draw [Fuchsia, ->] (S1') to (DS1');
    \draw [Fuchsia, ->] (Dir1') to (DS1');
    \draw [Fuchsia, ->] (B1') to (S1');
    
    
    
    \draw (-3,-2.5)[blue] ellipse (1.4cm and 3.1cm);
    

    \end{tikzpicture}
    \caption{\small Security lattices for two autonomous organisations that want to negotiate a \textit{secure} MoU \textit{ab initio}.
    \label{fig:LCfrmScratchA}} 
    \end{minipage}
    \quad \quad 
    \begin{minipage}[t]{.45\textwidth}
        \begin{tikzpicture}[framed,->,node distance=1cm,on grid]
    \title{W and D}
    \node(T1)    {{\color{red} \scriptsize$\top 1$}};
    \node(Dir1) [below of = T1] {\scriptsize$CollegePrincipal$};
    \node(D1) [below left of = Dir1] {\scriptsize$Dean\ (F)$};
    \node(F1) [below of = D1] {\scriptsize$Faculty$};
    \node(DS1) [below right of = Dir1] {{\color{red}\scriptsize$Dean\ (S)$}};
    \node(S1) [below right of = F1] {{\color{red}\scriptsize$Student$}};
    \node(B1)  [below of=S1]  {{\color{red}\scriptsize$\bot 1$}};
    \draw [Fuchsia, ->] (Dir1) to (T1);
    \draw [Fuchsia, ->] (D1) to (Dir1);
    \draw [Fuchsia, ->] (F1) to (D1);
    \draw [Fuchsia, ->] (S1) to (F1);
    \draw [Fuchsia, ->] (S1) to (DS1);
    \draw [Fuchsia, ->] (DS1) to (Dir1);
    \draw [Fuchsia, ->] (B1) to (S1);
    
    \draw (0,-2.5)[red] ellipse (1.4cm and 2.7cm);
    

   \node(T1')  [xshift=-3cm]  {{\color{blue} \scriptsize$\top 0$}};
   
    \node(Dir1') [below of = T1'] {{\color{blue}\scriptsize$HouseMaster$}};
    \node(D1') [below left of = Dir1'] {\scriptsize$Caretaker$};
    \node[align=center](F1') [below of = D1'] {\scriptsize $Assistant$};
    \node[align=center](DS1') [below right of = Dir1'] {\scriptsize $Dining-$ \\ \scriptsize $Manager$};
    \node(S1') [below right of = F1'] {{\color{blue}\scriptsize$Student$}};
    \node(B1')  [below of=S1']  {{\color{blue}\scriptsize$\bot 0$}};
    \draw [Fuchsia, ->] (Dir1') to (T1');
    \draw [Fuchsia, ->] (D1') to (Dir1');
    \draw [Fuchsia, ->] (F1') to (D1');
    \draw [Fuchsia, ->] (S1') to (F1');
    \draw [Fuchsia, ->] (S1') to (DS1');
   \draw [Fuchsia, ->] (Dir1') to (DS1');
    \draw [Fuchsia, ->] (B1') to (S1');
    
    
    
    \draw (-3,-2.5) [blue] ellipse (1.4cm and 3.1cm);
    
    
    \draw [dotted,thick] (-3,-0) ellipse (0.3cm and 0.3cm);
    \draw [dotted,thick] (-3,-4.5) ellipse (0.3cm and 0.3cm);
    \draw [rotate around={-15:(-0.4,-1.4)},dotted,thick] (-0.4,-1.4) ellipse (0.5cm and 1.7cm);
    
    \draw [dotted,thick] (-0,-4.5) ellipse (0.3cm and 0.3cm);
    \draw [dotted,thick] (-3,-3.4) ellipse (0.6cm and 0.3cm);
    \draw [dotted,thick] (-0,-3.4) ellipse (0.6cm and 0.3cm);
    \draw [dotted,thick] (-3,-1.9) ellipse (1.5cm and 1.1cm);
    \draw [dotted,thick] (0.7,-1.7) ellipse (0.6cm and 0.3cm);

    \end{tikzpicture}
    \caption{\small Identifying equivalence relations in given security lattices for discovering a Lagois connection \textit{ab initio}.
    \label{fig:LCfrmScratchB}} 
    \end{minipage}
\end{figure}

\begin{figure}[!ht]
    \begin{minipage}[t]{.45\textwidth}
    \begin{tikzpicture}[framed,->,node distance=1cm,on grid]
    \title{W and D}
    \node(T1)    {{\color{red} \scriptsize$\top 1$}};
    \node(Dir1) [below of = T1] {\scriptsize$CollegePrincipal$};
    \node(D1) [below left of = Dir1] {\scriptsize$Dean\ (F)$};
    \node(F1) [below of = D1] {\scriptsize$Faculty$};
    \node(DS1) [below right of = Dir1] {{\color{red}\scriptsize$Dean\ (S)$}};
    \node(S1) [below right of = F1] {{\color{red}\scriptsize$Student$}};
    \node(B1)  [below of=S1]  {{\color{red}\scriptsize$\bot 1$}};
    \draw [Fuchsia, ->] (Dir1) to (T1);
    \draw [Fuchsia, ->] (D1) to (Dir1);
    \draw [Fuchsia, ->] (F1) to (D1);
    \draw [Fuchsia, ->] (S1) to (F1);
    \draw [Fuchsia, ->] (S1) to (DS1);
    \draw [Fuchsia, ->] (DS1) to (Dir1);
    \draw [Fuchsia, ->] (B1) to (S1);
    \draw (0,-2.5)[red] ellipse (1.4cm and 2.7cm);
   \node(T1')  [xshift=-3cm]  {{\color{blue} \scriptsize$\top 0$}};
    \node(Dir1') [below of = T1'] {{\color{blue}\scriptsize$HouseMaster$}};
    \node(D1') [below left of = Dir1'] {\scriptsize$Caretaker$};
    \node[align=center](F1') [below of = D1'] {\scriptsize $Assistant$};
    \node[align=center](DS1') [below right of = Dir1'] {\scriptsize $Dining-$ \\ \scriptsize $Manager$};
    \node(S1') [below right of = F1'] {{\color{blue}\scriptsize$Student$}};
    \node(B1')  [below of=S1']  {{\color{blue}\scriptsize$\bot 0$}};
    \draw [Fuchsia, ->] (Dir1') to (T1');
    \draw [Fuchsia, ->] (D1') to (Dir1');
    \draw [Fuchsia, ->] (F1') to (D1');
    \draw [Fuchsia, ->] (S1') to (F1');
    \draw [Fuchsia, ->] (S1') to (DS1');
    \draw [Fuchsia, ->] (Dir1') to (DS1');
    \draw [Fuchsia, ->] (B1') to (S1');
    \draw [densely dashed, black,  thick, <->] (T1') to (T1);
    \draw [densely dashed, black,  thick, <->] (B1') to (B1);
    \draw [densely dashed, black,  thick, <->] (S1') to (S1);
    \draw [densely dashed, black,  thick, <->] (Dir1') to [bend left = 5] (DS1);
    \draw (-3,-2.5)[blue] ellipse (1.4cm and 3.1cm);
    \draw [dotted,thick] (-3,-0) ellipse (0.3cm and 0.3cm);
    \draw [dotted,thick] (-3,-4.5) ellipse (0.3cm and 0.3cm);
    \draw [rotate around={-15:(-0.4,-1.4)},dotted,thick] (-0.4,-1.4) ellipse (0.5cm and 1.7cm);
    \draw [dotted,thick] (-0,-4.5) ellipse (0.3cm and 0.3cm);
    \draw [dotted,thick] (-3,-3.4) ellipse (0.6cm and 0.3cm);
    \draw [dotted,thick] (-0,-3.4) ellipse (0.6cm and 0.3cm);
    \draw [dotted,thick] (-3,-1.9) ellipse (1.5cm and 1.1cm);
    \draw [dotted,thick] (0.7,-1.7) ellipse (0.6cm and 0.3cm);
    \end{tikzpicture}
    
    \caption{\small Connecting budpoints of equivalent security classes.
    \label{fig:LCfrmScratchC}}
    \end{minipage}
    \quad \quad 
    \begin{minipage}[t]{.45\textwidth}
        \begin{tikzpicture}[framed,->,node distance=1cm,on grid]
    \title{W and D}
    \node(T1)    {{\color{red} \scriptsize$\top 1$}};
    \node(Dir1) [below of = T1] {\scriptsize$CollegePrincipal$};
    \node(D1) [below left of = Dir1] {\scriptsize$Dean\ (F)$};
    \node(F1) [below of = D1] {\scriptsize$Faculty$};
    \node(DS1) [below right of = Dir1] {{\color{red}\scriptsize$Dean\ (S)$}};
    \node(S1) [below right of = F1] {{\color{red}\scriptsize$Student$}};
    \node(B1)  [below of=S1]  {{\color{red}\scriptsize$\bot 1$}};
    \draw [Fuchsia, ->] (Dir1) to (T1);
    \draw [Fuchsia, ->] (D1) to (Dir1);
    \draw [Fuchsia, ->] (F1) to (D1);
    \draw [Fuchsia, ->] (S1) to (F1);
    \draw [Fuchsia, ->] (S1) to (DS1);
    \draw [Fuchsia, ->] (DS1) to (Dir1);
    \draw [Fuchsia, ->] (B1) to (S1);
    \draw (0,-2.5)[red] ellipse (1.4cm and 2.7cm);
   \node(T1')  [xshift=-3cm]  {{\color{blue} \scriptsize$\top 0$}};
    \node(Dir1') [below of = T1'] {{\color{blue}\scriptsize$HouseMaster$}};
    \node(D1') [below left of = Dir1'] {\scriptsize$Caretaker$};
    \node[align=center](F1') [below of = D1'] {\scriptsize $Assistant$};
    \node[align=center](DS1') [below right of = Dir1'] {\scriptsize $Dining-$ \\ \scriptsize $Manager$};
    \node(S1') [below right of = F1'] {{\color{blue}\scriptsize$Student$}};
    \node(B1')  [below of=S1']  {{\color{blue}\scriptsize$\bot 0$}};
    \draw [brown,thick, ->] (F1) to [bend right = 10] (T1');
    \draw [brown,thick, ->] (D1) to [bend right = 10](T1');
    \draw [brown,thick, ->] (Dir1) to (T1');
     \draw [OliveGreen,thick, ->] (D1') to [bend right = 20](DS1);
    \draw [OliveGreen,thick, ->] (F1') to [bend right = 15] (DS1);
    \draw [OliveGreen,thick, ->] (DS1') to [bend right = 15] (DS1);
    
    \draw [Fuchsia, ->] (Dir1') to (T1');
    \draw [Fuchsia, ->] (D1') to (Dir1');
    \draw [Fuchsia, ->] (F1') to (D1');
    \draw [Fuchsia, ->] (S1') to (F1');
    \draw [Fuchsia, ->] (S1') to (DS1');
    \draw [Fuchsia, ->] (Dir1') to (DS1');
    \draw [Fuchsia, ->] (B1') to (S1');
    \draw [densely dashed, black,  thick, <->] (T1') to (T1);
    \draw [densely dashed, black,  thick, <->] (B1') to (B1);
    \draw [densely dashed, black,  thick, <->] (S1') to (S1);
    \draw [densely dashed, black,  thick, <->] (Dir1') to [bend left = 5] (DS1);
    \draw (-3,-2.5) [blue] ellipse (1.4cm and 3.1cm);
    \end{tikzpicture}
    \caption{\small A secure MoU negotiated \textit{ab initio}.
    \label{fig:LCfrmScratchD}} 
    \end{minipage}
\end{figure}

\begin{figure}[!ht]
    \centering
    \begin{tikzpicture}[framed,->,node distance=1cm,on grid]
    \title{W and D}
    \node(T1)  [xshift=3cm]  {{\color{red} \scriptsize$\top 1$}};
    \node(Dir1) [below of = T1] {\scriptsize$CollegePrincipal$};
    \node(D1) [below left of = Dir1] {\scriptsize$Dean\ (F)$};
    \node(F1) [below of = D1] {\scriptsize$Faculty$};
    \node(DS1) [below right of = Dir1] {{\color{red}\scriptsize$Dean\ (S)$}};
    \node(S1) [below right of = F1] {{\color{red}\scriptsize$Student$}};
    \node(B1)  [below of=S1]  {{\color{red}\scriptsize$\bot 1$}};
    
    \node(T12) [yshift=-1cm] {{\color{red} \scriptsize$\top 1$}};
    \node(DS12) [below  of = T12] {{\color{red}\scriptsize$Dean\ (S)$}};
    \node(S12) [below  of = DS12] {{\color{red}\scriptsize$Student$}};
    \node(B12)  [below of=S12]  {{\color{red}\scriptsize$\bot 1$}};
    
    \draw [Fuchsia, ->] (Dir1) to (T1);
    \draw [Fuchsia, ->] (D1) to (Dir1);
    \draw [Fuchsia, ->] (F1) to (D1);
    \draw [Fuchsia, ->] (S1) to (F1);
    \draw [Fuchsia, ->] (S1) to (DS1);
    \draw [Fuchsia, ->] (DS1) to (Dir1);
    \draw [Fuchsia, ->] (B1) to (S1);
    \draw (3,-2.5)[red] ellipse (1.4cm and 2.7cm);
    \draw (0,-2.5)[red] ellipse (1.4cm and 2.7cm);
    
   \node(T1')  [xshift=-6cm]  {{\color{blue} \scriptsize$\top 0$}};
    \node(Dir1') [below of = T1'] {{\color{blue}\scriptsize$HouseMaster$}};
    \node(D1') [below left of = Dir1'] {\scriptsize$Caretaker$};
    \node[align=center](F1') [below of = D1'] {\scriptsize $Assistant$};
    \node[align=center](DS1') [below right of = Dir1'] {\scriptsize $Dining-$ \\ \scriptsize $Manager$};
    \node(S1') [below right of = F1'] {{\color{blue}\scriptsize$Student$}};
    \node(B1')  [below of=S1']  {{\color{blue}\scriptsize$\bot 0$}};
    
    \node(T12')  [xshift=-3cm,yshift=-1cm]  {{\color{blue} \scriptsize$\top 0$}};
    \node(Dir12') [below of = T12'] {{\color{blue}\scriptsize$HouseMaster$}};
    \node(S12') [below of = Dir12'] {{\color{blue}\scriptsize$Student$}};
    \node(B12')  [below of=S12']  {{\color{blue}\scriptsize$\bot 0$}};
    
    \draw [purple,thick, ->] (T1) to (T12);
    \draw [purple,thick, ->] (B1) to (B12);
    \draw [purple,thick, ->] (S1) to (S12);
    \draw [purple,thick, ->] (DS1) to [bend left = 20] (DS12);
    \draw [purple,thick, ->] (T1') to (T12');
    \draw [purple,thick, ->] (B1') to (B12');
    \draw [purple,thick, ->] (S1') to (S12');
    \draw [purple,thick, ->] (F1) to [bend right = 10] (T12);
    \draw [purple,thick, ->] (D1) to [bend right = 10](T12);
    \draw [purple,thick, ->] (Dir1) to (T12);
    \draw [purple,thick, ->] (T1) to (T12);
     \draw [purple,thick, ->] (D1') to [bend right = 20](Dir12');
    \draw [purple,thick, ->] (F1') to [bend right = 15] (Dir12');
    \draw [purple,thick, ->] (DS1') to [bend right = 15] (Dir12');
    \draw [purple,thick, ->] (Dir1') to (Dir12');
    
    \draw [Fuchsia, ->] (B12') to (S12');
    \draw [Fuchsia, ->] (S12') to (Dir12');
    \draw [Fuchsia, ->] (Dir12') to (T12');
    
    \draw [Fuchsia, ->] (B12) to (S12);
    \draw [Fuchsia, ->] (S12) to (DS12);
    \draw [Fuchsia, ->] (DS12) to (T12);
    
    \draw [Fuchsia, ->] (Dir1') to (T1');
    \draw [Fuchsia, ->] (D1') to (Dir1');
    \draw [Fuchsia, ->] (F1') to (D1');
    \draw [Fuchsia, ->] (S1') to (F1');
    \draw [Fuchsia, ->] (S1') to (DS1');
    \draw [Fuchsia, ->] (Dir1') to (DS1');
    \draw [Fuchsia, ->] (B1') to (S1');
    \draw [densely dashed, black,  thick, <->] (T12') to (T12);
    \draw [densely dashed, black,  thick, <->] (B12') to (B12);
    \draw [densely dashed, black,  thick, <->] (S12') to (S12);
    \draw [densely dashed, black,  thick, <->] (Dir12') to (DS12);
    \draw (-6,-2.5) [blue] ellipse (1.4cm and 3.1cm);
    \draw (-3,-2.5) [blue] ellipse (1.4cm and 3.1cm);
    \end{tikzpicture}
    \caption{\small Using isomorphic images of closure operators to define a  Lagois connection. Purple edges define the closure operators for each organisational domain.
    \label{fig:LCfrmScratchWidClosures}} 
\end{figure}
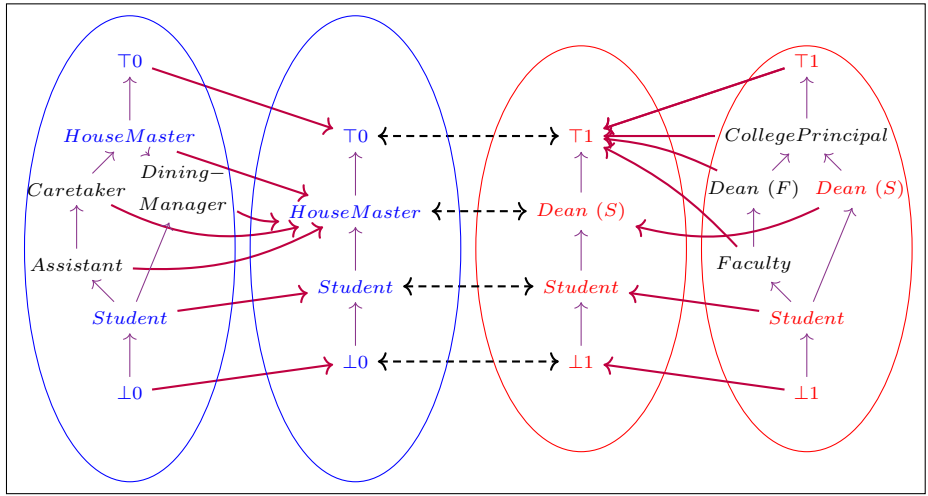

\subsection{MoUs involving several administrative domains}\label{sec:lagois-composition}

Suppose there is a sequence of administrative domains such that each adjacent pair of domains has negotiated a secure Lagois connection to ensure bidirectional SIF between them.
Theorem \ref{lemma:composinglagoisconnections} and Corollary \ref{lemma:composition2} allow the composition of these  Lagois connections using simple functional composition.

\begin{theorem} [Theorem 3.22 in \cite{MELTON1994lagoisconnections}] If $(\ld{L_1}, \rid{\alpha_1}, \ld{\gamma_1}, \rid{M_1})$ and $(\rid{M_1},  \tid{\alpha_2}, \rid{\gamma_2},  \tid{Q})$ are increasing Lagois connections, then the flow defined by the increasing Lagois connection $(\ld{L_1},  \tid{\alpha_2} \circ \rid{\alpha_1} , \ld{\gamma_1} \circ \rid{\gamma_2},  \tid{Q})$ is secure iff 
\begin{align}
    \rid{\gamma_2} \circ \tid{\alpha_2} \circ \rid{\alpha_1}[\ld{L_1}] \subseteq \rid{\alpha_1}[\ld{L_1}]\ and
\label{cond:composelagois1}
\end{align}
\begin{align}
    \rid{\alpha_1} \circ \ld{\gamma_1} \circ \rid{\gamma_2}[\tid{Q}] \subseteq \rid{\gamma_2}[\tid{Q}] 
\label{cond:composelagois2}
\end{align} 
\label{lemma:composinglagoisconnections}
\end{theorem}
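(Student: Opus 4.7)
The plan is to reduce the claim to checking the Lagois axioms \textbf{LC1}--\textbf{LC4} for the composite quadruple $(\ld{L_1}, \tid{\alpha_2} \circ \rid{\alpha_1}, \ld{\gamma_1} \circ \rid{\gamma_2}, \tid{Q})$. I would first dispense with \textbf{LC1} and \textbf{LC2}, which hold \emph{without} needing conditions (\ref{cond:composelagois1}) or (\ref{cond:composelagois2}): for any $\ld{l} \in \ld{L_1}$, \textbf{LC1} of the first connection gives $\ld{l} ~\ld{\sqsubseteq}~ \ld{\gamma_1}(\rid{\alpha_1}(\ld{l}))$, and applying \textbf{LC1} of the second connection at $\rid{\alpha_1}(\ld{l})$, then pushing through by the monotonicity of $\ld{\gamma_1}$, yields $\ld{l} ~\ld{\sqsubseteq}~ \ld{\gamma_1}(\rid{\gamma_2}(\tid{\alpha_2}(\rid{\alpha_1}(\ld{l}))))$. \textbf{LC2} for the composite is established symmetrically in $\tid{Q}$. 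Hence the substantive content of the theorem concerns \textbf{LC3} and \textbf{LC4} for the composite.

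For the forward direction, assume (\ref{cond:composelagois1}) and (\ref{cond:composelagois2}). To establish \textbf{LC3} for the composite, fix $\ld{l} \in \ld{L_1}$ and set $\rid{m_0} = \rid{\alpha_1}(\ld{l})$. Condition (\ref{cond:composelagois1}) places $\rid{\gamma_2}(\tid{\alpha_2}(\rid{m_0}))$ inside $\rid{\alpha_1}[\ld{L_1}]$, so Proposition \ref{prop:largest-pre} applied to the first Lagois connection gives $\rid{\alpha_1}(\ld{\gamma_1}(\rid{\gamma_2}(\tid{\alpha_2}(\rid{m_0})))) = \rid{\gamma_2}(\tid{\alpha_2}(\rid{m_0}))$. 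Applying $\tid{\alpha_2}$ to both sides and invoking \textbf{LC3} of the second connection collapses the right-hand side to $\tid{\alpha_2}(\rid{m_0}) = \tid{\alpha_2}(\rid{\alpha_1}(\ld{l}))$, which is precisely \textbf{LC3} of the composite evaluated at $\ld{l}$. \textbf{LC4} of the composite is derived analogously from condition (\ref{cond:composelagois2}), Proposition \ref{prop:largest-pre} for the second connection, and \textbf{LC4} of the first.

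For the reverse direction, assume the composite satisfies the Lagois axioms. Fix $\ld{l} \in \ld{L_1}$ and let $\ld{l'} = \ld{\gamma_1}(\rid{\gamma_2}(\tid{\alpha_2}(\rid{\alpha_1}(\ld{l}))))$; since $\rid{\alpha_1}(\ld{l'}) \in \rid{\alpha_1}[\ld{L_1}]$, proving (\ref{cond:composelagois1}) reduces to showing $\rid{\alpha_1}(\ld{l'}) = \rid{\gamma_2}(\tid{\alpha_2}(\rid{\alpha_1}(\ld{l})))$. One inequality follows from \textbf{LC2} of the first connection applied at $\rid{\gamma_2}(\tid{\alpha_2}(\rid{\alpha_1}(\ld{l}))) \in \rid{M_1}$, giving $\rid{\gamma_2}(\tid{\alpha_2}(\rid{\alpha_1}(\ld{l}))) ~\rd{\sqsubseteq}~ \rid{\alpha_1}(\ld{\gamma_1}(\rid{\gamma_2}(\tid{\alpha_2}(\rid{\alpha_1}(\ld{l}))))) = \rid{\alpha_1}(\ld{l'})$. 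The other inequality comes from \textbf{LC3} of the composite (which gives $\tid{\alpha_2}(\rid{\alpha_1}(\ld{l'})) = \tid{\alpha_2}(\rid{\alpha_1}(\ld{l}))$) combined with \textbf{LC2} of the second connection applied at $\rid{\alpha_1}(\ld{l'})$, yielding $\rid{\alpha_1}(\ld{l'}) ~\rd{\sqsubseteq}~ \rid{\gamma_2}(\tid{\alpha_2}(\rid{\alpha_1}(\ld{l'}))) = \rid{\gamma_2}(\tid{\alpha_2}(\rid{\alpha_1}(\ld{l})))$. Antisymmetry of $~\rd{\sqsubseteq}~$ then closes the equality. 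Condition (\ref{cond:composelagois2}) is obtained symmetrically from \textbf{LC4} of the composite.

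The main obstacle is the reverse direction: \textbf{LC3} of the composite by itself only places $\rid{\alpha_1}(\ld{l'})$ and $\rid{\gamma_2}(\tid{\alpha_2}(\rid{\alpha_1}(\ld{l})))$ in the same fibre of $\tid{\alpha_2}$, not equal. The crucial insight is to pair \textbf{LC3} of the composite with \textbf{LC2} of \emph{both} component connections to extract a two-sided inequality in $\rid{M_1}$, and only then invoke antisymmetry. Apart from this, the proof is essentially an orchestration of the idempotence identities \textbf{LC3}, \textbf{LC4} for each layer together with the budpoint characterisation of Proposition \ref{prop:largest-pre}.
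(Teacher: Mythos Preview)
The paper does not supply its own proof of this statement; it merely cites Theorem 3.22 of \cite{MELTON1994lagoisconnections}, so there is nothing in the paper to compare your argument against. That said, your proof is correct in substance and follows the natural route: verify \textbf{LC1}/\textbf{LC2} unconditionally, then show \textbf{LC3}/\textbf{LC4} of the composite are equivalent to the two image-containment conditions, using Proposition~\ref{prop:largest-pre} (which gives $\rid{\alpha_1}\circ\ld{\gamma_1}$ the identity on $\rid{\alpha_1}[\ld{L_1}]$) in the forward direction and a two-sided squeeze in $\rid{M_1}$ for the converse.

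One small labeling slip: in the reverse direction you invoke ``\textbf{LC2} of the second connection applied at $\rid{\alpha_1}(\ld{l'})$'' to obtain $\rid{\alpha_1}(\ld{l'}) ~\rd{\sqsubseteq}~ \rid{\gamma_2}(\tid{\alpha_2}(\rid{\alpha_1}(\ld{l'})))$. Since the second connection is $(\rid{M_1}, \tid{\alpha_2}, \rid{\gamma_2}, \tid{Q})$ and $\rid{\alpha_1}(\ld{l'}) \in \rid{M_1}$, the inequality you use is \textbf{LC1} of that connection, not \textbf{LC2}. The mathematics is unaffected.
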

\begin{corollary}[Corollary 3.23 in \cite{MELTON1994lagoisconnections}] \label{lemma:composition2}
If $(\ld{L_1}, \rid{\alpha_1}, \ld{\gamma_1}, \rid{M_1})$ and $(\rid{M_1},  \tid{\alpha_2}, \rid{\gamma_2},  \tid{Q})$ are Lagois connections and if either $\rid{\gamma_2}[\tid{Q}] \subseteq \rid{\alpha_1}(\ld{L_1})$ or $\rid{\gamma_2}[\tid{Q}] \supseteq \rid{\alpha_1}(\ld{L_1})$, then $(\ld{L_1}, \tid{\alpha_2} \circ \rid{\alpha_1}, \ld{\gamma_1} \circ \rid{\gamma_2}, \tid{Q})$ is a Lagois connection.
\end{corollary}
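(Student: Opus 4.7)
The plan is to verify that each of the two hypothesised cases implies both of the inclusion conditions (\ref{cond:composelagois1}) and (\ref{cond:composelagois2}) of Theorem~\ref{lemma:composinglagoisconnections}; the corollary is then an immediate invocation of that theorem. In each case, one of the two conditions is essentially set-theoretic and free, while the other requires a short algebraic check using the appropriate fixed-point identity \textbf{LC3} or \textbf{LC4} of one of the constituent Lagois connections (and, if desired, Proposition~\ref{prop:largest-pre} as a conceptual shortcut).

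First I would assume $\rid{\gamma_2}[\tid{Q}] \subseteq \rid{\alpha_1}[\ld{L_1}]$. Condition (\ref{cond:composelagois1}) is then free: for any $\ld{l} \in \ld{L_1}$, the element $\rid{\gamma_2}(\tid{\alpha_2}(\rid{\alpha_1}(\ld{l})))$ lies in $\rid{\gamma_2}[\tid{Q}]$, and hence in $\rid{\alpha_1}[\ld{L_1}]$ by the hypothesis. For condition (\ref{cond:composelagois2}), fix $q \in \tid{Q}$ and use the hypothesis to pick $\ld{l'} \in \ld{L_1}$ with $\rid{\gamma_2}(q) = \rid{\alpha_1}(\ld{l'})$; then \textbf{LC3} of the first Lagois connection gives $\rid{\alpha_1}(\ld{\gamma_1}(\rid{\gamma_2}(q))) = \rid{\alpha_1}(\ld{\gamma_1}(\rid{\alpha_1}(\ld{l'}))) = \rid{\alpha_1}(\ld{l'}) = \rid{\gamma_2}(q) \in \rid{\gamma_2}[\tid{Q}]$, as required.

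The second case, $\rid{\alpha_1}[\ld{L_1}] \subseteq \rid{\gamma_2}[\tid{Q}]$, is dual. Here condition (\ref{cond:composelagois2}) is immediate, since for any $q \in \tid{Q}$, the element $\rid{\alpha_1}(\ld{\gamma_1}(\rid{\gamma_2}(q)))$ lies in $\rid{\alpha_1}[\ld{L_1}] \subseteq \rid{\gamma_2}[\tid{Q}]$. For condition (\ref{cond:composelagois1}), I would fix $\ld{l} \in \ld{L_1}$, write $\rid{\alpha_1}(\ld{l}) = \rid{\gamma_2}(q')$ for some $q' \in \tid{Q}$ using the hypothesis, and apply \textbf{LC4} of the second Lagois connection (namely $\rid{\gamma_2} \circ \tid{\alpha_2} \circ \rid{\gamma_2} = \rid{\gamma_2}$) to obtain $\rid{\gamma_2}(\tid{\alpha_2}(\rid{\alpha_1}(\ld{l}))) = \rid{\gamma_2}(\tid{\alpha_2}(\rid{\gamma_2}(q'))) = \rid{\gamma_2}(q') = \rid{\alpha_1}(\ld{l}) \in \rid{\alpha_1}[\ld{L_1}]$.

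The only real bookkeeping obstacle is keeping straight which fixed-point identity belongs to which connection: in the first Lagois connection $\rid{\alpha_1}$ plays the $\rid{\alpha}$-role so \textbf{LC3} reads $\rid{\alpha_1} \circ \ld{\gamma_1} \circ \rid{\alpha_1} = \rid{\alpha_1}$, whereas in the second connection $\rid{\gamma_2}$ plays the $\ld{\gamma}$-role of the template, so the identity we need on the right is \textbf{LC4}. Once this is pinned down, both cases reduce to chasing elements through the images $\rid{\alpha_1}[\ld{L_1}]$ and $\rid{\gamma_2}[\tid{Q}]$, and Theorem~\ref{lemma:composinglagoisconnections} finishes the argument.
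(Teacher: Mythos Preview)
Your proof is correct. The paper itself does not provide a proof of this corollary --- it merely cites the result from \cite{MELTON1994lagoisconnections} --- so there is no in-paper argument to compare against; that said, your approach of verifying conditions (\ref{cond:composelagois1}) and (\ref{cond:composelagois2}) of Theorem~\ref{lemma:composinglagoisconnections} in each case, using the appropriate fixed-point identity (\textbf{LC3} for the first connection, \textbf{LC4} for the second), is the natural and expected derivation of a corollary from its parent theorem.
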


We illustrate how an MoU negotiated between \textit{Dorm-Life}  (which has security classes \textit{HouseMaster}, \textit{DiningManager}, \textit{Caretaker}, \textit{Assistant}, and \textit{Student}) and \textit{College} can be composed with the MoU which has been negotiated between the \textit{College} and \textit{University} in Figure \ref{fig:Bi-L-WnD-compose}, to come up with an MoU which allows secure bidirectional information flow between all three domains.

\begin{figure}[!ht]
    \centering
    \begin{tikzpicture}[framed,->,node distance=1cm,on grid]
    \title{W and D}
    \node(T2)  [xshift=3cm] {{\color{red}\scriptsize$\top 2$}};
    \node(T1)    {{\color{blue} \scriptsize$\top 1$}};
    \node(Dir1) [below of = T1] {{\color{blue}\scriptsize$CollegePrincipal$}};
    \node(D1) [below left of = Dir1] {\scriptsize$Dean\ (F)$};
    \node(F1) [below of = D1] {\scriptsize$Faculty$};
    \node(DS1) [below right of = Dir1] {\scriptsize$Dean\ (S)$};
    \node(S1) [below right of = F1] {{\color{blue}\scriptsize$Student$}};
    \node(B1)  [below of=S1]  {{\color{blue}\scriptsize$\bot 1$}};
    \node(Sec2)  [below of = T2] {\scriptsize$Chancellor$};
    \node[align=center] (AS2)  [below of=Sec2] {\scriptsize $Vice$ \\ \scriptsize $Chancellor$};
    \node(Dir2)  [below of=AS2] {{\color{red}\scriptsize$Dean(Colg)$}};
    \node(E2)  [below of=Dir2] {{\color{red}\scriptsize$Univ.Fac.$}};
    \node(S2) [below of = E2] {{\color{red}\scriptsize$Student$}};
    \node(B2)  [below of=S2]  {{\color{red}\scriptsize$\bot 2$}};
    \draw [densely dashed, black,  thick, <->] (F1) to (E2);
    \draw [densely dashed, black,  thick,<->] (B1) to (B2);
    \draw [densely dashed, black,  thick, <->] (T1) to (T2);
    \draw [densely dashed, black,  thick, <->] (S2) to (S1);
    \draw [densely dashed, black,  thick, <->] (Dir1) [bend left = 10] to (Dir2);
    \draw [OliveGreen, thick] (D1) to [bend right = 5] (Dir2);
    \draw [OliveGreen, thick] (DS1) to (Dir2);
    \draw [brown, thick] (AS2) to (T1);
    \draw [brown, thick] (Sec2) to (T1);
    \draw [Fuchsia, ->] (Dir1) to (T1);
    \draw [Fuchsia, ->] (D1) to (Dir1);
    \draw [Fuchsia, ->] (F1) to (D1);
    \draw [Fuchsia, ->] (S1) to (F1);
    \draw [Fuchsia, ->] (S1) to (DS1);
    \draw [Fuchsia, ->] (DS1) to (Dir1);
    \draw [Fuchsia, ->] (B1) to (S1);
    \draw [Fuchsia, ->] (B2) to (S2);
    \draw [Fuchsia, ->] (S2) to (E2);
    \draw [Fuchsia, ->] (E2) to (Dir2);
    \draw [Fuchsia, ->] (Dir2) to (AS2);
    \draw [Fuchsia, ->] (AS2) to (Sec2);
    \draw [Fuchsia, ->] (Sec2) to (T2);
    
    \draw (0,-2.5) [blue] ellipse (1.4cm and 2.7cm);
    \draw (3,-3)[red] ellipse (1.2cm and 3.3cm);
    \draw (-3.5,-2.5) [Emerald] ellipse (1.5cm and 2.7cm);

   \node(T1')  [xshift=-3.5cm]  {{\color{Emerald} \scriptsize$\top 0$}};
   
    \node(Dir1') [below of = T1'] {{\color{Emerald}\scriptsize$HouseMaster$}};
    \node(D1') [below left of = Dir1'] {\scriptsize$Caretaker$};
    \node[align=center](F1') [below of = D1'] {\scriptsize $Assistant$};
    \node[align=center](DS1') [below right of = Dir1'] {\scriptsize $Dining-$ \\ \scriptsize $Manager$};
    \node(S1') [below right of = F1'] {{\color{Emerald}\scriptsize$Student$}};
    \node(B1')  [below of=S1']  {{\color{Emerald}\scriptsize$\bot 0$}};
    \draw [Fuchsia, ->] (Dir1') to (T1');
    \draw [Fuchsia, ->] (D1') to (Dir1');
    \draw [Fuchsia, ->] (F1') to (D1');
    \draw [Fuchsia, ->] (S1') to (F1');
    \draw [Fuchsia, ->] (S1') to (DS1');
    \draw [Fuchsia, ->] (Dir1') to (DS1');
    \draw [Fuchsia, ->] (B1') to (S1');
    
    \draw [densely dashed, black,  thick, <->] (T1') to (T1);
    \draw [densely dashed, black,  thick, <->] (B1') to (B1);
    \draw [densely dashed, black,  thick, <->] (S1') to (S1);
    \draw [densely dashed, black,  thick, <->] (Dir1') to [bend left = 5] (DS1);
    
    \draw [Mahogany,thick, ->] (Dir1) to [bend right = 10] (T1');
    \draw [Mahogany,thick, ->] (F1) to [bend right = 10] (Dir1');
    \draw [Mahogany,thick, ->] (D1) to [bend right = 10](Dir1');
    \draw [Emerald,thick, ->] (D1') to [bend left = 10]  (DS1);
    \draw [Emerald,thick, ->] (F1') to [bend right = 3] (DS1);
    \draw [Emerald,thick, ->] (DS1') to [bend right = 10] (DS1);

    

    \end{tikzpicture}
    \caption{\small Composing Lagois connections.
    Here an MoU negotiated between \textit{Dorm-Life} and \textit{College} is composed with another MoU which has been negotiated between \textit{College} and \textit{University}.
    \label{fig:Bi-L-WnD-compose}} 
\end{figure}
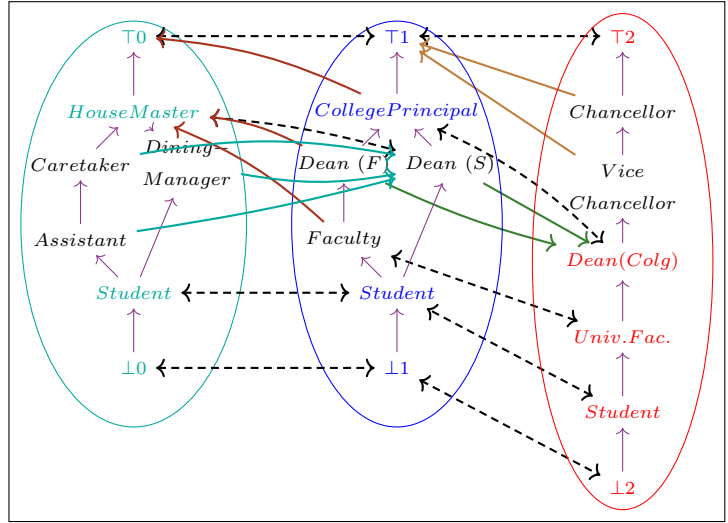
\section{Maintaining MoUs When Security Lattices Change}\label{sec:maintaining-mou}

\subsection{Analysing a Lagois Connection}\label{sec:lagois-decomposition}
Before we discuss how to update an existing MoU, we present Theorem \ref{lemma:decompose}, which provides us a decomposed, analytical view of the anatomy of an increasing Lagois connection. 
For example, the Lagois connection given in Figure \ref{fig:findingLCfromalpha2} 
can be alternatively viewed as shown in Figure \ref{fig:Bi-L-WnD-decomposedView}. 

Let us assume that $(\ld{L}, \rid{\alpha}, \ld{\gamma}, \rid{M})$ is a Lagois connection. Then, let $\ld{L^*} = \ld{\gamma}[\rid{M}]$ and $\rid{M^*} = \rid{\alpha}[\ld{L}]$. Also, let $\ld{r_1} = (\ld{\gamma}  \circ  \rid{\alpha})|_{\ld{L}}^{\ld{L^*}}$, $\rid{r_2} = (\rid{\alpha}  \circ  \ld{\gamma})|^{\rid{M^*}}_{\rid{M}}$, $\rid{i_1} = \rid{\alpha}|^{\rid{M^*}}_{\ld{L^*}}$, $\ld{i_2} = \ld{\gamma}|^{\ld{L^*}}_{\rid{M^*}}$ and let $\ld{e_1}$ be the embedding (inclusion) from $\ld{L^*}$ to $\ld{L}$ and $\rid{e_2}$ be the embedding (inclusion) from $\rid{M^*}$ to $\rid{M}$. Then, $(\ld{L^*}, \rid{i_1}, \ld{i_2}, \rid{M^*})$ is a Lagois isomorphism. This isomorphic substructure is very helpful in reducing the computational effort involved in maintaining the negotiated MoUs based on Lagois connections, as detailed in the sequel.

The key insight is that as long as the inter-domain mappings, i.e., $i_1$ and $i_2$ are not changed, and new security lattices can be connected to old lattices with an increasing Lagois insertion\footnote{A Lagois connection is an insertion if one of the two mappings is injective.}, we will not need to re-negotiate the inter-domain mappings. For example, Figure \ref{fig:Bi-L-WnD-addSC1} shows two such Lagois insertions between old and new security lattices.
\begin{theorem} [Theorem 3.24 in \cite{MELTON1994lagoisconnections}]\label{lemma:decompose}
Every increasing Lagois connection $(\ld{L}, \rid{\alpha}, \ld{\gamma}, \rid{M})$ is a composite\footnote{For easy left-to-right readability, we have used a composition operator $\diamond$ and reversed the order of the components from how they appear in Theorem 3.24 in \cite{MELTON1994lagoisconnections}. } 
$(\ld{L}, \ld{r_1}, \ld{e_1}, \ld{L^*})
 \diamond 
 (\ld{L^*}, \rid{i_1}, \ld{i_2}, \rid{M^*}) 
 \diamond
 (\rid{M^*}, \rid{e_2}, \rid{r_2}, \rid{M})$ where
\begin{enumerate}
    \item $(\ld{L}, \ld{r_1}, \ld{e_1}, \ld{L^*})$ is an increasing Lagois insertion,
    \item $\rid{i_1}$ and $\ld{i_2}$ are isomorphisms that are inverses to each other, and
    \item $(\rid{M^*}, \rid{e_2}, \rid{r_2}, \rid{M})$ is an increasing Lagois insertion.
\end{enumerate}
\end{theorem}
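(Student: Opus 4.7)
The plan is to unpack the three components, verify each is of the claimed type (insertion, isomorphism, insertion), and then check that their composition indeed recovers the original connection $(\ld{L}, \rid{\alpha}, \ld{\gamma}, \rid{M})$. The key technical lever throughout is the pair of identities $\ld{\gamma} \circ \rid{\alpha} \circ \ld{\gamma} = \ld{\gamma}$ (\textbf{LC4}) and $\rid{\alpha} \circ \ld{\gamma} \circ \rid{\alpha} = \rid{\alpha}$ (\textbf{LC3}), which collapse all the triple compositions that arise.

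First I would handle the middle piece. Because $\ld{L^*} = \ld{\gamma}[\rid{M}]$, every $\ld{l^*} \in \ld{L^*}$ has the form $\ld{\gamma}(\rid{m})$; applying \textbf{LC4} then gives $\ld{i_2}(\rid{i_1}(\ld{l^*})) = \ld{\gamma}(\rid{\alpha}(\ld{\gamma}(\rid{m}))) = \ld{\gamma}(\rid{m}) = \ld{l^*}$, and symmetrically \textbf{LC3} gives $\rid{i_1} \circ \ld{i_2} = \mathrm{id}_{\rid{M^*}}$. Thus $\rid{i_1}$ and $\ld{i_2}$ are mutually inverse order-preserving bijections between $\ld{L^*}$ and $\rid{M^*}$, making $(\ld{L^*}, \rid{i_1}, \ld{i_2}, \rid{M^*})$ a Lagois isomorphism. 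Proposition \ref{prop:meets} (or a direct check) confirms the restrictions are well-typed, i.e.\ $\rid{\alpha}[\ld{L^*}] \subseteq \rid{M^*}$ and $\ld{\gamma}[\rid{M^*}] \subseteq \ld{L^*}$.

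Next I would verify that $(\ld{L}, \ld{r_1}, \ld{e_1}, \ld{L^*})$ is an increasing Lagois insertion, and dually for $(\rid{M^*}, \rid{e_2}, \rid{r_2}, \rid{M})$. Monotonicity of $\ld{r_1} = (\ld{\gamma} \circ \rid{\alpha})|_{\ld{L}}^{\ld{L^*}}$ follows from monotonicity of $\rid{\alpha}, \ld{\gamma}$, and of $\ld{e_1}$ trivially. \textbf{LC1} is immediate from the original \textbf{LC1}: $\ld{l} \ld{\sqsubseteq} \ld{\gamma}(\rid{\alpha}(\ld{l})) = \ld{e_1}(\ld{r_1}(\ld{l}))$. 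For \textbf{LC2}--\textbf{LC4} of the insertion, the computation $\ld{r_1}(\ld{e_1}(\ld{l^*})) = \ld{\gamma}(\rid{\alpha}(\ld{\gamma}(\rid{m}))) = \ld{\gamma}(\rid{m}) = \ld{l^*}$ (by \textbf{LC4}) shows $\ld{r_1} \circ \ld{e_1} = \mathrm{id}_{\ld{L^*}}$, from which \textbf{LC2}, \textbf{LC3}, \textbf{LC4} all follow by routine substitution. Injectivity of $\ld{e_1}$ makes this an insertion. The argument for $(\rid{M^*}, \rid{e_2}, \rid{r_2}, \rid{M})$ is entirely symmetric, using \textbf{LC3} in place of \textbf{LC4}.

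Finally I would knit the three pieces together. Composing the arrows in the order dictated by $\diamond$ yields forward map $\rid{e_2} \circ \rid{i_1} \circ \ld{r_1}$; on $\ld{l} \in \ld{L}$ this evaluates to $\rid{\alpha}(\ld{\gamma}(\rid{\alpha}(\ld{l}))) = \rid{\alpha}(\ld{l})$ by \textbf{LC3}, and dually the backward composite recovers $\ld{\gamma}$ by \textbf{LC4}. It remains to check that the two binary composites $(\ld{L}, \ld{r_1}, \ld{e_1}, \ld{L^*}) \diamond (\ld{L^*}, \rid{i_1}, \ld{i_2}, \rid{M^*})$ and $(\ld{L^*}, \rid{i_1}, \ld{i_2}, \rid{M^*}) \diamond (\rid{M^*}, \rid{e_2}, \rid{r_2}, \rid{M})$ are themselves Lagois connections; each sits in the setting of Corollary \ref{lemma:composition2} because one adjoint of the isomorphism pair has image equal to the full carrier $\ld{L^*}$ (respectively $\rid{M^*}$), so the containment hypothesis is trivially met. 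The main obstacle is bookkeeping: one must carefully track the codomain restrictions in $\ld{r_1}, \rid{r_2}, \rid{i_1}, \ld{i_2}$ (ensuring images indeed land in $\ld{L^*}$ or $\rid{M^*}$, which is guaranteed by $\ld{L^*} = \ld{\gamma}[\rid{M}]$ and $\rid{M^*} = \rid{\alpha}[\ld{L}]$ together with \textbf{LC3}--\textbf{LC4}), and to keep the role of each use of \textbf{LC3} versus \textbf{LC4} straight when verifying the composition identities.
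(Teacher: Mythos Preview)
The paper does not supply its own proof of this theorem; it is quoted verbatim from \cite{MELTON1994lagoisconnections} and used as a tool, with only the definitions of $\ld{L^*}, \rid{M^*}, \ld{r_1}, \rid{r_2}, \rid{i_1}, \ld{i_2}, \ld{e_1}, \rid{e_2}$ spelled out in the surrounding text. So there is nothing in the paper to compare your argument against directly.

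That said, your proposal is correct and is exactly the argument one would expect: use \textbf{LC4} to get $\ld{i_2}\circ\rid{i_1}=\mathrm{id}_{\ld{L^*}}$ and $\ld{r_1}\circ\ld{e_1}=\mathrm{id}_{\ld{L^*}}$, use \textbf{LC3} symmetrically on the $\rid{M}$ side, and then collapse the triple composite $\rid{e_2}\circ\rid{i_1}\circ\ld{r_1}$ back to $\rid{\alpha}$ via \textbf{LC3} (and dually for $\ld{\gamma}$). Your invocation of Corollary~\ref{lemma:composition2} for the intermediate binary composites is also the right move, since one factor in each is an isomorphism and the image containment condition holds trivially. The only minor remark is that your appeal to Proposition~\ref{prop:meets} for well-typedness of the restrictions is unnecessary: $\rid{\alpha}[\ld{L^*}]=\rid{\alpha}[\ld{\gamma}[\rid{M}]]\subseteq\rid{\alpha}[\ld{L}]=\rid{M^*}$ is immediate from the definitions, without any lattice-theoretic input.
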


\begin{figure}[!ht]
\centering
    \begin{tikzpicture}[framed,->,node distance=1cm,on grid]
    \title{W and D}
    \node(T2)  [xshift=3cm, yshift=-1cm] {{\color{red}\scriptsize$\top 2$}};
    \node(T1)  [yshift=-1cm]  {{\color{blue} \scriptsize$\top 1$}};
    \node(Dir1) [below of = T1] {{\color{blue}\scriptsize$CollegePrincipal$}};
    \node(F1) [below of = Dir1] {{\color{blue}\scriptsize$Faculty$}};
    \node(B1)  [below of=F1]  {{\color{blue}\scriptsize$\bot 1$}};
    \node(Dir2)  [below of=T2] {{\color{red}\scriptsize$Dean(Colg)$}};
    \node(E2)  [below of=Dir2] {{\color{red}\scriptsize$Univ.Fac.$}};
    \node(B2)  [below of=E2]  {{\color{red}\scriptsize$\bot 2$}};
    \draw [densely dashed, black,  thick, <->] (F1) to (E2);
    \draw [densely dashed, black,  thick,<->] (B1) to (B2);
    \draw [densely dashed, black,  thick, <->] (T1) to (T2);
    \draw [densely dashed, black,  thick, <->] (Dir1) to (Dir2);
    \draw [Fuchsia, ->] (Dir1) to (T1);
    \draw [Fuchsia, ->] (F1) to (Dir1);
    \draw [Fuchsia, ->] (B1) to (F1);
    \draw [Fuchsia, ->] (B2) to (E2);
    \draw [Fuchsia, ->] (E2) to (Dir2);
    \draw [Fuchsia, ->] (Dir2) to (T2);
    \draw (0,-2.5)[blue] ellipse (1.4cm and 2cm);
    \draw (3,-2.5)[red] ellipse (1.2cm and 2cm);

   \node(T2')  [xshift=6cm] {{\color{red}\scriptsize$\top 2$}};
   \node(T1')  [xshift=-3.5cm]  {{\color{blue} \scriptsize$\top 1$}};
   
   \node(Dir1') [below of = T1'] {{\color{blue}\scriptsize$CollegePrincipal$}};
    \node(D1') [below left of = Dir1'] {\scriptsize$Dean\ (F)$};
    \node(F1') [below of = D1'] {{\color{blue}\scriptsize$Faculty$}};
    \node(DS1') [below right of = Dir1'] {\scriptsize$Dean\ (S)$};
    \node(S1') [below right of = F1'] {{\color{blue}\scriptsize$Student$}};
    \node(B1')  [below of=S1']  {{\color{blue}\scriptsize$\bot 1$}};
    \node(Sec2')  [below of = T2'] {\scriptsize$Chancellor$};
    \node[align=center] (AS2')  [below of=Sec2'] {\scriptsize $Vice$ \\ \scriptsize $Chancellor$};
    \node(Dir2')  [below of=AS2'] {{\color{red}\scriptsize$Dean(Colg)$}};
    \node(E2')  [below of=Dir2'] {{\color{red}\scriptsize$Univ.Fac.$}};
    \node(B2')  [below of=E2']  {{\color{red}\scriptsize$\bot 2$}};
   
    \draw [Fuchsia, ->] (Dir1') to (T1');
    \draw [Fuchsia, ->] (D1') to (Dir1');
    \draw [Fuchsia, ->] (F1') to (D1');
    \draw [Fuchsia, ->] (S1') to (F1');
    \draw [Fuchsia, ->] (S1') to (DS1');
    \draw [Fuchsia, ->] (DS1') to (Dir1');
    \draw [Fuchsia, ->] (B1') to (S1');
    \draw [Fuchsia, ->] (B2') to (E2');
    \draw [Fuchsia, ->] (E2') to (Dir2');
    \draw [Fuchsia, ->] (Dir2') to (AS2');
    \draw [Fuchsia, ->] (AS2') to (Sec2');
    \draw [Fuchsia, ->] (Sec2') to (T2');
    
    \draw [densely dashed, black,  thick, <->] (T1') to (T1);
    \draw [densely dashed, black,  thick, <->] (B1') to (B1);
    \draw [densely dashed, black,  thick, <->] (Dir1') to (Dir1);
    \draw [densely dashed, black,  thick, <->] (F1') to [bend right = 5] (F1);
    \draw [OliveGreen,  thick, ->] (S1') to [bend right = 15] (F1);
    \draw [OliveGreen,  thick, ->] (DS1') to (Dir1);
    \draw [OliveGreen,  thick, ->] (D1') to [bend right = 10] (Dir1);
    
    \draw [densely dashed, black,  thick, <->] (T2') to (T2);
    \draw [densely dashed, black,  thick, <->] (B2') to (B2);
    \draw [brown, thick, ->] (Sec2') to (T2);
    \draw [brown, thick, ->] (AS2') to (T2);
    \draw [densely dashed, black,  thick, <->] (Dir2') to (Dir2);
    \draw [densely dashed, black,  thick, <->] (E2') to (E2);
    
    \draw (-3.5,-2.5) [blue] ellipse (1.4cm and 3.1cm);
    \draw (6,-2.5)[red] ellipse (1.3cm and 2.7cm);
    
    \draw [dotted] (6,-1.4) ellipse (0.8cm and 1.2cm);
    \draw [dotted] (-3.5,-1.7) ellipse (1.35cm and 0.3cm);
    \draw [rotate around={45:(-3.8,-3.4)},dotted] (-3.8,-3.4) ellipse (0.4cm and 0.6cm);

    \end{tikzpicture}
    \caption{\small A decomposed view of a Lagois connection.
    Dashed black arrows define permissible flows between budpoints. \label{fig:Bi-L-WnD-decomposedView}} 
\end{figure}
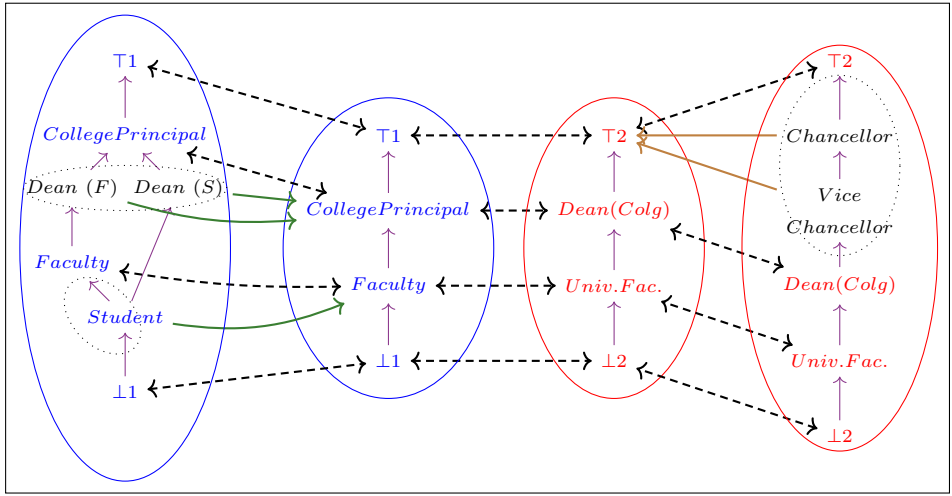

\subsection{Ch-ch-ch-ch-changes}\label{sec:changes}

There are four ways in which security lattices can evolve over time:
\begin{enumerate}
    \item Adding security classes to the existing security lattices;
    \item Removing security classes from the existing security lattices;
    \item Adding new edges to the existing security lattices;
    \item Removing edges from existing security lattices.
\end{enumerate} 

We examine how to re-establish a secure Lagois connection when changes in the lattice structures occur.
The key observation is that we only need to monitor if the original isomorphic substructure mediating the old Lagois connection between the participating security lattices changes or not.

If the participating isomorphic substructure remains unchanged, and one can find an increasing Lagois insertion between the changed  security lattice and the original isomorphic substructure then the MoU need not be re-negotiated. 
The necessary updates of the monotone function can be done independently of the other organisation.

\begin{enumerate}
 \item \textit{No change in the isomorphic sub-structure}:
    \begin{enumerate}
        \item The number of equivalence classes remains the same. 
        This is a simple case. 
        Use Theorem \ref{lemma:decompose} to first find a Lagois insertion between the new security lattice and the isomorphic substructure, and then update the  functions $r_i$ mentioned in Theorem \ref{lemma:decompose}, which can be done independently of the other parts of the diagram. 
        We illustrate this with an example where both security lattices are updated by adding Teaching Assistants (\textit{TAs}) and student \textit{Mentors} in College and department heads (\textit{HOD}) in \textit{University}, as shown in Figure \ref{fig:Bi-L-WnD-addSC1}.
        Using Theorem \ref{lemma:decompose}, we get the updated Lagois connection shown in Figure \ref{fig:Bi-L-WnD-addSC2} --- adding 2 new green arrows on the left, and a new brown arrow on the right.
    \end{enumerate}
    \item \textit{Change in the isomorphic sub-structure}:
    \begin{enumerate}
        \item \textit{Increase in number of equivalence classes}: Check the four conditions of Lagois connections (\textbf{LC1}, \textbf{LC2}, \textbf{LC3} and \textbf{LC4}) for the affected equivalence classes in the security lattices.  
        This is similar to re-negotiating a Lagois connection for those parts of security lattices, as discussed earlier. Use Theorem \ref{lemma:existenceLC} for finding a new Lagois connection.
        \item \textit{Decrease in number of equivalence classes}: Use Corollary \ref{lemma:refine} to check if the new mappings (of which the initial mappings are a refinement) form a Lagois connection (discussed in more detail below). 
    \end{enumerate}
       
\end{enumerate}
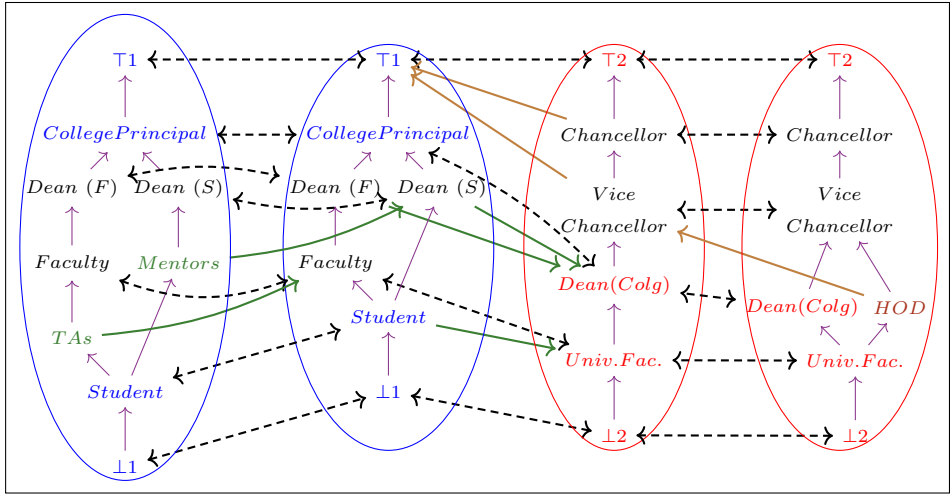
\begin{figure}[!ht]
\centering 
    \begin{tikzpicture}[framed,->,node distance=1cm,on grid]
    \title{W and D}
    \node(T2)  [xshift=3cm] {{\color{red}\scriptsize$\top 2$}};
    \node(T1)    {{\color{blue} \scriptsize$\top 1$}};
    
    \node(Dir1) [below of = T1] {{\color{blue}\scriptsize$CollegePrincipal$}};
    \node(D1) [below left of = Dir1] {\scriptsize$Dean\ (F)$};
    \node(F1) [below of = D1] {\scriptsize$Faculty$};
    \node(DS1) [below right of = Dir1] {\scriptsize$Dean\ (S)$};
    \node(S1) [below right of = F1] {{\color{blue}\scriptsize$Student$}};
    \node(B1)  [below of=S1]  {{\color{blue}\scriptsize$\bot 1$}};
    \node(Sec2)  [below of = T2] {\scriptsize$Chancellor$};
    \node[align=center] (AS2)  [below of=Sec2] {\scriptsize $Vice$ \\ \scriptsize $Chancellor$};
    \node(Dir2)  [below of=AS2] {{\color{red}\scriptsize$Dean(Colg)$}};
    \node(E2)  [below of=Dir2] {{\color{red}\scriptsize$Univ.Fac.$}};
    \node(B2)  [below of=E2]  {{\color{red}\scriptsize$\bot 2$}};
    \draw [densely dashed, black,  thick, <->] (F1) to (E2);
    \draw [OliveGreen, ->, thick] (S1) to (E2);
    \draw [densely dashed, black,  thick,<->] (B1) to (B2);
    \draw [densely dashed, black,  thick, <->] (T1) to (T2);
    \draw [densely dashed, black,  thick, <->] (Dir1) [bend left = 10] to (Dir2);
    \draw [OliveGreen, thick] (D1) to (Dir2);
    \draw [OliveGreen, thick] (DS1) to (Dir2);
    \draw [brown, thick] (AS2) to (T1);
    \draw [brown, thick] (Sec2) to (T1);
    \draw [Fuchsia, ->] (Dir1) to (T1);
    \draw [Fuchsia, ->] (D1) to (Dir1);
    \draw [Fuchsia, ->] (F1) to (D1);
    \draw [Fuchsia, ->] (S1) to (F1);
    \draw [Fuchsia, ->] (S1) to (DS1);
    \draw [Fuchsia, ->] (DS1) to (Dir1);
    \draw [Fuchsia, ->] (B1) to (S1);
    \draw [Fuchsia, ->] (B2) to (E2);
    \draw [Fuchsia, ->] (E2) to (Dir2);
    \draw [Fuchsia, ->] (Dir2) to (AS2);
    \draw [Fuchsia, ->] (AS2) to (Sec2);
    \draw [Fuchsia, ->] (Sec2) to (T2);
    
    \draw (0,-2.5) [blue] ellipse (1.4cm and 2.7cm);
    \draw (3,-2.5)[red] ellipse (1.2cm and 2.7cm);
    

   \node(T2')  [xshift=6cm] {{\color{red}\scriptsize$\top 2$}};
   \node(T1')  [xshift=-3.5cm]  {{\color{blue} \scriptsize$\top 1$}};
   
    \node(Dir1') [below of = T1'] {{\color{blue}\scriptsize$CollegePrincipal$}};
    \node(D1') [below left of = Dir1'] {\scriptsize$Dean\ (F)$};
    \node(F1') [below of = D1'] {\scriptsize$Faculty$};
    \node(DS1') [below right of = Dir1'] {\scriptsize$Dean\ (S)$};
    \node(M1') [below of = DS1'] {{\color{OliveGreen}\scriptsize$Mentors$}};
    \node(TA1') [below of = F1'] {{\color{OliveGreen}\scriptsize$TAs$}};
    \node(S1') [below right of = TA1'] {{\color{blue}\scriptsize$Student$}};
    \node(B1')  [below of=S1']  {{\color{blue}\scriptsize$\bot 1$}};
    \node(Sec2')  [below of = T2'] {\scriptsize$Chancellor$};
    \node[align=center] (AS2')  [below of=Sec2'] {\scriptsize $Vice$ \\ \scriptsize $Chancellor$};
    \node(Dir2')  [xshift=5.5cm,yshift=-3.3cm] {{\color{red}\scriptsize$Dean(Colg)$}};
    \node(HoD2')  [xshift=6.8cm,yshift=-3.3cm] {{\color{Mahogany}\scriptsize$HOD$}};
    \node(E2')  [below right of=Dir2'] {{\color{red}\scriptsize$Univ.Fac.$}};
    \node(B2')  [below of=E2']  {{\color{red}\scriptsize$\bot 2$}};
    \draw [Fuchsia, ->] (Dir1') to (T1');
    \draw [Fuchsia, ->] (D1') to (Dir1');
    \draw [Fuchsia, ->] (F1') to (D1');
    \draw [Fuchsia, ->] (TA1') to (F1');
    \draw [Fuchsia, ->] (S1') to (TA1');
    \draw [Fuchsia, ->] (S1') to (M1');
    \draw [Fuchsia, ->] (M1') to (DS1');
    \draw [Fuchsia, ->] (DS1') to (Dir1');
    \draw [Fuchsia, ->] (B1') to (S1');
    \draw [Fuchsia, ->] (B2') to (E2');
    \draw [Fuchsia, ->] (E2') to (Dir2');
    \draw [Fuchsia, ->] (Dir2') to (AS2');
     \draw [Fuchsia, ->] (E2') to (HoD2');
    \draw [Fuchsia, ->] (HoD2') to (AS2');
    \draw [Fuchsia, ->] (AS2') to (Sec2');
    \draw [Fuchsia, ->] (Sec2') to (T2');
    
    \draw [densely dashed, black,  thick, <->] (T1') to (T1);
    \draw [densely dashed, black,  thick, <->] (B1') to (B1);
    \draw [densely dashed, black,  thick, <->] (Dir1') to (Dir1);
    \draw [densely dashed, black,  thick, <->] (D1') to [bend left = 12] (D1);
    \draw [densely dashed, black,  thick, <->] (F1') to [bend right = 20] (F1);
    \draw [OliveGreen,  thick] (TA1') to [bend right = 10] (F1);
    \draw [OliveGreen,  thick] (M1') to  [bend right = 10](DS1);
    \draw [densely dashed, black,  thick, <->] (S1') to (S1);
    \draw [densely dashed, black,  thick, <->] (DS1') to [bend right = 12] (DS1);
    
    \draw [densely dashed, black,  thick, <->] (T2') to (T2);
    \draw [densely dashed, black,  thick, <->] (B2') to (B2);
    \draw [densely dashed, black,  thick, <->] (Sec2') to (Sec2);
    \draw [densely dashed, black,  thick, <->] (AS2') to (AS2);
    \draw [densely dashed, black,  thick, <->] (Dir2') to (Dir2);
    \draw [densely dashed, black,  thick, <->] (E2') to (E2);
    \draw [brown,  thick] (HoD2') to (AS2);
    
    \draw (-3.5,-2.5) [blue] ellipse (1.4cm and 3.1cm);
    \draw (6,-2.5)[red] ellipse (1.3cm and 2.7cm);
    

    \end{tikzpicture}
    \caption{\small Organisations can add security classes to their lattice structures autonomously as long as they are able to connect the new lattice structures with the old lattice structures (participating in the MoU) via a Lagois insertion.
    Dashed black arrows define permissible flows between budpoints. \label{fig:Bi-L-WnD-addSC1}} 
\end{figure}
\begin{figure}[!ht]
\centering
\begin{tikzpicture}[framed,->,node distance=1cm,on grid]
    \node(T2')  {{\color{blue}\scriptsize$\top 2$}};
   \node(T1')  [xshift=-3.5cm]  {{\color{blue} \scriptsize$\top 1$}};
   
    \node(Dir1') [below of = T1'] {{\color{blue}\scriptsize$CollegePrincipal$}};
    \node(D1') [below left of = Dir1'] {\scriptsize$Dean\ (F)$};
    \node(F1') [below of = D1'] {\scriptsize$Faculty$};
    \node(DS1') [below right of = Dir1'] {\scriptsize$Dean\ (S)$};
    \node(M1') [below of = DS1'] {{\color{OliveGreen}\scriptsize$Mentors$}};
    \node(TA1') [below of = F1'] {{\color{OliveGreen}\scriptsize$TAs$}};
    \node(S1') [below right of = TA1'] {{\color{blue}\scriptsize$Student$}};
    \node(B1')  [below of=S1']  {{\color{blue}\scriptsize$\bot 1$}};
    \node(Sec2')  [below of = T2'] {\scriptsize$Chancellor$};
    \node[align=center] (AS2')  [below of=Sec2'] {\scriptsize $Vice$ \\ \scriptsize $Chancellor$};
    \node(Dir2')  [xshift=-0.5cm,yshift=-3.3cm] {{\color{blue}\scriptsize$Dean(Colg)$}};
    \node(HoD2')  [xshift=0.8cm,yshift=-3.3cm] {{\color{Mahogany}\scriptsize$HOD$}};
    \node(E2')  [below right of=Dir2'] {{\color{blue}\scriptsize$Univ.Fac.$}};
    \node(B2')  [below of=E2']  {{\color{blue}\scriptsize$\bot 2$}};
    \draw [Fuchsia, ->] (Dir1') to (T1');
    \draw [Fuchsia, ->] (D1') to (Dir1');
    \draw [Fuchsia, ->] (F1') to (D1');
    \draw [Fuchsia, ->] (TA1') to (F1');
    \draw [Fuchsia, ->] (S1') to (TA1');
    \draw [Fuchsia, ->] (S1') to (M1');
    \draw [Fuchsia, ->] (M1') to (DS1');
    \draw [Fuchsia, ->] (DS1') to (Dir1');
    \draw [Fuchsia, ->] (B1') to (S1');
    \draw [Fuchsia, ->] (B2') to (E2');
    \draw [Fuchsia, ->] (E2') to (Dir2');
    \draw [Fuchsia, ->] (Dir2') to (AS2');
     \draw [Fuchsia, ->] (E2') to (HoD2');
    \draw [Fuchsia, ->] (HoD2') to (AS2');
    \draw [Fuchsia, ->] (AS2') to (Sec2');
    \draw [Fuchsia, ->] (Sec2') to (T2');
    
    \draw [densely dashed, black,  thick, <->] (F1') to [bend right = 10] (E2');
    \draw [OliveGreen, ->, thick] (TA1') to [bend right = 5] (E2');
    \draw [OliveGreen, ->,thick] (M1') [bend right = 10] to (Dir2');
    \draw [brown, ->,thick] (HoD2') [bend left = 3] to (T1');
    \draw [OliveGreen, ->, thick] (S1') to [bend right = 5] (E2');
    \draw [densely dashed, black,  thick,<->] (B1') to (B2');
    \draw [densely dashed, black,  thick, <->] (T1') to (T2');
    \draw [densely dashed, black,  thick, <->] (Dir1') [bend left = 20] to (Dir2');
    \draw [OliveGreen, thick] (D1') to (Dir2');
    \draw [OliveGreen, thick] (DS1') to (Dir2');
    \draw [brown, thick] (AS2') to (T1');
    \draw [brown, thick] (Sec2') to (T1');
    
    \draw (-3.5,-2.5) [blue] ellipse (1.4cm and 3.1cm);
    \draw (0,-2.5)[red] ellipse (1.3cm and 2.7cm);
    
    \draw [rotate around={25:(0,-1.4)}, dotted] (0,-2) ellipse (0.8cm and 1.8cm);
    \draw[rotate around={70:(-3.4,-2)}, dotted] (-3.4,-2) ellipse (0.7cm and 1.4cm);
    \draw[rotate around={40:(-4.2,-3.4)}, dotted] (-4.2,-4) ellipse (0.55cm and 1.15cm);
    
    \end{tikzpicture}
    \caption{\small A \emph{new} viable increasing Lagois connection created by the composition of old security lattices with new security lattices. 
    Dashed black arrows define permissible flows between budpoints. \label{fig:Bi-L-WnD-addSC2}} 
\end{figure}
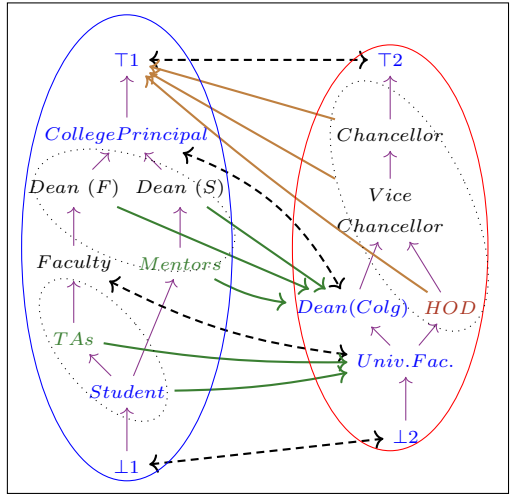
\subsubsection{Coarsening of inter-domain mappings.}
When the isomorphic structure is changed, the MoU has to be renegotiated.  When deleting a security class, we can always ``up-classify'' information pertaining to that class, to cause a stricter flow of information to a higher security class in the other lattice. 
When coarsening a lattice, we would like to re-establish a Lagois connection that respects the previous Lagois connection as much as possible (maintaining legacy) while making the necessary up-classifications where necessary.


Let $(\ld{L},\rid{\alpha}, \ld{\gamma}, \rid{M})$ be an existing secure Lagois connection.
Suppose we need to replace $\rid{\alpha}$ by another order-preserving function $\rid{\alpha'}: \ld{L} \rightarrow \rid{M}$. 
We look for a suitable 
map $\ld{\gamma'}: \rid{M} \rightarrow \ld{L}$ such that $(\ld{L},\rid{\alpha'}, \ld{\gamma'}, \rid{M})$ is a Lagois connection using the following result:
\begin{proposition}[Proposition 2.4 in \cite{melton1991connections}] \label{def:connection}
If $\rid{\alpha_1}: \ld{L} \rightarrow \rid{M}$ is an order-preserving map with semi-inverse $\ld{\gamma_1}: \rid{M} \rightarrow \ld{L}$, i.e.,
$\rid{\alpha_1} \circ \ld{\gamma_1} \circ \rid{\alpha_1} = \rid{\alpha_1}$, then $(\ld{L},\rid{\alpha_1}, \ld{\gamma_1} \circ \rid{\alpha_1} \circ \ld{\gamma_1}, \rid{M})$ is a (Lagois) connection.
\end{proposition}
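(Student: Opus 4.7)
The plan is to verify the Lagois connection axioms for the pair $\rid{\alpha_1}$, $\ld{\gamma_1'} := \ld{\gamma_1} \circ \rid{\alpha_1} \circ \ld{\gamma_1}$. The driving observation is that the semi-inverse identity $(S)$: $\rid{\alpha_1} \circ \ld{\gamma_1} \circ \rid{\alpha_1} = \rid{\alpha_1}$ yields two clean cancellation lemmas,
\[ \rid{\alpha_1} \circ \ld{\gamma_1'} ~=~ \rid{\alpha_1} \circ \ld{\gamma_1} \quad \text{and} \quad \ld{\gamma_1'} \circ \rid{\alpha_1} ~=~ \ld{\gamma_1} \circ \rid{\alpha_1}, \]
each obtained by bracketing $(S)$ at one end of the three-factor product defining $\ld{\gamma_1'}$.

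With these in hand, \textbf{LC3} reduces to $\rid{\alpha_1} \circ \ld{\gamma_1'} \circ \rid{\alpha_1} = \rid{\alpha_1} \circ \ld{\gamma_1} \circ \rid{\alpha_1} = \rid{\alpha_1}$, one application of the first cancellation lemma followed by $(S)$, and \textbf{LC4} becomes
\[ \ld{\gamma_1'} \circ \rid{\alpha_1} \circ \ld{\gamma_1'} ~=~ \ld{\gamma_1} \circ \rid{\alpha_1} \circ \ld{\gamma_1'} ~=~ \ld{\gamma_1} \circ \rid{\alpha_1} \circ \ld{\gamma_1} ~=~ \ld{\gamma_1'}, \]
using the two cancellation lemmas in succession. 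The order-theoretic conditions \textbf{LC1} and \textbf{LC2} also collapse via the same lemmas: since $\ld{\gamma_1'} \circ \rid{\alpha_1} = \ld{\gamma_1} \circ \rid{\alpha_1}$, the statement of LC1 for the primed pair is literally $\lambda \ld{l}.\ld{l} ~\ld{\sqsubseteq}~ \ld{\gamma_1} \circ \rid{\alpha_1}$, and symmetrically LC2 for the primed pair is $\lambda \rd{m}.\rd{m} ~\rd{\sqsubseteq}~ \rid{\alpha_1} \circ \ld{\gamma_1}$. The parenthetical ``(Lagois)'' in the statement signals exactly this: the equations LC3 and LC4 are delivered by the semi-inverse hypothesis alone, while the inequalities are inherited verbatim from any assumption the ambient pair already satisfies.

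The only potential snag is ensuring $\ld{\gamma_1'}$ is itself order-preserving, which is immediate as a composition of three order-preserving maps (granted for $\rid{\alpha_1}$ and standard for $\ld{\gamma_1}$ in the ``connection'' vocabulary of \cite{melton1991connections}). I do not anticipate any genuine obstacle: once the two cancellation lemmas are recorded, everything else is a short rewrite chain, and the striking feature is that $\ld{\gamma_1'}$ is engineered to enforce LC4 ``for free'' even when the supplied semi-inverse $\ld{\gamma_1}$ satisfies only the one-sided identity LC3.
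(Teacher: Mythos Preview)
The paper does not supply its own proof of this proposition; it is quoted verbatim from \cite{melton1991connections} and used only as a tool en route to Corollary~\ref{lemma:refine}. So there is no paper proof to compare against.

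Your algebraic verification is correct and is essentially the standard one: the two cancellation identities $\rid{\alpha_1}\circ\ld{\gamma_1'}=\rid{\alpha_1}\circ\ld{\gamma_1}$ and $\ld{\gamma_1'}\circ\rid{\alpha_1}=\ld{\gamma_1}\circ\rid{\alpha_1}$ follow immediately from the semi-inverse hypothesis, and \textbf{LC3}, \textbf{LC4} for the primed pair are then one-line rewrites. You are also right to flag that \textbf{LC1} and \textbf{LC2} do \emph{not} follow from the stated hypotheses alone --- the semi-inverse equation says nothing about the orderings --- and that the parenthetical ``(Lagois)'' reflects this: in the terminology of \cite{melton1991connections}, a \emph{connection} is characterised by the two triple-composition identities, and the increasing (Lagois) variant additionally requires the expansiveness inequalities, which must be supplied separately. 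In the paper's sole use of this proposition (Corollary~\ref{lemma:refine}), those inequalities are indeed available from the ambient Lagois connection $(\ld{L},\rid{\alpha},\ld{\gamma},\rid{M})$ one starts with, which is consistent with your reading.
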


As mentioned above, when the isomorphic structure is disturbed due to coarsening a lattice, we would like to retain as much of original functions as possible.
We can re-establish a Lagois connection making minimal changes to the old functions $\rid{\alpha}$ and $\ld{\gamma}$ Lagois  by considering an (order-preserving)
$\rid{\alpha'}$ which satisfies the assumptions:
\begin{enumerate}
    \item \label{maintain-cond2} $Ker(\rid{\alpha}) \subseteq Ker(\rid{\alpha'})$\footnote{$Ker(\rid{\alpha})$ is the equivalence relation on $\ld{L}$ given by $(\ld{a},\ld{b}) \in Ker(\rid{\alpha})$ iff $\rid{\alpha}(\ld{a}) = \rid{\alpha}(\ld{b})$. For $\ld{b} \in \ld{L}$, we denote the equivalence class containing $\ld{b}$ by $[\ld{b}]_{\rid{\alpha}}$, as the equivalence relation on $\ld{L}$ is defined by $\rid{\alpha}$.}
    \item \label{maintain-cond3} for each $\ld{l} \in \ld{L}, \rid{\alpha'}(\ld{l}) = \rid{\alpha}(\ld{l^*})$ when $\ld{l^*}$ is the largest element in $\{\ld{l_1} \in \ld{L}| \rid{\alpha'}(\ld{l_1}) = \rid{\alpha'}(\ld{l})\}.$
\end{enumerate}

Typically we would consider as a candidate an $\rid{\alpha'}$ that induces as fine a coarsening of the kernel of $\rid{\alpha}$ as possible while ensuring the second condition (\textit{i.e.}, mapping elements of equivalence classes induced by $\rid{\alpha'}$ (which should be closed with respect to joins) according to how $\rid{\alpha}$ mapped the largest element in that class). 
\begin{corollary}[Corollary 3.16 in \cite{MELTON1994lagoisconnections}] \label{lemma:refine}
Let $(\ld{L},\rid{\alpha}, \ld{\gamma}, \rid{M})$ be a Lagois connection, and let $\rid{\alpha'}\colon \ld{L} \rightarrow \rid{M}$ be an order-preserving map such that
\begin{enumerate}
    \item $\equiv_{\rid{\alpha}}$ is a refinement of $\equiv_{\rid{\alpha'}}$
    (i.e., $Ker(\rid{\alpha}) \subseteq Ker(\rid{\alpha'})$)
    and
    \item for each $\ld{l} \in \ld{L}$, the equivalence class $[\ld{l}]_{\rid{\alpha'}}$ has a largest element -- call it $\ld{l^*}$ -- with $\rid{\alpha'}(\ld{l}) = \rid{\alpha}(\ld{l^*})$.
\end{enumerate}
Then $(\ld{L}, \rid{\alpha'}, \ld{\gamma}  \circ \rid{\alpha'} \circ \ld{\gamma}, \rid{M})$ is a Lagois connection.
\end{corollary}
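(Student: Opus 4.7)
The plan is to reduce the proof to a single application of Proposition~\ref{def:connection}. That proposition delivers a full Lagois connection $(\ld{L},\rid{\alpha_1},\ld{\gamma_1} \circ \rid{\alpha_1} \circ \ld{\gamma_1},\rid{M})$ as soon as we supply an order-preserving $\rid{\alpha_1}$ together with a semi-inverse $\ld{\gamma_1}$, i.e.\ $\rid{\alpha_1} \circ \ld{\gamma_1} \circ \rid{\alpha_1} = \rid{\alpha_1}$. Instantiating with $\rid{\alpha_1} = \rid{\alpha'}$ and $\ld{\gamma_1} = \ld{\gamma}$ yields exactly the purported connection $(\ld{L},\rid{\alpha'},\ld{\gamma} \circ \rid{\alpha'} \circ \ld{\gamma},\rid{M})$, and the composite $\ld{\gamma} \circ \rid{\alpha'} \circ \ld{\gamma}$ is manifestly order-preserving. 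So the whole task collapses to verifying one equation: $\rid{\alpha'} \circ \ld{\gamma} \circ \rid{\alpha'} = \rid{\alpha'}$.

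To discharge this equation, I would fix an arbitrary $\ld{l} \in \ld{L}$ and invoke hypothesis~(2) to rewrite $\rid{\alpha'}(\ld{l}) = \rid{\alpha}(\ld{l^*})$, where $\ld{l^*}$ is the greatest element of $[\ld{l}]_{\rid{\alpha'}}$. The semi-inverse equation then reduces to the single subgoal $\ld{\gamma}(\rid{\alpha}(\ld{l^*})) = \ld{l^*}$, after which
\[
\rid{\alpha'}(\ld{\gamma}(\rid{\alpha'}(\ld{l}))) \;=\; \rid{\alpha'}(\ld{\gamma}(\rid{\alpha}(\ld{l^*}))) \;=\; \rid{\alpha'}(\ld{l^*}) \;=\; \rid{\alpha'}(\ld{l})
\]
is immediate by direct substitution.

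I would establish the subgoal $\ld{\gamma}(\rid{\alpha}(\ld{l^*})) = \ld{l^*}$ by sandwiching. The lower bound $\ld{l^*} \;\ld{\sqsubseteq}\; \ld{\gamma}(\rid{\alpha}(\ld{l^*}))$ is just \textbf{LC1} of the original Lagois connection. For the upper bound, \textbf{LC3} of the original connection gives $\rid{\alpha}(\ld{\gamma}(\rid{\alpha}(\ld{l^*}))) = \rid{\alpha}(\ld{l^*})$, so $\ld{\gamma}(\rid{\alpha}(\ld{l^*}))$ and $\ld{l^*}$ lie in the same class of $Ker(\rid{\alpha})$; hypothesis~(1), $Ker(\rid{\alpha}) \subseteq Ker(\rid{\alpha'})$, promotes this to $\rid{\alpha'}$-equivalence, placing $\ld{\gamma}(\rid{\alpha}(\ld{l^*}))$ inside $[\ld{l^*}]_{\rid{\alpha'}} = [\ld{l}]_{\rid{\alpha'}}$. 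Maximality of $\ld{l^*}$ in that class then forces $\ld{\gamma}(\rid{\alpha}(\ld{l^*})) \;\ld{\sqsubseteq}\; \ld{l^*}$, which combined with the lower bound yields equality.

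The main obstacle — and genuinely the only non-routine step — is recognising that hypothesis~(1) is precisely the lever that converts $\rid{\alpha}$-kernel equality into $\rid{\alpha'}$-kernel equality, which is what allows the bare \textbf{LC1} inequality to be upgraded to an equality by way of the maximality clause in hypothesis~(2). Everything else is a short chain of substitutions followed by a one-line appeal to Proposition~\ref{def:connection}.
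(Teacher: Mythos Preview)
Your proposal is correct, and it follows precisely the route the paper itself signposts: the paper states Proposition~\ref{def:connection} immediately before Corollary~\ref{lemma:refine} and explicitly frames the two hypotheses on $\rid{\alpha'}$ as the conditions needed to invoke that proposition, but does not actually write out a proof of the corollary (it is cited verbatim from \cite{MELTON1994lagoisconnections}). Your reduction to the semi-inverse identity $\rid{\alpha'}\circ\ld{\gamma}\circ\rid{\alpha'}=\rid{\alpha'}$, and the sandwiching argument that establishes $\ld{\gamma}(\rid{\alpha}(\ld{l^*}))=\ld{l^*}$ via \textbf{LC1}, \textbf{LC3}, hypothesis~(1), and the maximality in hypothesis~(2), are all sound and exactly fill in what the paper leaves implicit.
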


\section{Securely Connecting  Decentralised Label Models}\label{sec:connecting-DLM}


In this section, we show how Lagois Connections can be used to ensure secure information flow between two organisations, both of which have employed the decentralized label model  (DLM) of Myers \cite{Myers1997-ss} for SIF. 
This illustrates how the Lagois framework can extend autonomy in IFC within an organisation to secure cross-organisational decentralised control.
We show that if a Lagois Connection $LC$ has been established between the Principals hierarchies\footnote{We prefer the term ``principals hierarchy" to ``principal hierarchy'' for grammatical reasons.} of both organisations, we can establish a Lagois connection $\widehat{LC}$ between the security lattices formed by the  labels derived from the respective Principals hierarchies.
The second result we show is that the declassification rule \cite{Myers1997-ss} does not introduce any insecure flows, even when exchanging information between domains.  

\subsection{The Decentralised Label Model}\label{sec:DLM-summary}

\paragraph{Principals Hierarchy} \ \ 
DLM based systems \cite{myers1999jflow, liu2009fabric, liu2017fabric} use abstract \textit{principals} to represent entities that can trust or be trusted,\textit{ e.g.}, users, roles, groups, organizations, privileges, etc. 
Principals express trust via \textit{acts-for} relations \cite{Myers1997-ss}. 
If a principal $p$ acts-for a principal $q$, then $q$ trusts $p$ completely and $p$ may perform any action (read/
 declassify) that $q$ may perform (written $p \succeq q$). The acts-for relation is a \textit{pre-order} and the \textit{principals hierarchy} refers to the set of principals under the acts-for ordering.
The operators $\conjp$ 
and $\disjp$ 
can be used to form conjunction and disjunction of principals in more elaborate principals hierarchies.
The conjunctive principal $p \conjp q$ represents the joint authority of $p$ and $q$, and acts for both: $p \conjp q \succeq p$ and $p \conjp q \succeq q$. 
The disjunctive principal $p \disjp q$ represents the disjoint authority of $p$ and $q$, and is acted for by both: $p \succeq p \disjp q$ and $q \succeq p \disjp q$. 
For convenience we include the most
restrictive and least restrictive principals, denoted as $\top$ and $\bot$ respectively. 

\paragraph{Label Model} \ \ 
The security policies in DLM are expressed using \textit{labels}: each label is the \textit{conjunction} of a \emph{set of policies} each of which expresses privacy\footnote{As observed by Denning, the analysis for the \textit{integrity} of \textit{written} values is dual to the analysis for privacy of values read.} requirements in terms of principals \cite{myers-phd-tr-award}.
In the DLM framework, a privacy policy has two parts: an \textit{owner}, and a \textit{set of readers}, and is written in the form ``owner: readers''. 
The owner of a policy is a principal whose data has contributed to constructing the value that is labeled by this policy. 
The readers of a policy are a set of principals who are permitted by the owner to read the value so labelled.

For example, in the
label $L = \{o_1 : r_3, r_4;\ o_2 : r_4, r_5\}$, there are two policies (semicolons are separators) -- 
one owned by owner $o_1$, which permits the set of readers $\{r_3, r_4\}$, and
the second owned by owner $o_2$ that permits the set of readers $\{r_4, r_5\}$. 
A principal wishing to access an object is required to satisfy \textit{all} confidentiality policy components in the object's label to be able to learn that object's value.
Thus each policy can be viewed as a constraint placed by the owner on what flows are permitted between principals, and a principal who is not the owner of any policy labelling a value places no constraints on allowed flows for that value. 

If a policy $K$ is part of the label $L$ (i.e., $K \in L$), then $\textbf{o}(K): \mathit{policy} \rightarrow \mathit{principal}$ denotes the owner of that policy, and $\textbf{r}(K): \mathit{policy} \rightarrow (\mathit{principal}~\textit{set})$ denotes the set of readers specified by that policy. 
The functions $\textbf{o}$ and $\textbf{r}$ completely characterize a label. 

These policies then can be \textit{modified} safely by the individual owners -- a form of safe \textit{decentralised} declassification. 
An owner may add readers to the reader set of its policy in a label, or remove the entire policy, effectively allowing all readers. 
Arbitrary declassification is not possible because flow policies of other principals remain in force.

\paragraph{Derived Information Flow Lattice} \ \ 
A pre-ordering relation on labels is derived from the acts-for relation $\succeq$ \cite{myers-phd-tr-award}.
 
We write $P \vdash L_1 \sqsubseteq L_2$, when $L_1$ is less or equal to $L_2$, given a principals hierarchy $P$.
The relation
$P \vdash L_1 \sqsubseteq L_2$ is defined formally in \cite{myers-phd-tr-award}, as shown in Figure \ref{fig:completerelabelingrule}.
\begin{figure}[!ht]
    \centering
\begin{equation}
    \boxed{
        \begin{array}{rcl}
        P \vdash L_1 \sqsubseteq L_2 & \Longleftrightarrow & \forall(I \in L_1) \exists(J \in L_2) P \vdash I \sqsubseteq J\\
        P \vdash I \sqsubseteq J
        & \Longleftrightarrow & P \vdash \textbf{o}(J) \succeq \textbf{o}(I) ~\wedge \\
        && ~~\forall(r_j \in \textbf{r}(J)) [ P \vdash r_j \succeq \textbf{o}(I) \vee \exists(r_i \in \textbf{r}(I)) P \vdash r_j \succeq r_i ]\\
        & \Longleftrightarrow & P \vdash \textbf{o}(J) \succeq \textbf{o}(I)~ \wedge \\
        && ~~\forall(r_j \in \textbf{r$^+$}(J))~ \exists(r_i \in \textbf{r$^+$}(I)) ~P \vdash r_j \succeq r_i\\
        \end{array}
    }
\end{equation}
\caption{Definition of complete relabeling rule ($\sqsubseteq$)}
    \label{fig:completerelabelingrule}
\end{figure}
The definition says that given the principals hierarchy $P$, it is safe to relabel information tagged with label $L_1$ to $L2$, if each policy 
$I \in L_1$ is subsumed by a policy $J \in L_2$. Policy $J$ subsumes $I$ when the owner in $J$ ``acts-for'' the owner of $I$, and for every reader $r_j$ (effectively) permitted by policy $J$, there is a reader $r_i$ (effectively) permitted by $I$ such that $r_j$ can act for $r_i$.
Note that each policy component  of a label (e.g., $I,J$) can also be considered a label, so writing $P \vdash I \sqsubseteq J$ is only mild abuse of notation. 
The relation $\sqsubseteq$ is a pre-order, i.e, reflexive and transitive, but not necessarily anti-symmetric. 
However, since the acts-for pre-order $\succeq$ supports join and meet operations, and because the semantics of labels is given in terms of \textit{sets} of permitted flows, we can construct an \textit{information flow lattice} (IFL) by 
 defining an equivalence relation $L_1 \equiv_{\sqsubseteq} L_2 ~\Longleftrightarrow~ (L_1 \sqsubseteq L_2 \text{ and } L_2 \sqsubseteq L_1)$ and taking equivalence classes to be lattice elements. 
 The join and meet on this information flow lattice are called \textit{label join} and \textit{label meet} (written as $\sqcup$ and $\sqcap$ respectively). 
 Since labels are sets of policies that must be together satisfied, we get the so-called ``\textit{Join rule}'', namely $L_1 \sqcup L_2 ~=~ L_1 \cup L_2$.
 The least label and greatest label are $\{\}$ and $\{\top\colon\}$ respectively. 

Myers presents a rule for safe \textit{relabelling by declassification} \cite{myers-phd-tr-award}:
Let $A$ be a set of principals in the current authority.
Let $L_A = \bigcup_{p \in A} \{p \colon \}$. 
Then $L_1$ can be safely declassified to $L_2$ if
$L_1 \sqsubseteq L_2 \sqcup L_A$.

\subsection{Lagois Connections on Principals Hierarchies
and Derived IFLs}\label{sec:lagois-DLM-IFL}
Assume that two organizations with their own principals hierarchies
$\ld{P_L}$ and $\rid{P_R}$ negotiate an increasing Lagois Connection 
\(
    LC = (\ld{P_L}, \rid{\alpha}, \ld{\gamma}, \rid{P_R})
    \)
between their principals hierarchies, $\rid{\alpha}: \ld{P_L} \rightarrow \rid{P_R}$ and $\ld{\gamma}: \rid{P_R} \rightarrow \ld{P_L}$.
This Lagois connection may be considered as a coupling between the input and outputs channels of the respective individual domains, together with certain strong information flow guarantees.

We show that there exists an increasing Lagois connection, $\widehat{LC}$, between the information flow lattices of the two organisations which is derived from the Lagois Connection between their principals hierarchies (see Figure \ref{fig:LCcommutesDLM}). 
\begin{definition}[LC on IFLs]\label{Def:LC_IFL}
\ \ Define 
$
    \widehat{LC} = ( \ld{\mathit{IFL}_L}, \rid{\hat{\alpha}}, \ld{\hat{\gamma}}, \rid{\mathit{IFL}_R} )
$, 
where $\ld{\mathit{IFL}_L}$ and $\rid{\mathit{IFL}_R}$ are the corresponding \textit{information flow lattices}, and
$\rid{\hat{\alpha}}: \ld{\mathit{IFL}_L} \rightarrow \rid{\mathit{IFL}_R}$
and 
$\ld{\hat{\gamma}}:  \rid{\mathit{IFL}_R} \rightarrow
\ld{\mathit{IFL}_L}$ are specified as: 
\begin{flalign*}
&\rid{\hat{\alpha}}(\ld{L_l}) = \bigcup_{\ld{I_l} \in \ld{L_l}}\ 
\rid{\hat{\alpha}}(\ld{I_l}) \\
&\rid{\hat{\alpha}}(\ld{I_l}) =\ \{ \langle \rid{\alpha}(\textbf{o}(\ld{I_l})): \{\rid{\alpha}(\ld{r_l})\ |\ \ld{r_l} \in \textbf{r}(\ld{I_l})\} \rangle \} \\
&\ld{\hat{\gamma}}(\rid{L_r}) = \bigcup_{\rid{I_r} \in \rid{L_r}}\ 
\ld{\hat{\gamma}}(\rid{I_r}) \\
&\ld{\hat{\gamma}}(\rid{I_r}) =\ \{ \langle \ld{\gamma}(\textbf{o}(\rid{I_r})): \{\ld{\gamma}(\rid{r_r})\ |\ \rid{r_r} \in \textbf{r}(\rid{I_r})\} \rangle \}
\end{flalign*}
\end{definition}

Observe that $\rid{\hat{\alpha}}$ and $\ld{\hat{\gamma}}$ distribute homomorphically over joins:
\begin{align}
 \rid{\hat{\alpha}}(\ld{L_1} \ld{\sqcup} \ld{L_2}) ~=~
 \rid{\hat{\alpha}}(\ld{L_1} \cup \ld{L_2}) ~=~
 \rid{\hat{\alpha}}(\ld{L_1}) ~\rd{\sqcup}~
  \rid{\hat{\alpha}}(\ld{L_2})
\label{LagoisLabelDistr1} 
\end{align}
\begin{align}
  \ld{\hat{\gamma}}(\rd{L_1} \rd{\sqcup} \rd{L_2}) ~=~
 \ld{\hat{\gamma}}(\rd{L_1} \cup \rd{L_2}) ~=~
 \ld{\hat{\gamma}}(\rd{L_1}) ~\ld{\sqcup}~
  \ld{\hat{\gamma}}(\rd{L_2})
  \label{LagoisLabelDistr2} 
\end{align}

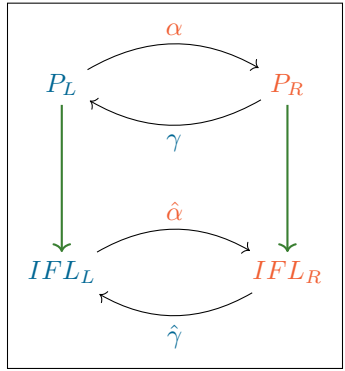
\begin{figure}[!ht]
    \centering
    \begin{tikzpicture}[framed,->,node distance=1cm,on grid]
    \node(T2)   {$\rid{P_R}$};
    \node(T1)  [xshift=-3cm]  {$\ld{P_L}$};
    \node(B1)  [xshift=-3cm, yshift=-2.5cm]  {$\ld{IFL_L}$};
    \node(B2)  [yshift=-2.5cm]  {$\rid{IFL_R}$};
    
    \draw[every loop] 
    (T1) edge[bend left, auto=left] node {$\rid{\alpha}$} (T2) 
    (T2) edge[bend left, auto=left] node {$\ld{\gamma}$} (T1) ;
    
    \draw[every loop] 
    (B1) edge[bend left, auto=left] node {$\rid{\hat{\alpha}}$} (B2) 
    (B2) edge[bend left, auto=left] node {$\ld{\hat{\gamma}}$} (B1) ;
    \draw [OliveGreen, thick, =>] (T1) to (B1);
    \draw [OliveGreen, thick, =>] (T2) to (B2);
    \end{tikzpicture}
    \caption{A Lagois connection between two principals hierarchies induces a Lagois connection between the corresponding Information Flow Lattices}
    \label{fig:LCcommutesDLM}
\end{figure}

\begin{theorem}\label{Thm:LC-PH-LC-IFL}
Let $LC = (\ld{P_L}, \rid{\alpha}, \ld{\gamma}, \rid{P_R})$ be an increasing Lagois connection and $\ld{\mathit{IFL}_L}$ be derived from $\ld{P_L}$ and $\rid{\mathit{IFL}_R}$ be derived from $\rid{P_R}$, as described earlier\footnote{Note $\rid{\alpha}: \ld{P_L} \rightarrow \rid{P_R}$ and $\ld{\gamma}: \rid{P_R} \rightarrow \ld{P_L}$ are total functions by 
Theorem \ref{prop:fnguniquelydetermin} and Corollary \ref{cor:existenceLC2}.}. 
Then  $\widehat{LC} = (\ld{\mathit{IFL}_L}, \rid{\hat{\alpha}}, \ld{\hat{\gamma}}, \rid{\mathit{IFL}_R})$ is also an increasing Lagois connection.
\end{theorem}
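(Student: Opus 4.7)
The approach is to exploit the homomorphic distributivity of $\rid{\hat{\alpha}}$ and $\ld{\hat{\gamma}}$ over label joins (equations (\ref{LagoisLabelDistr1}) and (\ref{LagoisLabelDistr2})) so as to reduce verification of every Lagois clause to the single-policy case, and then to lift the four Lagois properties of the underlying principals-level connection $LC$ componentwise to owners and readers of policies. Because a label is a union of policies and $\rid{\hat\alpha}, \ld{\hat\gamma}$ are defined pointwise on policies, the entire argument decomposes along this structure.

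First I would establish that both induced maps are monotone (hence well-defined on $\equiv_{\sqsubseteq}$-equivalence classes). Given $\ld{L_1} \ld{\sqsubseteq} \ld{L_2}$ in $\ld{\mathit{IFL}_L}$, for each $\ld{I} \in \ld{L_1}$ the relabelling rule of Figure \ref{fig:completerelabelingrule} supplies some $\ld{J} \in \ld{L_2}$ with $\textbf{o}(\ld{J}) \succeq \textbf{o}(\ld{I})$ together with a witness-matching on effective readers. Monotonicity of $\rid{\alpha}$ on $\ld{P_L}$ preserves both conditions, so $\rid{\hat{\alpha}}(\ld{I}) \rid{\sqsubseteq} \rid{\hat{\alpha}}(\ld{J})$ for each such $\ld{I}$; quantifying over $\ld{I} \in \ld{L_1}$ yields $\rid{\hat{\alpha}}(\ld{L_1}) \rid{\sqsubseteq} \rid{\hat{\alpha}}(\ld{L_2})$. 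The argument for $\ld{\hat{\gamma}}$ is symmetric, appealing to monotonicity of $\ld{\gamma}$.

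Next, I verify \textbf{LC1}--\textbf{LC4} policy-by-policy and extend by distributivity. For \textbf{LC1} applied to a single policy $\ld{I} = \{\ld{o}\colon \ld{r_1},\ldots,\ld{r_n}\}$, the image $\ld{\hat{\gamma}}(\rid{\hat{\alpha}}(\ld{I})) = \{\ld{\gamma}(\rid{\alpha}(\ld{o}))\colon \ld{\gamma}(\rid{\alpha}(\ld{r_1})),\ldots,\ld{\gamma}(\rid{\alpha}(\ld{r_n}))\}$ has its owner acting for $\ld{o}$ and each reader acting for the corresponding $\ld{r_i}$ by \textbf{LC1} on $\ld{P_L}$, so the relabelling rule is satisfied and $\ld{I} \ld{\sqsubseteq} \ld{\hat{\gamma}}(\rid{\hat{\alpha}}(\ld{I}))$. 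Distributing over $\ld{L} = \bigcup \ld{I}$ via (\ref{LagoisLabelDistr2}) then (\ref{LagoisLabelDistr1}) gives \textbf{LC1}. \textbf{LC2} is symmetric via \textbf{LC2} on principals. For \textbf{LC3}, computing $\rid{\hat{\alpha}}(\ld{\hat{\gamma}}(\rid{\hat{\alpha}}(\ld{I})))$ on a single policy replaces each owner and reader by $\rid{\alpha}(\ld{\gamma}(\rid{\alpha}(\cdot)))$, which equals $\rid{\alpha}(\cdot)$ by \textbf{LC3} on $\ld{P_L}$; the resulting label is literally identical to $\rid{\hat{\alpha}}(\ld{I})$ as a policy set, and distributivity lifts this to arbitrary $\ld{L}$. \textbf{LC4} is dual.

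The main delicate point will be careful handling of effective readers: the relabelling rule uses $\textbf{r}^+$ rather than $\textbf{r}$, implicitly including the owner in the reader set, so when chasing reader witnesses one must confirm that the image of the owner under $\rid{\alpha}$ or $\ld{\gamma}$ continues to play the role of an effective reader in the image policy (which it does, being its owner, and therefore acting for itself via reflexivity of $\succeq$). Additionally, the equalities \textbf{LC3} and \textbf{LC4} are in principle to be interpreted modulo $\equiv_{\sqsubseteq}$ in the IFLs rather than as literal label equality; fortunately my componentwise argument in fact produces identical policy sets, so this subtlety is cosmetic and evaporates once the distributivity lemmas are in hand.
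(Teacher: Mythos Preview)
Your proposal is correct and takes essentially the same approach as the paper: both arguments work policy-by-policy, lifting each Lagois clause from the principals-level connection componentwise through owners and readers via Definition~\ref{Def:LC_IFL} and the relabelling rule of Figure~\ref{fig:completerelabelingrule}. Your explicit framing via the distributivity equations (\ref{LagoisLabelDistr1})--(\ref{LagoisLabelDistr2}) is just a clean restatement of the union-based definition the paper uses directly, and your attention to $\textbf{r}^+$ and to well-definedness on $\equiv_{\sqsubseteq}$-classes is a welcome bit of hygiene the paper leaves implicit.
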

\begin{proof}
We prove the following properties for $\rid{\hat{\alpha}}$ and $\ld{\hat{\gamma}}$ using Definition \ref{Def:LC_IFL}:
\begin{enumerate}
    \item Monotonicity
    \begin{enumerate}
    \item \label{IFL:prop1}If $\ld{P_L} \vdash \ld{L_{l1}} \ld{\sqsubseteq} \ld{L_{l2}}$ then $\rid{P_R} \vdash \rid{\hat{\alpha}}(\ld{L_{l1}})\ \rd{\sqsubseteq}\ \rid{\hat{\alpha}}(\ld{L_{l2}})$
    \item \label{IFL:prop2}If $\rid{P_R} \vdash \rid{L_{r1}} \rd{\sqsubseteq} \rid{L_{r2}}$ then $\ld{P_L} \vdash \ld{\hat{\gamma}}(\rid{L_{r1}})\ \ld{\sqsubseteq}\ \ld{\hat{\gamma}}(\rid{L_{r2}})$
    \end{enumerate}
    \item Increasing 
    \begin{enumerate}
    \item \label{IFL:prop3} $\ld{P_L} \vdash \ld{L_{l1} \sqsubseteq \hat{\gamma}}(\rid{\hat{\alpha}}(\ld{L_{l1}})),~~~ \forall \ld{L_{l1}} \in \ld{\mathit{IFL}_L}$
    \item \label{IFL:prop4} $\rid{P_R} \vdash \rid{L_{r1}} \rd{\sqsubseteq}\ \rid{\hat{\alpha}}(\ld{\hat{\gamma}}(\rid{L_{r1}})), \forall \rid{L_{r1}} \in \rid{\mathit{IFL}_R}$
    \end{enumerate}
    \item Identity/ Equality/ Fixed Points
    \begin{enumerate}
    \item \label{IFL:prop5} $\rid{P_R} \vdash \rid{\hat{\alpha}}(\ld{L_{l1}}) \equiv_{\rd{\sqsubseteq}} \rid{\hat{\alpha}}(\ld{\hat{\gamma}}(\rid{\hat{\alpha}}(\ld{L_{l1}}))), ~~~\forall \ld{L_{l1}} \in \ld{\mathit{IFL}_L}$
    \item \label{IFL:prop6} $\ld{P_L} \vdash \ld{\hat{\gamma}}(\rid{L_{r1}}) \equiv_{\ld{\sqsubseteq}} \ld{\hat{\gamma}}(\rid{\hat{\alpha}}(\ld{\hat{\gamma}}(\rid{L_{r1}}))), \forall \rid{L_{r1}} \in \rid{\mathit{IFL}_R}$
    \end{enumerate}
\end{enumerate}
    We show proofs for properties \ref{IFL:prop1}, \ref{IFL:prop3} and \ref{IFL:prop5}, as others are similar. We use notation $\textbf{r}^+(I) = \textbf{r}(I) \cup\ \textbf{o}(I)$ \cite{myers-phd-tr-award}. 
\begin{enumerate}
    \item[(\ref{IFL:prop1})]
    Given $LC =\  (\ld{P_L}, \rid{\alpha}, \ld{\gamma}, \rid{P_R})$ is an increasing Lagois connection, we know that $\rid{\alpha}$ is an order-preserving function. 
    So, if $\ld{p_i} \preceq \ld{p_j}$ then $\rid{\alpha}(\ld{p_i})\ \rd{\preceq}\ \rid{\alpha}(\ld{p_j})$, for all principals $\ld{p_i,p_j} \in \ld{P_L}$. 
    
    \begin{alignat}{3}
        & \ld{P_L} \vdash \ld{L_{l1} \sqsubseteq L_{l2}} \Longleftrightarrow 
        \forall(\ld{I} \in \ld{L_{l1}}) \exists(\ld{J} \in \ld{L_{l2}}) \ld{P_L} \vdash \ld{I \sqsubseteq J} &&~(\text{Def of} \sqsubseteq) \label{given1a} \\
        & \ld{P_L} \vdash \ld{I \sqsubseteq J} \equiv \ld{P_L} \vdash  \textbf{o}(\ld{I}) \ld{\preceq} \textbf{o}(\ld{J})  ~\wedge~ \\ &\hspace{1.5cm}\forall(\ld{r_j}\in \textbf{r}^+(\ld{J}))\ \exists(\ld{r_i} \in \textbf{r}^+(\ld{I}))\ \ld{P_L} \vdash \ld{r_i \preceq r_j} &&~(\text{Def of} \sqsubseteq) \label{1agivenrd}
    \end{alignat}
As $\rid{\alpha}$ is order-preserving, we get  \label{1alphamonotone}
    \begin{alignat}{3}    
        & \rid{P_R} \vdash \rid{\alpha}(\textbf{o}(\ld{I})) \rd{\preceq} \rid{\alpha}(\textbf{o}(\ld{J})) ~\wedge \\ 
        &\hspace{1cm}\forall(\ld{r_j}\in \textbf{r}^+(\ld{J}))\ \exists(\ld{r_i} \in \textbf{r}^+(\ld{I}))\ \rid{P_R} \vdash \rid{\alpha}(\ld{r_i}) \rd{\preceq} \rid{\alpha}(\ld{r_j}) &&~~(\ref{1agivenrd},\ref{1alphamonotone}) \label{1aa}\\ 
        &\rid{P_R} \vdash \textbf{o}(\rid{\hat{\alpha}}(\ld{I})) \rd{\preceq} \textbf{o}(\rid{\hat{\alpha}}(\ld{J})) \wedge \\ 
        &\hspace{1cm}\forall(\rd{r_j}\in \textbf{r}^+(\rid{\hat{\alpha}}(\ld{J})))\ \exists(\rd{r_i} \in \textbf{r}^+(\rid{\hat{\alpha}}(\ld{I})))~ \rid{P_R} \vdash \rd{r_i} \rd{\preceq} \rd{r_j} &&~(\ref{1aa},\text{Def}~ \ref{Def:LC_IFL}) \label{1alphanew} \\
        & \implies \rid{P_R} \vdash \rid{\hat{\alpha}}(\ld{I}) \rd{\sqsubseteq} \rid{\hat{\alpha}}(\ld{J}) \label{1alph} &&~(\ref{1alphanew},\text{Def}~ \ref{Def:LC_IFL})\\
        & \forall(\ld{I}\in \ld{L_{l1}})\ \exists(\ld{J} \in \ld{L_{l2}})\ \rid{P_R} \vdash \rid{\hat{\alpha}}(\ld{I})\ \rd{\sqsubseteq} \rid{\hat{\alpha}}(\ld{J}) && ~(\ref{given1a},\ref{1alph}) \label{1anewpolicies} && \\
        &\rid{P_R} \vdash \rid{\hat{\alpha}}(\ld{L_{l1}})\ \rd{\sqsubseteq} \rid{\hat{\alpha}}(\ld{L_{l2}}) && ~(\ref{1anewpolicies},\text{Def}~ \ref{Def:LC_IFL})
    \end{alignat}
    Hence we have proved Monotonicity.
    \item[(\ref{IFL:prop3})]
 Assuming $\ld{\gamma} \circ \rid{\alpha}$ is increasing we have for any $\ld{p} \in \ld{P_L}$:
    \begin{alignat}{3}
    & \ld{P_L} \vdash \ld{p} ~\ld{\preceq}~ \ld{\gamma}(\rid{\alpha}(\ld{p})), ~~
    (\forall \ld{p} \in \ld{L}) && && \\
    & \forall(\ld{I}\in \ld{L_{l1}})\ \ld{P_L} \vdash \textbf{o}(\ld{I}) \ld{\preceq} \ld{\gamma}(\rid{\alpha}(\textbf{o}(\ld{I}))   ~\wedge && 
    \\
    & \hspace{1cm}\forall(\ld{r_i}\in \textbf{r}^+(\ld{I})\ \ld{P_L} \vdash \ld{r_i} ~\ld{\preceq}~ \ld{\gamma}(\rid{\alpha}(\ld{r_i}))   && 
    \\
    & \forall(\ld{I}\in \ld{L_{l1}})\ \ld{P_L} \vdash 
    \textbf{o}(\ld{I}) ~\ld{\preceq}~
    \textbf{o}(\ld{\hat{\gamma}}(\rid{\hat{\alpha}}(\ld{I})))~  \wedge && ~(\text{Def}~\ref{Def:LC_IFL})\\
    & \hspace{1cm}\forall(\ld{r_j}\in \textbf{r}^+(\hat{\gamma}(\rid{\hat{\alpha}}(\ld{I})))\ \exists(\ld{r_i} \in \textbf{r}^+(\ld{I})) \ \ld{P_L} \vdash \ld{r_i} \ld{\preceq} \ld{r_j} && ~(\text{Def}~ \ref{Def:LC_IFL})\\
    & \forall(\ld{I}\in \ld{L_{l1}})\ \ld{P_L} \vdash \ld{I} ~\ld{\sqsubseteq}~ \ld{\hat{\gamma}}(\rid{\hat{\alpha}}(\ld{I})) && ~(\text{Def of }\sqsubseteq)\\
    & \forall(\ld{I}\in \ld{L_{l1}}) \exists(\ld{J} \in \ld{\hat{\gamma}}(\rid{\hat{\alpha}}(\ld{L_{l1}}))\ \ld{P_L} \vdash \ld{I} ~\ld{\sqsubseteq}~ \ld{J} && ~(\text{Def of} \sqsubseteq)\\
    & \ld{P_L} \vdash \ld{L_{l1}} ~\ld{\sqsubseteq}~ \ld{\hat{\gamma}}(\rid{\hat{\alpha}}(\ld{L_{l1}})) && ~(\text{Def of} \sqsubseteq)
    \end{alignat}
    Hence we have proved $\hat{\ld{\gamma}} \circ \rid{\hat{\alpha}}$ is increasing.

    \item[(\ref{IFL:prop5})] We have by Definition \ref{Def:LC_IFL}, $\forall \ld{I_l} \in \ld{L_l}$
    \begin{align*}
        &\rid{\hat{\alpha}}(\ld{I_l}) =\ \langle \rid{\alpha}(\textbf{o}(\ld{I_l})): \{\rid{\alpha}(\ld{r})\ |\ \ld{r} \in \textbf{r}(\ld{I_l})\} \rangle,  \\ \\
        &\rid{\hat{\alpha}}(\ld{\hat{\gamma}}(\rid{\hat{\alpha}}(\ld{I_l}))) = \langle \rid{\alpha}(\ld{\gamma}(\rid{\alpha}(\textbf{o}(\ld{I_l})))): \{\rid{\alpha}(\ld{\gamma}(\rid{\alpha}(\ld{r}))) ~|~ \ld{r} \in \textbf{r}(\ld{I_l})\} \rangle
    \end{align*}
    As $LC = (\ld{P_L}, \rid{\alpha}, \ld{\gamma}, \rid{P_R})$ is an increasing Lagois connection, we know that $\rid{\alpha} \circ \ld{\gamma} \circ \rid{\alpha} = \rid{\alpha}$. 
    Therefore, 
$
    \rid{\hat{\alpha}}(\ld{L_{l1}}) \equiv_{\rd{\sqsubseteq}} \rid{\hat{\alpha}}(\ld{\hat{\gamma}}(\rid{\hat{\alpha}}(\ld{L_{l1}}))),\ \forall \ld{L_{l1}} \in \ld{\mathit{IFL}_L}
$. \\
    Hence proved.
\end{enumerate}
\end{proof}

Assume the principals hierarchies in two domains are static and Laogis-connected.
Let $\ld{A}$ and $\rd{A}$ represent two Lagois-connected sets of principals under which authority two communicating processes are operating in their respective domains. 
Then the safe declassification rule continues to apply even with bidirectional communication between the two domains.
\begin{corollary}\label{corollary:safe-declassification}
Let $LC = (\ld{P_L}, \rid{\alpha}, \ld{\gamma}, \rid{P_R})$ be an increasing Lagois connection and 
let $\widehat{LC} = (\ld{\mathit{IFL}_L}, \rid{\hat{\alpha}}, \ld{\hat{\gamma}}, \rid{\mathit{IFL}_R})$ be the derived Lagois connection between the corresponding IFLs. 
Let $\ld{A}$ and $\rd{A}$ be sets of principals such that $\rid{\hat{\alpha}}[\ld{A}] = \rd{A}$ and
$\ld{\hat{\gamma}}[\rd{A}] = \ld{A}$.
Then (a) $\ld{L_1}$ can be safely declassified to
$\ld{\hat{\gamma}}(\rd{L_2})$ iff
$\rid{\hat{\alpha}}(\ld{L_1})$ can be safely declassified to $\rd{L_2}$, and (b)
$\rd{L_1}$ can be safely declassified to
$\rid{\hat{\alpha}}(\ld{L_2})$ iff
$\ld{\hat{\gamma}}(\rd{L_1})$ can be safely declassified to $\ld{L_2}$.
\end{corollary}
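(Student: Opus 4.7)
The plan is to transport the declassification inequalities across the induced Lagois connection $\widehat{LC}$ (Theorem~\ref{Thm:LC-PH-LC-IFL}) via its two morphisms, relying on monotonicity, the homomorphism of $\rid{\hat{\alpha}}$ and $\ld{\hat{\gamma}}$ over label joins (Equations~(\ref{LagoisLabelDistr1})--(\ref{LagoisLabelDistr2})), and the axioms \textbf{LC1}--\textbf{LC4} lifted to the IFLs as properties~\ref{IFL:prop3}--\ref{IFL:prop6}. Since Lagois connections are transposable, part~(b) mirrors part~(a) with the roles of $\ld{L}, \rid{M}$ and of $\rid{\hat{\alpha}}, \ld{\hat{\gamma}}$ interchanged, so the argument concentrates on~(a).

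First I would observe that the authority labels are fixed by the two morphisms. Since $\rid{\hat{\alpha}}[\ld{A}] = \rd{A}$ by hypothesis, Definition~\ref{Def:LC_IFL} applied to each owner-only policy $\{\ld{p}:\}$ and distributivity over joins yield $\rid{\hat{\alpha}}(\ld{L_A}) = \bigsqcup_{\ld{p}\in\ld{A}}\{\rid{\alpha}(\ld{p}):\} = \bigsqcup_{\rd{q}\in\rd{A}}\{\rd{q}:\} = \rd{L_A}$, and symmetrically $\ld{\hat{\gamma}}(\rd{L_A}) = \ld{L_A}$. This is what allows the authority label to commute past the two morphisms when distributing them over joins.

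For the $(\Leftarrow)$ direction of~(a), which is the easy half, I would assume $\rid{\hat{\alpha}}(\ld{L_1}) \rd{\sqsubseteq} \rd{L_2} \rd{\sqcup} \rd{L_A}$, apply the monotone $\ld{\hat{\gamma}}$, and distribute via Equation~(\ref{LagoisLabelDistr2}) (together with the fixed-point observation above) to obtain $\ld{\hat{\gamma}}(\rid{\hat{\alpha}}(\ld{L_1})) \ld{\sqsubseteq} \ld{\hat{\gamma}}(\rd{L_2}) \ld{\sqcup} \ld{L_A}$. Property~\ref{IFL:prop3} gives $\ld{L_1} \ld{\sqsubseteq} \ld{\hat{\gamma}}(\rid{\hat{\alpha}}(\ld{L_1}))$, and transitivity of $\ld{\sqsubseteq}$ completes the step.

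The $(\Rightarrow)$ direction is the main obstacle. Applying $\rid{\hat{\alpha}}$ to $\ld{L_1} \ld{\sqsubseteq} \ld{\hat{\gamma}}(\rd{L_2}) \ld{\sqcup} \ld{L_A}$ and distributing only yields $\rid{\hat{\alpha}}(\ld{L_1}) \rd{\sqsubseteq} \rid{\hat{\alpha}}(\ld{\hat{\gamma}}(\rd{L_2})) \rd{\sqcup} \rd{L_A}$, and \textbf{LC2} gives the ``wrong'' direction $\rd{L_2} \rd{\sqsubseteq} \rid{\hat{\alpha}}(\ld{\hat{\gamma}}(\rd{L_2}))$, so monotonicity alone produces a strictly \emph{more} restrictive upper bound than required. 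The plan is to close the gap by appealing to the natural interpretation of the cross-domain statement: $\rd{L_2}$ labels information eligible for transfer across the connection, so its owners and readers may be taken to lie in $\rid{\alpha}[\ld{P_L}]$, a condition coherent with the budpoint-closure hypothesis $\rid{\hat{\alpha}}[\ld{A}] = \rd{A}$, $\ld{\hat{\gamma}}[\rd{A}] = \ld{A}$. Under this restriction, Property~\ref{IFL:prop5} together with a policy-wise application of \textbf{LC3} (essentially Proposition~\ref{prop:largest-pre} lifted via Definition~\ref{Def:LC_IFL}) yields $\rid{\hat{\alpha}}(\ld{\hat{\gamma}}(\rd{L_2})) \equiv_{\rd{\sqsubseteq}} \rd{L_2}$, after which transitivity of $\rd{\sqsubseteq}$ closes the direction.
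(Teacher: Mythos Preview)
Your treatment of the $(\Leftarrow)$ direction is exactly the paper's intended argument: the paper's proof is the one-line remark ``straightforward from monotonicity, expansiveness, homomorphic distribution over joins, and the assumed Lagois connection between $\ld{A}$ and $\rd{A}$,'' and your use of $\ld{\hat{\gamma}}$'s monotonicity, Equation~(\ref{LagoisLabelDistr2}), the identity $\ld{\hat{\gamma}}(\rd{L_A}) = \ld{L_A}$, and property~\ref{IFL:prop3} unpacks that remark precisely.

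Where your analysis diverges from the paper is that you have spotted a genuine gap the paper's one-liner glosses over. Your obstruction in the $(\Rightarrow)$ direction is real: from $\ld{L_1} \ld{\sqsubseteq} \ld{\hat{\gamma}}(\rd{L_2}) \ld{\sqcup} \ld{L_A}$ one only obtains $\rid{\hat{\alpha}}(\ld{L_1}) \rd{\sqsubseteq} \rid{\hat{\alpha}}(\ld{\hat{\gamma}}(\rd{L_2})) \rd{\sqcup} \rd{L_A}$, and \textbf{LC2} points the wrong way. A concrete failure is easy to exhibit: take $\ld{P_L}=\{\ld{a}\ld{\preceq}\ld{b}\}$, $\rid{P_R}=\{\rid{c}\rd{\preceq}\rid{d}\rd{\preceq}\rid{e}\}$ with $\rid{\alpha}(\ld{a})=\rid{c}$, $\rid{\alpha}(\ld{b})=\rid{e}$, $\ld{\gamma}(\rid{c})=\ld{a}$, $\ld{\gamma}(\rid{d})=\ld{\gamma}(\rid{e})=\ld{b}$; set $\ld{A}=\{\ld{a}\}$, $\rd{A}=\{\rid{c}\}$, $\ld{L_1}=\{\ld{b}\colon\}$, $\rd{L_2}=\{\rid{d}\colon\}$. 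Then $\ld{L_1}\ld{\sqsubseteq}\ld{\hat{\gamma}}(\rd{L_2})\ld{\sqcup}\ld{L_A}$ holds but $\rid{\hat{\alpha}}(\ld{L_1})=\{\rid{e}\colon\}\not\rd{\sqsubseteq}\{\rid{d}\colon\}\rd{\sqcup}\{\rid{c}\colon\}$, since no owner on the right dominates $\rid{e}$. So the ``iff'' as stated is false without further hypotheses, and the paper's listed ingredients cannot rescue it.

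Your proposed repair --- restricting $\rd{L_2}$ (symmetrically $\ld{L_2}$) to labels whose principals lie in the budpoint image $\rid{\alpha}[\ld{P_L}]$, so that \textbf{LC3}/\textbf{LC4} force $\rid{\hat{\alpha}}(\ld{\hat{\gamma}}(\rd{L_2}))\equiv_{\rd{\sqsubseteq}}\rd{L_2}$ --- is the right fix and is consistent with the paper's own emphasis (e.g.\ in \S\ref{sec:typing}) that transfer variables carry only labels at Lagois-connected classes. You should state it as an explicit hypothesis rather than a ``natural interpretation''; once added, both directions go through by exactly the mechanism you describe, and your proof is then strictly more careful than the paper's.
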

\begin{proof}
Straightforward from monotonicity, expansiveness, homomorphic distribution over joins, and the assumed Lagois connection between $\ld{A}$ and $\rd{A}$.
\end{proof}

\section{Related Work}\label{sec:related}
The only cited use of the notion of Lagois connections \cite{MELTON1994lagoisconnections} in computer science of which we are aware is the work of Huth \cite{huth1993equivalence} in establishing the correctness of programming language implementations.
To our knowledge, our work is the first to propose their use in secure information flow control.

Abstract Interpretation and type systems \cite{cousot1997types-as-ai} have been used in secure flow analyses, \textit{e.g.},  \cite{cortesi2015datacentricsemantics, cortesi2018} and  \cite{zanotti2002sectypingsbyai}, where security types are defined using Galois connections employing, for instance, a standard collecting semantics. 
Their use of two domains, concrete and abstract, with a Galois connection between them, for performing static analyses \textit{within a single domain} should not be confused with our idea of secure connections between independently-defined security lattices of two organisations.

At the systems level, there has been quite some work on SIF in a distributed setting. 
An exemplar is DStar \cite{zeldovich2008-nsdi}, which uses  sets of opaque identifiers to define security classes.
The DStar framework takes a \textit{particular} Decentralized Information Flow Control (DIFC) model \cite{Krohn2007-aa,zeldovich2006-osdi} for operating systems and extends it to a distributed network.
Subset inclusion is the (only) partial order considered in DStar's security lattice. 
Thus it is not clear if DStar can work on general IFC mechanisms such as FlowCaml \cite{Pottier2003-FlowCaml}, which can employ any partial ordering.
Nor can the DStar model express the labels of  JiF \cite{myers1999jflow} or Fabric \cite{liu2017fabric} completely.
DStar allows bidirectional communication between processes $R$ and $S$ only if $L_R \sqsubseteq_{O_R} L_S$ and $L_S \sqsubseteq_{O_S} L_R$, \textit{i.e.}, when there is an order-isomorphism between the labels. 
We have argued that such a requirement is far too restrictive for most practical arrangements for data sharing between organisations.

Fabric \cite{liu2009fabric,liu2017fabric} adds \textit{trust relationships} directly derived from a principals hierarchy to support systems with mutually distrustful nodes and allows dynamic delegation of authority.
It is not immediately clear whether that framework supports modular decomposition and analysis, a topic for future investigation.
Most of the previous DIFC mechanisms \cite{myers1999jflow, zeldovich2006-osdi, Krohn2007-aa, efstathopoulos2005asbestos, roy2009laminar, cheng2012aeolus} including Fabric are susceptible to the vulnerabilities mentioned in the motivating examples of our previous work \cite{BhardwajP2019}.

\section{Conclusions and Future Work}\label{sec:conclusion}

Our work follows Denning's proposal of lattices as the mathematical basis for analysis about secure information flows.   We segue to order-preserving morphisms between lattices as a natural framework for a scalable and modular analysis for secure inter-domain flows.
From the basic secure flow requirements that preserved the autonomy of the individual organisations, we identified the simple and elegant theory of Lagois connections as an appropriate formulation.  
Lagois connections provide us a way to connect the security lattices of two (secure) systems in a manner that does not expose their entire internal structure and allows us to reason only in terms of the interfaced security classes.

We have also illustrated that the theory of Lagois connections provides a versatile framework for supporting the discovery, decomposition, update and maintenance of secure MoUs for exchanging information between administrative domains.
Compositionality of Lagois connections provides the necessary modularity when chaining connections across several domains, while the canonical decomposition results provide methodological rules within which we can re-establish secure connections when the security lattices are updated. 

Moreover, as illustrated here, 
we have shown this framework is also applicable in more intricate information flow control formulations such as decentralised IFC and models with declassification \cite{myers-phd-tr-award}.  
Ongoing work indicates that the framework works smoothly in formulations with data-dependent security classes \cite{Lourenco2015-ug} as well.

Note that the secure Lagois connection between two domains, especially in the decentralised model, introduces new flows from principals in one domain to those in another, and conversely.
It can be argued that the \textit{sound} and \textit{complete} relabelling rule in the Decentralised Label Model
\[
P \vdash L_1 \sqsubseteq L_2 \mbox{~iff~}
(\forall P' \supseteq P) \textbf{X}(L_1, P') \supseteq \textbf{X}(L_2, P')
\]
(where $\textbf{X}(L, P')$ denotes the set of flows permitted by label $L$ given principals hierarchy $P'$), in a sense already accounts for these new flows.
From the viewpoint of one domain, the permitted flows to and within the other domain can be viewed as an extension $P'$ of its principals hierarchy $P$ that now incorporates principals from the other domain.
The Lagois conditions, however, guarantee that when data flow to another domain and back, \textit{no new flows} are created within each individual domain. 
Thus, our connections-based framework provides a modular approach to the static analysis of permitted flows, by partitioning the analysis to flows within each domain and inter-domain flows.
Indeed, the analysis is confined to the \textit{syntactic framework} of the principals hierarchies, and the associated policies and labels.
The proofs of correctness with respect to the semantics do not have to be reworked to consider the slew of new flows.

We believe that it is important to have a framework in which secure flows should be treated in a modular and autonomous manner for the following reason.
The notion of  a principal delegating to others the capacity to act on its behalf
(\textit{e.g.}, in the DIFC model of Myers \cite{myers-phd-tr-award}) does not scale well to large, networked systems since a principal may repose different levels of trust in principals on various hosts in the network. 
For this reason, we believe that frameworks such as Fabric \cite{liu2009fabric, liu2017fabric} may provide more power than mandated by a principle of least privilege.
In general, since a principal rarely vests unqualified trust in another in all contexts and situations, one should confine the influence of the principals possessing delegated authority to only specific domains.
A mathematical framework that can deal with localising trust and delegation of authority in different domains and controlling the manner in which information flow can be secured deserves a deeper study.  
We believe that mathematical theories such as Lagois connections provide the necessary structure for articulating these concepts.  

We conclude by noting that it is surprising that Lagois connections have not seen greater use in computer science and particularly in static analysis.  
Most applications of Galois connections in fact employ \textit{Galois insertions}, which also happen to be special cases of Lagois connections \cite{MELTON1994lagoisconnections}.
While the duality between closure and interior operators in Galois connections provides them an elegance, the quite different symmetries exhibited by Lagois connections also seem to be natural and useful in many settings. 

\bibliography{mybibfile}

\begin{thebibliography}{10}
\expandafter\ifx\csname url\endcsname\relax
  \def\url#1{\texttt{#1}}\fi
\expandafter\ifx\csname urlprefix\endcsname\relax\def\urlprefix{URL }\fi
\expandafter\ifx\csname href\endcsname\relax
  \def\href#1#2{#2} \def\path#1{#1}\fi

\bibitem{Denning76}
D.~E. Denning, A {L}attice {M}odel of {S}ecure {I}nformation {F}low, Commun.
  {ACM} 19~(5) (1976) 236--243.

\bibitem{sabelfeld2003language}
A.~Sabelfeld, A.~C. Myers, Language-based information-flow security, {IEEE}
  Journal on Selected Areas in Communications 21~(1) (2003) 5--19.

\bibitem{myers1999jflow}
A.~C. Myers, Jflow: Practical mostly-static information flow control, in:
  {POPL} '99, Proceedings of the 26th {ACM} {SIGPLAN-SIGACT} Symposium on
  Principles of Programming Languages, San Antonio, TX, USA, January 20-22,
  1999, 1999, pp. 228--241.

\bibitem{Pottier2003-FlowCaml}
F.~Pottier, V.~Simonet, Information {F}low {I}nference for {ML}, ACM Trans.
  Program. Lang. Syst. 25~(1) (2003) 117--158.

\bibitem{liu2017fabric}
J.~Liu, O.~Arden, M.~D. George, A.~C. Myers, Fabric: {B}uilding open
  distributed systems securely by construction, Journal of Computer Security
  25~(4-5) (2017) 367--426.

\bibitem{roy2009laminar}
I.~Roy, D.~E. Porter, M.~D. Bond, K.~S. McKinley, E.~Witchel, Laminar:
  practical fine-grained decentralized information flow control, in:
  Proceedings of the 2009 {ACM} {SIGPLAN} Conference on Programming Language
  Design and Implementation, {PLDI} 2009, Dublin, Ireland, June 15-21, 2009,
  2009, pp. 63--74.

\bibitem{Lourenco2015-ug}
L.~Louren{\c{c}}o, L.~Caires, Dependent information flow types, in: Proceedings
  of the 42nd Annual {ACM} {SIGPLAN-SIGACT} Symposium on Principles of
  Programming Languages, {POPL} 2015, Mumbai, India, January 15-17, 2015, 2015,
  pp. 317--328.

\bibitem{Krohn2007-aa}
M.~N. Krohn, A.~Yip, M.~Z. Brodsky, N.~Cliffer, M.~F. Kaashoek, E.~Kohler,
  R.~T. Morris, Information flow control for standard {OS} abstractions, in:
  Proceedings of the 21st {ACM} Symposium on Operating Systems Principles 2007,
  {SOSP} 2007, Stevenson, Washington, USA, October 14-17, 2007, 2007, pp.
  321--334.

\bibitem{zeldovich2006-osdi}
N.~Zeldovich, S.~Boyd{-}Wickizer, E.~Kohler, D.~Mazi{\`{e}}res, Making
  {I}nformation {F}low {E}xplicit in {H}istar, in: 7th Symposium on Operating
  Systems Design and Implementation {(OSDI}'06), November 6-8, Seattle, WA,
  {USA}, 2006, pp. 263--278.

\bibitem{cheng2012aeolus}
W.~Cheng, D.~R.~K. Ports, D.~A. Schultz, V.~Popic, A.~Blankstein, J.~A.
  Cowling, D.~Curtis, L.~Shrira, B.~Liskov, Abstractions for {U}sable
  {I}nformation {F}low {C}ontrol in {A}eolus, in: 2012 {USENIX} Annual
  Technical Conference, Boston, MA, USA, June 13-15, 2012, 2012, pp. 139--151.

\bibitem{efstathopoulos2005asbestos}
P.~Efstathopoulos, M.~N. Krohn, S.~Vandebogart, C.~Frey, D.~Ziegler, E.~Kohler,
  D.~Mazi{\`{e}}res, M.~F. Kaashoek, R.~T. Morris, Labels and event processes
  in the {A}sbestos operating system, in: Proceedings of the 20th {ACM}
  Symposium on Operating Systems Principles 2005, {SOSP} 2005, Brighton, UK,
  October 23-26, 2005, 2005, pp. 17--30.

\bibitem{schultz2013ifdb}
D.~A. Schultz, B.~Liskov, {IFDB:} {D}ecentralized {I}nformation {F}low
  {C}ontrol for {D}atabases, in: Eighth Eurosys Conference 2013, EuroSys '13,
  Prague, Czech Republic, April 14-17, 2013, 2013, pp. 43--56.

\bibitem{ferraiuolo2018hyperflow}
A.~Ferraiuolo, M.~Zhao, A.~C. Myers, G.~E. Suh, Hyperflow: {A} {P}rocessor
  {A}rchitecture for {N}onmalleable, {T}iming-{S}afe {I}nformation {F}low
  {S}ecurity, in: Proceedings of the 2018 {ACM} {SIGSAC} Conference on Computer
  and Communications Security, {CCS} 2018, Toronto, ON, Canada, October 15-19,
  2018, 2018, pp. 1583--1600.

\bibitem{zhang2015secVerilog-asplos}
D.~Zhang, Y.~Wang, G.~E. Suh, A.~C. Myers, A {H}ardware {D}esign {L}anguage for
  {T}iming-{S}ensitive {I}nformation-{F}low {S}ecurity, in: Proceedings of the
  Twentieth International Conference on Architectural Support for Programming
  Languages and Operating Systems, {ASPLOS} '15, Istanbul, Turkey, March 14-18,
  2015, 2015, pp. 503--516.

\bibitem{BhardwajP2019}
C.~Bhardwaj, S.~Prasad, Only connect, securely, in: Formal Techniques for
  Distributed Objects, Components, and Systems - 39th {IFIP} {WG} 6.1
  International Conference, {FORTE} 2019, Held as Part of the 14th
  International Federated Conference on Distributed Computing Techniques,
  DisCoTec 2019, Kongens Lyngby, Denmark, June 17-21, 2019, Proceedings, 2019,
  pp. 75--92.
\newblock \href {http://dx.doi.org/10.1007/978-3-030-21759-4\_5}
  {\path{doi:10.1007/978-3-030-21759-4\_5}}.

\bibitem{MELTON1994lagoisconnections}
A.~Melton, B.~S.~W. Schr{\"{o}}der, G.~E. Strecker, Lagois {C}onnections - a
  {C}ounterpart to {G}alois {C}onnections, Theor. Comput. Sci. 136~(1) (1994)
  79--107.

\bibitem{DBLP:journals/jcs/VolpanoIS96}
D.~M. Volpano, C.~E. Irvine, G.~Smith, A {S}ound {T}ype {S}ystem for {S}ecure
  {F}low {A}nalysis, Journal of Computer Security 4~(2/3) (1996) 167--188.

\bibitem{myers-phd-tr-award}
A.~C. Myers, \href{http://hdl.handle.net/1721.1/16717}{Mostly-static
  decentralized information flow control}, Ph.D. thesis, Massachusetts
  Institute of Technology, Cambridge, MA, {USA} (1999).
\newline\urlprefix\url{http://hdl.handle.net/1721.1/16717}

\bibitem{DBLP:conf/sp/GoguenM82a}
J.~A. Goguen, J.~Meseguer, Security {P}olicies and {S}ecurity {M}odels, in:
  1982 {IEEE} Symposium on Security and Privacy, Oakland, CA, USA, April 26-28,
  1982, 1982, pp. 11--20.

\bibitem{Myers1997-ss}
A.~C. Myers, B.~Liskov, A {D}ecentralized {M}odel for {I}nformation {F}low
  {C}ontrol, in: Proceedings of the Sixteenth {ACM} Symposium on Operating
  System Principles, {SOSP} 1997, St. Malo, France, October 5-8, 1997, 1997,
  pp. 129--142.

\bibitem{munro2020space}
J.~I. Munro, B.~Sandlund, C.~Sinnamon, Space-efficient data structures for
  lattices, in: 17th Scandinavian Symposium and Workshops on Algorithm Theory,
  2020.

\bibitem{melton1991connections}
A.~Melton, B.~S.~W. Schr{\"{o}}der, G.~E. Strecker, Connections, in:
  Mathematical Foundations of Programming Semantics, 7th International
  Conference, Pittsburgh, PA, USA, March 25-28, 1991, Proceedings, 1991, pp.
  492--506.

\bibitem{liu2009fabric}
J.~Liu, M.~D. George, K.~Vikram, X.~Qi, L.~Waye, A.~C. Myers, Fabric: a
  platform for secure distributed computation and storage, in: Proceedings of
  the 22nd {ACM} Symposium on Operating Systems Principles 2009, {SOSP} 2009,
  Big Sky, Montana, USA, October 11-14, 2009, 2009, pp. 321--334.

\bibitem{huth1993equivalence}
M.~Huth, On the equivalence of state-transition systems, in: Theory and Formal
  Methods 1993, Springer, 1993, pp. 171--182.

\bibitem{cousot1997types-as-ai}
P.~Cousot, Types as {A}bstract {I}nterpretations, in: Conference Record of
  POPL'97: The 24th {ACM} {SIGPLAN-SIGACT} Symposium on Principles of
  Programming Languages, Papers Presented at the Symposium, Paris, France,
  15-17 January 1997, 1997, pp. 316--331.

\bibitem{cortesi2015datacentricsemantics}
A.~Cortesi, P.~Ferrara, M.~Pistoia, O.~Tripp, Datacentric {S}emantics for
  {V}erification of {P}rivacy {P}olicy {C}ompliance by {M}obile {A}pplications,
  in: Verification, Model Checking, and Abstract Interpretation - 16th
  International Conference, {VMCAI} 2015, Mumbai, India, January 12-14, 2015.
  Proceedings, 2015, pp. 61--79.

\bibitem{cortesi2018}
A.~Cortesi, P.~Ferrara, R.~Halder, M.~Zanioli, Combining symbolic and numerical
  domains for information leakage analysis, Trans. Computational Science 31
  (2018) 98--135.

\bibitem{zanotti2002sectypingsbyai}
M.~Zanotti, Security {T}ypings by {A}bstract {I}nterpretation, in: Static
  Analysis, 9th International Symposium, {SAS} 2002, Madrid, Spain, September
  17-20, 2002, Proceedings, 2002, pp. 360--375.

\bibitem{zeldovich2008-nsdi}
N.~Zeldovich, S.~Boyd{-}Wickizer, D.~Mazi{\`{e}}res, Securing {D}istributed
  {S}ystems with {I}nformation {F}low {C}ontrol, in: 5th {USENIX} Symposium on
  Networked Systems Design {\&} Implementation, {NSDI} 2008, April 16-18, 2008,
  San Francisco, CA, USA, Proceedings, 2008, pp. 293--308.

\end{thebibliography}

\end{document}